\def\1{\bm{1}}
\DeclareMathAlphabet{\mathsfit}{\encodingdefault}{\sfdefault}{m}{sl}
\SetMathAlphabet{\mathsfit}{bold}{\encodingdefault}{\sfdefault}{bx}{n}
\def\mcB{{\mathcal{B}}}
\def\mcC{{\mathcal{C}}}
\def\mcG{{\mathcal{G}}}
\def\mcH{{\mathcal{H}}}
\def\mcP{{\mathcal{P}}}
\def\mcQ{{\mathcal{Q}}}
\def\mbE{{\mathbb{E}}}
\def\mbG{{\mathbb{G}}}
\def\mbR{{\mathbb{R}}}
\def\mbZ{{\mathbb{Z}}}
\newcommand{\netcost}{\phi}
\newcommand{\freeofcell}[2]{{#1}_{#2}^F}
\newcommand{\cell}{\Box}
\newcommand{\excess}{\mathrm{exc}}
\newcommand{\distancetocell}[2]{\mathrm{d}(#1, #2)}
\newcommand{\distance}[2]{\mathrm{d}(#1, #2)}
\newcommand{\margin}[2]{\mathcal{S}_{#1}^{#2}}
\newcommand{\innermargin}[2]{\mathcal{S}^{\downarrow}_{#1}(#2)}
\newcommand{\outermargin}[2]{\mathcal{S}^{\uparrow}_{#1}(#2)}
\newcommand{\innermarginnum}[2]{n_{\innermargin{\cell}{\delta}}}
\newcommand{\outermarginnum}[2]{n_{\outermargin{\cell}{\delta}}}
\newcommand{\ignore}[1]{}
\newcommand{\servers}{\eta}
\newcommand{\requests}{\sigma}
\newcommand{\reverse}{\textsc{ReverseHungarianSearch}}
\newcommand{\prob}[1]{\mathrm{Pr}\left[#1\right]}
\newcommand{\plan}{\Sigma}
\newcommand{\divider}[1]{\Gamma_{#1}}
\newcommand{\pq}{\mathrm{PQ}}
\newcommand{\current}{\mcC}
\newcommand{\ymax}{\varphi}
\newcommand{\diam}{\mathrm{D}}
\newtheorem{lemma}{Lemma}[section]
\newtheorem{theorem}[lemma]{Theorem}
\newtheorem{corollary}[lemma]{Corollary}
\newcommand{\seqcost}{\text{\textcent}}
\newcommand*\linenomathpatch[1]{%
  \cspreto{#1}{\linenomath}%
  \cspreto{#1*}{\linenomath}%
  \csappto{end#1}{\endlinenomath}%
  \csappto{end#1*}{\endlinenomath}%
}
\newcommand\myparagraph[1]{\paragraph{#1}}
\title{\Large \bf Geometric Bipartite Matching Based Exact Algorithms for Server Problems}
\author{Sharath Raghvendra\\{\small North Carolina State University}
\and Pouyan Shirzadian\\{\small Virginia Tech}
\and Rachita Sowle\\{\small Virginia Commonwealth University}}
\date{}
\begin{document}

\maketitle

\begin{abstract}
For any given metric space, obtaining an offline optimal solution to the classical $k$-server problem can be reduced to solving a minimum-cost partial bipartite matching between two point sets $A$ and $B$ within that metric space. 

For $d$-dimensional $\ell_p$ metric space, we present an $\tilde{O}(\min\{nk, n^{2-\frac{1}{2d+1}}\log \Delta\}\cdot \Phi(n))$ time algorithm for solving this instance of minimum-cost partial bipartite matching; here, $\Delta$ represents the spread of the point set, and $\Phi(n)$ is the query/update time of a $d$-dimensional dynamic weighted nearest neighbor data structure. Our algorithm improves upon prior algorithms that require at least $\Omega(nk\Phi(n))$ time. The design of minimum-cost (partial) bipartite matching algorithms that make sub-quadratic queries to a weighted nearest-neighbor data structure, even for bounded spread instances, is a major open problem in computational geometry. We resolve this problem at least for the instances that are generated by the offline version of the $k$-server problem. 

Our algorithm employs a hierarchical partitioning approach, dividing the points of $A\cup B$ into rectangles. It maintains a minimum-cost partial matching where any point $b \in B$ is either matched to a point $a\in A$ or to the boundary of the rectangle it is located in. The algorithm involves iteratively merging pairs of rectangles by erasing the shared boundary between them and recomputing the minimum-cost partial matching. This continues until all boundaries are erased and we obtain the desired minimum-cost partial matching of $A$ and $B$. We exploit geometry in our analysis to show that each point participates in only $\tilde{O}(n^{1-\frac{1}{2d+1}}\log \Delta)$ number of augmenting paths, leading to a total execution time of $\tilde{O}(n^{2-\frac{1}{2d+1}}\Phi(n)\log \Delta)$.

We also show that, for the $\ell_1$ norm and $d$ dimensions, any algorithm that can solve instances of the offline $n$-server problem with an exponential spread in $T(n)$ time can be used to compute minimum-cost bipartite matching in a complete graph defined on two $(d-1)$-dimensional point sets under the $\ell_1$ norm within $T(n)$ time. This suggests that removing spread from the execution time of our algorithm may be difficult as it immediately results in a sub-quadratic algorithm for bipartite matching under the $\ell_1$ norm.  
\end{abstract}
\section{Introduction}\label{sec:introduction}

This paper considers two classical optimization problems: the \emph{offline $k$-server} problem and the \emph{minimum-cost bipartite matching} problem in geometric settings. 

\myparagraph{Offline $k$-server problem and its variant.} Consider a sequence of requests $\varsigma = \langle r_1,\ldots, r_m \rangle$ in a metric space equipped with the cost function $\distance{\cdot}{\cdot}$. The cost of a single server \emph{servicing} the requests in $\varsigma$ is the sum of the distances between every consecutive pair of points in the sequence, i.e.,  $\seqcost(\varsigma)=\sum_{i=1}^{m-1} \distance{r_i}{r_{i+1}}$. Given a sequence $\requests = \langle r_1,\ldots r_n\rangle$ of $n$ requests and an integer $1 \le k \le n$, the \emph{$k$-sequence partitioning problem} (or simply the $k$-SP problem) requires partitioning the requests in $\requests$ into $k$ subsequences $\varsigma_1, \ldots, \varsigma_k$ so that $\sum_{i=1}^{k}\seqcost(\varsigma_i)$ is minimized. The optimal solution for the $k$-sequence partitioning problem is the cheapest way for $k$ servers to service all of the requests in $\requests$. 
We also consider a variant: the \emph{$k$-sequence partitioning with initial locations problem} (or the $k$-SPI problem). Here,
 in addition to the requests $\requests$, we are also given the initial locations $\eta=\langle s_1,\ldots, s_k\rangle$ for the $k$ servers. The objective of this problem is to partition $\requests' = \eta \requests$ into $k$ subsequences $\varsigma_1,\ldots, \varsigma_k$ so that the initial location $s_j$ of server $j$ appears in the first element of the subsequence $\varsigma_j$ and the cost $\sum_{i=1}^{k}\seqcost(\varsigma_k)$ is minimized. 
The optimal solution to the $k$-SPI problem is the offline optimal solution to the well-known $k$-server problem. 

We assume that the requests in $\requests$ and the initial locations in $\eta$ are scaled and translated so that they are contained inside the unit hypercube $[0,1]^d$. Such scaling and translation do not impact the optimal solutions for the $k$-SP and $k$-SPI problems. We define the diameter $\mathrm{Diam}(\requests)$ to be the largest distance between any two request locations in $\requests$ and the \emph{closest pair distance}, denoted by $\mathrm{CP}(\requests)$, to be the smallest non-zero distance between any two requests in $\requests$. The \emph{spread}, denoted by $\Delta$, is the ratio of the diameter to the closest-pair distance, i.e., $\Delta=\mathrm{Diam}(\requests)/\mathrm{CP}(\requests)$. 

\myparagraph{Minimum-cost bipartite matching.} Consider a weighted bipartite graph $G(A\cup B, E\subseteq A\times B)$, where each edge between $u$ and $v$ is assigned a real-valued cost. Let $n:=\min\{|A|, |B|\}$. A \emph{matching} of size $t\le n$, or simply a \emph{$t$-matching}, is a set of $t$ edges in $G$ that are vertex-disjoint. The \emph{cost} of any matching $M$ is the sum of the costs of its edges. Given a parameter $t \le n$, the \emph{minimum-cost bipartite $t$-matching} problem seeks to find a $t$-matching with a minimum cost. 
When $|A|=|B|=n$, a matching of size $n$ is called a \emph{perfect matching}. When $A$ and $B$ are $d$-dimensional point sets and the cost of any edge between $a \in A$ and $b \in B$ is the $\ell_p$ distance $\|a-b\|_p$, the problem of finding the minimum-cost bipartite $t$-matching is also called the (partial) \emph{geometric bipartite matching problem}.


\myparagraph{Relating the two problems.}
\citet{chrobak1991new} established a reduction from the minimum-cost bipartite $t$-matching problem to the $k$-SPI problem.
\begin{lemma}\cite[Theorem 11]{chrobak1991new}
\label{lem:spimatch}
Any algorithm that computes an optimal solution to the $n$-SPI in an arbitrary metric space in $T(n)$ time can also find, in $T(n)+O(n^2)$ time, a minimum-cost perfect matching in any complete bipartite graph with real-valued costs.
\end{lemma}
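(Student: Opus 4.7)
The plan is a direct reduction. Given a complete bipartite graph on $A \cup B$ with $|A| = |B| = n$ and real-valued costs $c(\cdot, \cdot)$, I would first shift costs to be nonnegative by setting $C = \max_{a,b} |c(a, b)|$ and $c'(a, b) = c(a, b) + C$. Every perfect matching uses exactly $n$ edges, so its cost under $c'$ exceeds its cost under $c$ by the known constant $nC$, and the identity of the optimal matching is unchanged. I would then turn $A \cup B$ into a metric space by picking $M > 2nC$ (say $M = 2nC + 1$) and setting
\[
\dist{a_i}{b_j} = c'(a_i, b_j) + M, \qquad \dist{a_i}{a_{i'}} = \dist{b_j}{b_{j'}} = 2M
\]
for $i \neq i'$ and $j \neq j'$, with self-distances zero. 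Since every cross-side distance lies in $[M, M + 2C]$ and every same-side distance is exactly $2M$, the triangle inequality reduces to a routine case analysis.

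Next, I would instantiate the $n$-SPI problem with initial server locations $\eta = \langle a_1, \ldots, a_n\rangle$ and request sequence $\requests = \langle b_1, \ldots, b_n\rangle$ in this metric, and run the assumed $T(n)$-time algorithm on the resulting $O(n^2)$-size distance matrix. The central claim is that any optimal partition assigns exactly one request to each server. If server $i$ services $t_i \geq 1$ requests, its subsequence cost is at least $M + 2M(t_i - 1) = (2t_i - 1) M$, because the initial hop from $a_i$ to a $b$-point contributes at least $M$ and every subsequent $b$-to-$b$ hop contributes exactly $2M$. Summing over the $k$ active servers, with $\sum t_i = n$, yields a global lower bound of $(2n - k) M$. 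When $k = n$, this lower bound is $nM$, and the cost equals $nM$ plus the cost of the induced one-to-one assignment under $c'$ (which is at most $2nC$). When $k < n$, the lower bound $(n + 1)M$ already exceeds $nM + 2nC$, so such a partition is strictly suboptimal.

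Subtracting $nM$ from the value returned by the $n$-SPI solver gives the minimum matching cost under $c'$, and subtracting a further $nC$ yields the answer under $c$; the matching itself is read off by pairing each $a_i$ with the unique $b_j$ that the solver places in its subsequence. Constructing the distance matrix, calling the solver, and decoding the matching together take $T(n) + O(n^2)$ time, matching the lemma. The only delicate point is choosing $M$ large enough to simultaneously enforce the triangle inequality and to make every non one-to-one partition strictly more expensive than every one-to-one partition; the choice $M = 2nC + 1$ handles both requirements.
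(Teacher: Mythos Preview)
The paper does not give its own proof of this lemma; it is quoted from \cite{chrobak1991new} as a known result. Your reduction is correct and is essentially the classical argument: inflate every cross-side cost by a large $M$ and set all same-side distances to $2M$, so that any schedule in which some server handles two or more requests picks up an extra $2M$ per $b$-to-$b$ hop and is strictly dominated by the one-request-per-server schedule. Your verification of the triangle inequality and the cost comparison (with $M > 2nC$) are both sound, and the $T(n)+O(n^2)$ running time follows immediately from building the $O(n^2)$-size distance table.

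For context, the paper does carry out a closely related reduction in Appendix~A.2 for the geometric version (Lemma~\ref{lemma:kSPI_reduction}): there, instead of manufacturing an abstract metric, it appends an extra coordinate with exponentially spaced values $3^{n-i+1}\diam$ to the points of $A$ to achieve the same ``each server serves exactly one request'' structure under $\ell_1$. Your construction is the natural abstract-metric analogue of that idea, trading the geometric embedding (which forces exponential spread) for a direct metric definition.
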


We strengthen the connection between the two problems by showing a reduction in the reverse direction, i.e., we reduce the $k$-SP (resp. $k$-SPI) problem to the minimum-cost bipartite $t$-matching problem.
Given an input sequence $\requests$ of requests to the $k$-SP problem, we construct a bipartite graph $\mathcal{G}_{\requests}$ with a vertex set $A\cup B$ and a set of edges $\mathcal{E}$ as follows. 
{\it Vertex Set:} For each request $r_i$, we create a vertex $b_i$ (resp. $a_i$) in $B$ (resp. $A$) and designate it as the \emph{entry} (resp. \emph{exit}) gate for request $r_i$. 
{\it Edge Set:} The exit gate $a_i$ of request $r_i$ is connected to the entry gate $b_j$ of every subsequent request $r_j$ with $j > i$ with an edge. The cost of this edge is $\distance{a_i}{b_j}=\|r_i-r_j\|_p$. 
It is easy to see that a minimum-cost $(n-k)$-matching $M$ in $\mathcal{G}_\requests$ can be used to find an optimal solution to the $k$-SP problem. See Figure~\ref{fig:problem}. 

For the $k$-SPI problem, for each server $j$, we add a vertex $a^j$ at the initial location $s_j$ to $A$ and connect $a^j$ to the entry gate $b_i$ of every request $r_i$ in $\requests$. The cost of this edge is $\distance{a^j}{b_i}=\|s_j-b_i\|_p$. A minimum-cost $n$-matching in this graph can be converted to an optimal solution for the $k$-SPI problem. The formal proof of correctness for both these reductions is presented in Appendix~\ref{sec:background}.

A different reduction from $k$-SP and $k$-SPI problems to the minimum-cost flow problem has been presented in previous works~\cite{chrobak1991new, rudec2009fast, rudec2013new}, leading to the development of $O(n^2k)$ time algorithm. However, unlike our reduction, their approach generates instances that include edges whose costs are $-\infty$ and fails to maintain the metric properties of costs.
\begin{figure}
    \centering
    \includegraphics[width=0.9\textwidth]{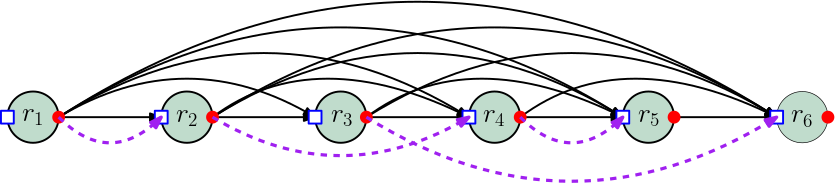}
    \caption{The graph $\mcG_\requests$ constructed for $\requests=\langle r_1, r_2, \ldots, r_6\rangle$. The vertex set $A$ (red disks) and $B$ (blue squares) represent the exit and entry gates of each request, and the purple dashed lines show an $(n-2)$-partial matching on $\mcG_\requests$ representing a $2$-partitioning $\langle r_1, r_2, r_4, r_5\rangle$ and $\langle r_3, r_6\rangle$.}
    \label{fig:problem}
\end{figure}


\begin{lemma}\label{lemma:matching-reduction} 
An optimal solution for an instance $\requests$ (resp. $\requests'$) of $k$-SP (resp. $k$-SPI)  problem can be found by computing a minimum-cost matching of size $n-k$ (resp. $n$) in $\mathcal{G}_{\requests}$ (resp. $\mathcal{G}_{\requests'}$).
\end{lemma}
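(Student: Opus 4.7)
The plan is to prove both statements by establishing a cost-preserving bijection between feasible $k$-partitions of the request sequence and matchings of the prescribed size in the constructed bipartite graph. Once this bijection is in place, a minimum-cost matching immediately translates to an optimal partition and vice versa. I describe the $k$-SP case in detail and then indicate the small adjustments needed for $k$-SPI.

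In the forward direction for $k$-SP, given a $k$-partition $\varsigma_1,\ldots,\varsigma_k$ of $\requests$, I construct a matching $M$ in $\mathcal{G}_\requests$ as follows: for each subsequence $\varsigma_j = \langle r_{i_1}, r_{i_2}, \ldots, r_{i_{m_j}}\rangle$ with $i_1 < i_2 < \cdots < i_{m_j}$, include the edges $(a_{i_\ell}, b_{i_{\ell+1}})$ for $\ell = 1,\ldots,m_j-1$. Each such edge belongs to $\mathcal{G}_\requests$ because its indices are strictly increasing; each $a_i$ and each $b_j$ appears in at most one of these edges, so $M$ is a valid matching. The total number of edges over all subsequences is $\sum_{j=1}^{k}(m_j-1)=n-k$, and the cost of $M$ equals $\sum_{j=1}^{k}\seqcost(\varsigma_j)$ by construction of the edge weights.

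In the reverse direction, given a matching $M$ of size $n-k$, I orient every edge $(a_i, b_j)\in M$ as a directed arc $r_i \to r_j$ on the requests. Since $M$ is a matching, each request has in-degree at most $1$ and out-degree at most $1$; because every arc respects the index ordering ($j>i$), the arc graph is acyclic. The arcs therefore decompose the requests into vertex-disjoint chains, each a chronologically ordered subsequence of $\requests$. The number of chains equals the number of in-degree-$0$ sources, namely $n-|M|=k$, and reading off the chains yields a $k$-partition whose total cost equals the sum of $\|r_i-r_j\|_p$ over matching edges, i.e., the cost of $M$. Cost preservation of the bijection then implies that a minimum-cost $(n-k)$-matching corresponds to an optimal $k$-partition of $\requests$.

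The $k$-SPI case is analogous once each server vertex $a^j$ is viewed as the potential root of a chain that begins with $s_j$ followed by the requests it serves. In the forward direction, a valid partition of $\requests'=\servers\requests$ whose $j$th subsequence is $\langle s_j, r_{i_1},\ldots, r_{i_{m_j}}\rangle$ contributes the $m_j$ edges $(a^j, b_{i_1})$ and $(a_{i_\ell}, b_{i_{\ell+1}})$; summing over $j$, the induced matching has size $\sum_j m_j = n$ and saturates $B$. In the reverse direction, in any $n$-matching every $b_i$ is matched, so every request has in-degree exactly $1$, which forces every chain to be rooted at some $s_j$; the resulting $k$ chains form a valid partition, where any unmatched $a^j$ yields the degenerate subsequence $\langle s_j\rangle$. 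The main point requiring care is precisely this last edge case, namely verifying that an unmatched server vertex legitimately corresponds to a zero-cost singleton subsequence and that the $k$-chain decomposition exhausts every request; once this is checked, cost preservation is immediate and the bijection completes the proof of both reductions.
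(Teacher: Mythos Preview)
Your proof is correct and follows essentially the same approach as the paper: a cost-preserving bijection between $k$-partitions and matchings of the prescribed size, constructed in both directions exactly as you describe (consecutive-pair edges in one direction, following matching edges from free entry gates or server vertices in the other). The paper's presentation in Lemmas~\ref{lemma:relationToMatching} and~\ref{lemma:relationToMatchingServers} differs only in wording, not in substance, and your directed-arc formulation with in/out-degree reasoning is a clean way to phrase the same argument.
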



Finally, we extend the reduction~\citet{chrobak1991new} to geometric settings and provide a reduction from the geometric minimum-cost matching problem under the $\ell_1$-norm to the geometric version of the $n$-SPI problem. 
This reduction, however, creates an instance of the $n$-SPI problem with a spread of $3^n$ (See Appendix~\ref{sec:reduction} for details). Lemma~\ref{lemma:kSPI_reduction} follows directly from this reduction.  



\begin{lemma}\label{lemma:kSPI_reduction}
Any algorithm that can solve the $d$-dimensional $n$-SPI problem under the $\ell_1$ costs in $T(n)$ time can also be used to solve an instance of minimum-cost bipartite matching under the $\ell_1$ costs on a complete graph in $T(n)$ time.
\end{lemma}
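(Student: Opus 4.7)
The plan is to extend the classical reduction of Chrobak and Larmore (cf.\ Lemma~\ref{lem:spimatch}, which goes in the opposite direction, from $n$-SPI to matching) into the geometric $\ell_1$ setting. Given a minimum-cost bipartite matching instance on two $(d-1)$-dimensional point sets $A=\{a_1,\ldots,a_n\}$ and $B=\{b_1,\ldots,b_n\}$ under the $\ell_1$ norm, I would construct a $d$-dimensional $n$-SPI instance (also under $\ell_1$) whose optimum encodes the matching optimum plus a known additive constant. The construction uses the new $d$-th coordinate as a ``schedule'' axis: each server's initial location is lifted to $(a_i,0)\in\mathbb{R}^d$, each $b_i$ is encoded as a ``real'' request at $(b_i,z_i)\in\mathbb{R}^d$, and a carefully chosen collection of auxiliary ``barrier'' requests is inserted into the sequence at positions whose heights in the $d$-th coordinate are drawn from a geometric progression $3^1, 3^2, \ldots, 3^n$.

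The key structural property is the coordinate-wise decomposition of the $\ell_1$ norm: every $n$-SPI solution decomposes into a ``matching component'' contributed by the first $d-1$ coordinates and a ``schedule component'' contributed by the $d$-th coordinate. The heights and barrier placements are designed so that in any ``canonical'' solution -- one in which each of the $n$ servers services exactly one real request -- the schedule component is a constant independent of the assignment, while the matching component coincides exactly with the cost of a bipartite matching between $A$ and $B$. Consequently, the optimum of the $n$-SPI instance, provided it is canonical, differs from the optimum matching cost by this known constant, and reading off the servicing assignment recovers an optimum matching. Since the request heights span $\{3^1,\ldots,3^n\}$ while the remaining coordinates lie in $[0,1]^{d-1}$, the diameter of the constructed instance is $\Theta(3^n)$ while the minimum positive pairwise distance is $\Omega(1)$ (realized, for instance, by pairs of lifted servers at height $0$), yielding spread $3^n$ as claimed.

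The main obstacle -- and the reason the spread blows up to $3^n$ -- is to certify that every optimal $n$-SPI solution is canonical. Without canonicality, a non-canonical solution (e.g., a single server servicing multiple real requests to ``amortize'' its entry cost in the schedule axis while leaving another server idle, or subtler hybrids that shift barriers across servers) could beat the canonical cost and break the correspondence with matchings. The role of the barrier requests, together with the geometric ratio $3$ between consecutive heights, is to ensure that every deviation forces the surviving servers or the barriers to traverse schedule cost of magnitude at least $\Omega(3^k)$ for some index $k\geq 1$. Because $A, B \subset [0,1]^{d-1}$ bounds every matching-side edge cost by $O(d)$, this schedule overhead dominates any matching-side savings, so the canonical cost cannot be improved. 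Establishing this formally calls for a case analysis exhausting all ways a solution can mis-partition barriers and real requests, combined with a telescoping bound on the $\ell_1$ cost along the $d$-th coordinate; the ratio $3$ is chosen tightly so as to simultaneously force this strict penalty and keep the overall diameter at $\Theta(3^n)$. Once canonicality is established, since the construction and the decoding each run in $O(n)$ time, the matching problem is solved in $T(n)+O(n)$ time, which is $T(n)$ up to lower-order terms.
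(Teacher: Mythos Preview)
Your core intuition---lift into one extra coordinate, use heights from a geometric progression with ratio $3$, and exploit the coordinate-wise additivity of $\ell_1$ so that the ``schedule'' component is a fixed constant while the remaining coordinates recover the matching cost---is exactly what the paper does. However, the paper's construction is strictly simpler than yours: it uses \emph{no} barrier or auxiliary requests at all. The $n$ ``real'' requests themselves are placed at heights $3^{n-i+1}\diam$ (where $\diam$ is the diameter of $A\cup B$ in the first $d-1$ coordinates), servers sit at height $0$, and that is the entire instance: $n$ servers, $n$ requests. Canonicality then follows from the one-line inequality
\[
\|r_i-s_l\|_1 \;\le\; 3^{n-i+1}\diam+\diam \;<\; (3^{n-j+1}-3^{n-i+1})\diam \;\le\; \|r_i-r_j\|_1 \qquad (j<i),
\]
so serving $r_i$ directly from \emph{any} idle server is cheaper than continuing from \emph{any} earlier request; hence in an optimum each server serves exactly one request, and the assignment is a matching. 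No ``case analysis exhausting all ways a solution can mis-partition barriers'' is required.

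Your proposal, by contrast, leaves the heights $z_i$ of the real requests and the precise role of the barrier requests unspecified, and defers the crux (canonicality) to an unwritten case analysis. Since barriers are not needed, this extra layer only creates work and risk: with more than $n$ requests for $n$ servers, you must also argue how barriers get distributed, which is exactly the complication the paper avoids. I would drop the barriers entirely, set the request heights to the geometric progression directly, and replace the promised case analysis with the single inequality above.
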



\myparagraph{Related work.}
For a graph with $m$ edges and $n$ vertices, the classical Hungarian algorithm computes a minimum-cost $t$-matching for all values of $0\le t\le n$~\cite{kuhn1955hungarian, phatak2022computing}. The algorithm begins with an empty matching $M$, and in each iteration $i$, updates a minimum-cost $(i-1)$-matching to a minimum-cost $i$-matching in $O(m+n\log n)$ time by finding a \emph{minimum net-cost augmenting path}, i.e., an augmenting path that increases the matching cost by the smallest value. The overall execution time of the Hungarian algorithm is $O(nm+n^2\log n)$, or $O(n^3)$ when $m = \Theta(n^2)$.  
Despite substantial efforts, this remains the most efficient algorithm for the problem. Notable exceptions include specialized cases, such as graphs with small vertex separators~\cite{lipton1980applications} as well as graphs where the edge weights are integers~\cite{chen2022maximum,gt_sjc89,gt_jacm91}. 

In geometric settings, Vaidya showed that each iteration of the Hungarian algorithm can be implemented in $\tilde{O}(n\Phi(n))$ time, where $\Phi(n)$ represents the query/update time of a dynamic weighted nearest neighbor (DWNN) data structure with respect to the edge costs. Thus, the minimum-cost $t$-matching can be computed in $O(n^2 \Phi(n))$ time, which is sub-cubic in $n$ provided $\Phi(n)$ is sub-linear.  For instance, for the $\ell_1$ norm and $d$ dimensions, $\Phi(n)=O(\log^d n)$~\cite{vaidya1989geometry} and for the $\ell_2$ norm and $2$ dimensions, $\Phi(n)=\log^{O(1)}n$~\cite{kaplan2020dynamic}. In these cases, the Hungarian algorithm can be implemented in near-quadratic time.  

The design of algorithms that compute a minimum-cost matching with a sub-quadratic number of queries to a DWNN data structure remains an important open problem in computational geometry. There are three notable exceptions to this. First, for points with integer coordinates and the $\ell_1$ and $\ell_\infty$ norms, the edge costs are integers. Using this fact,~\citet{ra_soda12} adapted an existing cost-scaling-based graph algorithm~\cite{gt_sjc89} and presented an algorithm that executes in $\tilde{O}(n^{3/2}\Phi(n))$ time. Second,~\citet{s_socg13} extended this result to two-dimensional point sets with integer coordinates and the $\ell_2$ costs. Their result, however, relies on the points being planar as well as the edge costs being the square root of integers and does not extend to $d$-dimensional points with real-valued coordinates.
Third, \citet{gattani2023robust} presented a divide-and-conquer algorithm (GRS algorithm) to compute a minimum-cost perfect matching. The worst-case execution time of their algorithm is $\tilde{O}(n^2\Phi(n)\log\Delta)$; here, $\Delta$ is the spread of the point sets $A$ and $B$.
For \emph{stochastic} points sampled from an unknown distribution $\mu$ in two dimensions, the algorithm finds the minimum-cost matching in $\tilde{O}(n^{7/4}\Phi(n)\log \Delta))$ time in expectation. 

\begin{figure}
    \centering
    \includegraphics[width=0.32\linewidth]{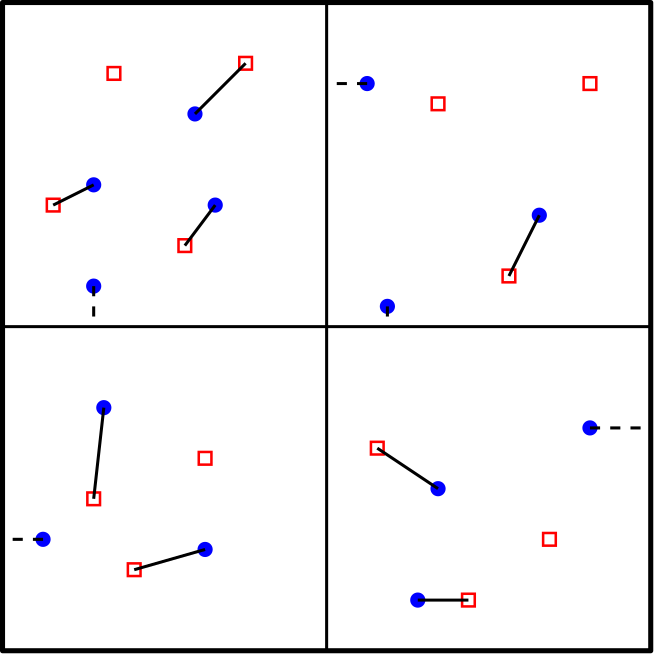} \hspace{0.5em}\includegraphics[width=0.32\linewidth]{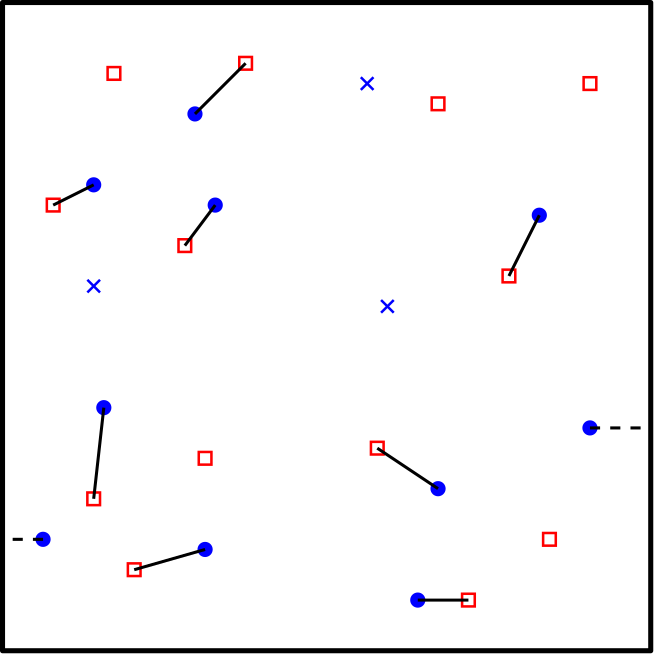} \hspace{0.5em}\includegraphics[width=0.32\linewidth]{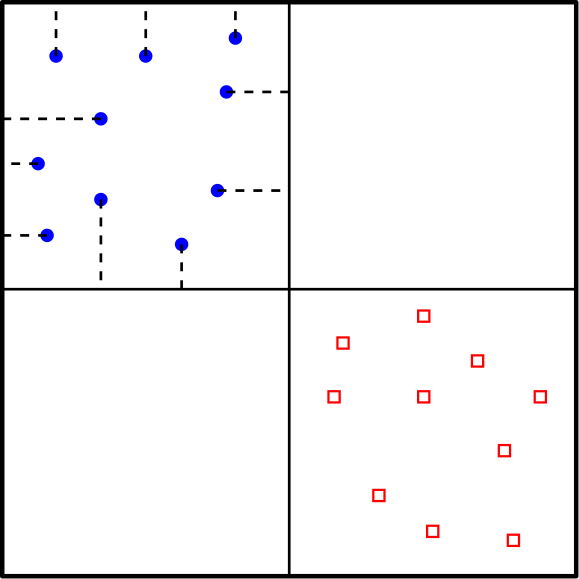}
    \caption{Illustration of the GRS algorithm.}
    \label{fig:GRS}
\end{figure}

The GRS algorithm uses a (randomly-shifted) quadtree to divide the problem into smaller sub-problems. Within each square $\cell$ of the quadtree, their algorithm recursively computes a \emph{minimum-cost extended matching}, where each point of $B$ inside $\cell$ can either match to another point of $A$ inside $\cell$ or to the boundary of $\cell$ (Figure~\ref{fig:GRS}(left)). The algorithm combines minimum-cost extended matchings of the child squares by  \emph{erasing} their common boundary, freeing all points of $B$ that matched to this boundary (Figure~\ref{fig:GRS}(middle)), and iteratively matching the freed points. 

Interestingly, \citet{gattani2023robust} showed that when $A$ and $B$ are samples from the same distribution $\mu$, most points of $B$ will match to a close-by point in $A$, leaving only $\tilde{O}(n^{3/4})$ many points in $B$ matching to the boundary. Thus, erasing the common boundary only creates $\tilde{O}(n^{3/4})$ free points, each of which can be matched in $O(n\Phi(n))$ time, leading to an overall execution time of $\tilde{O}(n^{7/4}\Phi(n))$. However, this efficiency does not extend to arbitrary point sets. For instance, suppose all points of $A$ are in a child $\cell_1$ and all points of $B$ are in another child $\cell_2$. In this case, all edges of the minimum-cost matching cross the boundaries of $\cell_1$ and $\cell_2$. The minimum-cost extended matching at $\cell_2$, therefore, will match every point of $B$ to the boundary of $\cell_2$ (See Figure~\ref{fig:GRS}(right)). Erasing the common boundary between $\cell_1$ and $\cell_2$ creates $n$ free points, causing the conquer step to take $\Omega(n^2\Phi(n))$ time. Furthermore, unlike the Hungarian algorithm, the GRS algorithm does not guarantee the optimality of intermediate matchings. Therefore, it cannot be used to produce minimum-cost $t$-matchings.

\myparagraph{Our results.} 
The optimal solutions to the $k$-SP (resp. $k$-SPI) problems can be computed in $\tilde{O}(nk\Phi(n))$ time by non-trivially adapting the Hungarian algorithm to find a minimum-cost $(n-k)$-matching (resp. $n$-matching) in $\mcG_{\requests}$ (resp. $\mcG_{\requests'}$). 
However, it is worth noting that this algorithm still makes quadratic queries to the DWNN data structure when $k = \Theta(n)$. The main contribution of this paper is the design of a novel algorithm that, for any $k$, computes the optimal solution to the $k$-SP and $k$-SPI problems while making only a sub-quadratic number of queries to the DWNN data structure. 



\begin{theorem}\label{thm:kSP}
Given any sequence $\requests$ (resp. $\requests'=\servers\requests$) of $n$ requests (resp. $n$ requests and $k$ initial server locations) in $2$ dimensions with a spread of $\Delta$, and a value $1 \le k \le n$, there exists a deterministic algorithm that computes the optimal solution for the instance of $k$-SP (resp. $k$-SPI) problem under the $\ell_p$ norm in $\tilde{O}(\min\{nk, n^{1.8}\log \Delta\}\cdot\Phi(n))$ time.
\end{theorem}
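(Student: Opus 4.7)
The plan is to prove both summands of the minimum separately, and in both cases to begin with Lemma~\ref{lemma:matching-reduction} that converts a given instance of the $k$-SP (resp.\ $k$-SPI) problem on $n$ requests into a minimum-cost $(n-k)$-matching (resp.\ $n$-matching) problem on the bipartite graph $\mcG_{\requests}$ (resp.\ $\mcG_{\requests'}$). Although the edges of these reduced graphs are \emph{directional} (from an exit gate $a_i$ to an entry gate $b_j$ only when $j>i$), their costs coincide with $\ell_p$ distances, so DWNN structures remain applicable provided separate structures are maintained for the ``active'' exit and entry gates at each moment of the algorithm.

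For the $\tilde{O}(nk\,\Phi(n))$ bound I would run the classical Hungarian algorithm for $n-k$ (resp.\ $n$) iterations on $\mcG_{\requests}$ (resp.\ $\mcG_{\requests'}$). Each iteration recovers a minimum net-cost augmenting path in the residual graph by a Dijkstra-style search implemented through $\tilde{O}(n)$ calls to a DWNN data structure~\cite{vaidya1989geometry}, giving $\tilde{O}(n\Phi(n))$ time per iteration. The directionality is handled as a bookkeeping step: entry gate $b_j$ is inserted into the DWNN only when it becomes eligible (after processing $a_{j-1}$), so the active point sets encode the forbidden backward edges without inflating query costs.

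For the sharper $\tilde{O}(n^{1.8}\log\Delta\cdot\Phi(n))$ bound I would adopt the hierarchical-partitioning paradigm of the GRS algorithm with one critical modification: every intermediate matching remains a true minimum-cost extended matching, so the algorithm behaves in a Hungarian-style optimal fashion throughout. This is essential because the reduction requires a specific $(n-k)$-matching, and any deviation from optimality would destroy the correctness of the resulting $k$-SP/$k$-SPI solution. Concretely, I would build a dyadic hierarchy of rectangles over the unit square with $O(\log\Delta)$ levels and, at each cell $\cell$, maintain a minimum-cost \emph{extended matching} in which every $b\in B\cap\cell$ is matched either to some $a\in A\cap\cell$ or to the boundary of $\cell$. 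Merging sibling cells proceeds bottom-up: erasing their common boundary frees precisely those $b$'s that had been matched to it, and the optimal extended matching of the merged cell is restored by running a sequence of localized minimum net-cost augmenting-path searches confined to the merged cell and implemented through DWNN queries.

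The main obstacle is bounding the cumulative augmentation work. The plan is to show that each point of $A\cup B$ participates in only $\tilde{O}(n^{1-1/(2d+1)}\log\Delta)=\tilde{O}(n^{0.8}\log\Delta)$ augmenting paths across the entire run, via a geometric charging argument built around ``margins'' of the cells in the hierarchy. For a cell $\cell$ of diameter $s$, only points lying in a $\delta$-margin of a boundary can be freed or absorbed when that boundary is erased, and a volume/packing argument bounds the number of such points by roughly $s^{d-1}\delta\cdot n$. Balancing this count against the $O(s)$ cost of an augmentation confined to $\cell$ and optimizing $\delta$ across the $O(\log\Delta)$ levels yields the exponent $2-1/(2d+1)$, which specializes to $n^{1.8}$ for $d=2$. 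The hardest step will be making this charging robust against the adversarial configuration that defeats GRS (Figure~\ref{fig:GRS}, right): because our algorithm preserves Hungarian optimality at every merge, the edges touched during subsequent merges are constrained by dual feasibility, and I would exploit this to amortize the augmentation work across the hierarchy so that no point is charged beyond its share. Multiplying the per-point bound by $n$ points and by $\Phi(n)$ per DWNN call in each augmenting step produces the claimed running time, and taking the minimum with the Hungarian bound completes the theorem.
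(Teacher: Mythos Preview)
Your proposal has the right skeleton for the hierarchical half but contains two genuine gaps, one in each summand of the minimum.

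\textbf{The $\tilde{O}(nk\,\Phi(n))$ bound.} Running the forward Hungarian algorithm for $n-k$ (resp.\ $n$) iterations costs $\tilde{O}((n-k)n\Phi(n))$ (resp.\ $\tilde{O}(n^2\Phi(n))$), which is \emph{not} $\tilde{O}(nk\Phi(n))$ when $k$ is small. The paper instead runs the Hungarian algorithm \emph{in reverse}: it initializes with the full $(n-1)$-matching corresponding to a single server serving all requests in order (with carefully set feasible duals), and then performs $k-1$ iterations of a \emph{reverse} Hungarian search, each finding a minimum net-cost reverse augmenting path that decreases the matching size by one. This gives exactly $k$ iterations regardless of how small $k$ is. The analogous trick handles $k$-SPI. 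Your bookkeeping scheme for directionality (inserting $b_j$ ``after processing $a_{j-1}$'') also does not work, since a single Dijkstra search may need edges $(a_i,b_j)$ for all $j>i$ simultaneously; the paper instead decomposes $\mcG_\requests$ into $O(n)$ complete bipartite cliques via a balanced BST on the request indices and maintains a separate BCP structure at each node.

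\textbf{The $\tilde{O}(n^{1.8}\log\Delta\cdot\Phi(n))$ bound.} Your margin/packing argument is exactly the kind of reasoning that \emph{fails} on the adversarial configuration in Figure~\ref{fig:GRS} (right): a point $b$ matched to a cell boundary need not lie in any $\delta$-margin of that boundary, since its dual weight (equal to its distance to the boundary) can be large. Dual feasibility alone does not prevent this. The paper's key additional ingredient is a structural lemma specific to $k$-SP instances (Lemma~\ref{lemma:geometric-matching}): inside every cell $\cell$ there is a matching $M'$ with only $O(n_\cell^{4/5})$ free points and cost $O(\ell_\cell n_\cell^{3/5})$, obtained by laying a grid of side $\ell_\cell n_\cell^{-2/5}$ and, in each grid square, chaining the requests in arrival order with a single server. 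This low-cost high-cardinality matching is then used in a symmetric-difference argument: the boundary-matched points that are far from the divider must have large dual weight, the sum of these dual weights is bounded by $w(M')$, and hence there can be only $O(n^{4/5})$ of them; the near ones are handled by the margin bound from the carefully chosen divider (Lemma~\ref{lemma:margin}). The same $M'$ is reused to bound the number of ``high-net-cost'' extended Hungarian searches per cell. Without this lemma your charging argument has no way to control far boundary-matched points, and the analysis degenerates to the quadratic GRS worst case.
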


Our algorithm also extends to higher dimensions, and for any dimension $d\ge 2$, it computes optimal solutions to the $k$-SP and $k$-SPI problems in $\tilde{O}(\min\{nk, n^{2-\frac{1}{2d+1}}\log \Delta\}\cdot\Phi(n))$ time (Theorem~\ref{thm:kSP-d} in Appendix~\ref{sec:appendix-high-d-ksp}). 



Developing algorithms for geometric bipartite matching that perform sub-quadratic queries to a DWNN data structure is a challenging task. Nevertheless, it follows from Theorem~\ref{thm:kSP} that the bipartite matching instances generated by $k$-SP and $k$-SPI problems (with bounded spread) can be solved by making sub-quadratic queries to a DWNN data structure. Solving instances with unbounded spread with sub-quadratic queries to DWNN, at least for the $k$-SPI problem, remains challenging. This is because, as established in Lemma~\ref{lemma:kSPI_reduction}, any such  
algorithm can find the $d$-dimensional minimum-cost bipartite matching algorithm under $\ell_1$ costs in sub-quadratic time.  


Our algorithm uses a hierarchical partitioning tree, whose nodes (referred to as cells) are axis-parallel rectangles with an aspect ratio of at most $3$. Each cell is divided into two smaller rectangles, forming its children. At any point during the execution of the algorithm, it maintains a set of \emph{current} cells $\mcC$ that partition the input points. It computes a minimum-cost extended $(n-k)$-matching where points of $B$ are allowed to match to the boundaries of these cells by repeatedly identifying a minimum net-cost augmenting path and augmenting the matching along this path.  Once a minimum-cost extended $(n-k)$-matching is computed, the algorithm removes a pair of sibling rectangles from the set of current cells and instead makes their parent cell current. By doing so, the common boundary between the siblings is erased, creating additional free points, which are then matched again by finding the minimum net-cost augmenting paths. When all the boundaries are erased, the algorithm terminates with the desired minimum-cost $(n-k)$-matching. 

There are three major hurdles in proving the efficiency of our algorithm. The first challenge is in finding a minimum net-cost augmenting path, which typically requires a search on the entire graph. However, for extended matchings, we show that the minimum net-cost augmenting path is fully contained inside one of the current cells.  This significantly improves efficiency, as the search can be limited to individual current cells, and the overall minimum can then be determined by selecting the path with the smallest net-cost among all current cells.
The second challenge is in bounding the time to merge two cells. Similar to the worst-case example for the GRS algorithm, the $k$-SP (or $k$-SPI) may have current cells with $\Theta(n)$ points, where the optimal solution matched each point of $B$ to a point of $A$ that is very far (see Figure~\ref{fig:ksp} (left)). Despite this, we show that any minimum-cost extended matching has only $O(n^{0.8})$ points that are matched to their common boundary, helping us in bounding the time required to merge cells. To establish this, we critically use the fact that every sub-problem has a hidden low-cost high-cardinality matching $M$ (with only sub-linearly many free points, see Figure~\ref{fig:ksp} (right)). 

The final challenge in the analysis is the following. The $k$ free points associated with the extended $(n-k)$-matching $M$ maintained by our algorithm may differ from the $k$ free points in the minimum-cost $(n-k)$ matching $M^*$. For instance, a point $b \in B$ may be free in $M^*$ but matched to the boundary in $M$, thereby leaving some other point $b'$ unmatched. Our algorithm corrects them via alternating and augmenting paths, each of which can take $\Theta(n)$ time to find. Since there can be $k$ such corrections, a na\"ive analysis leads to an upper bound of $O(nk)$. Surprisingly, by exploiting geometry, we show that the number of such corrections inside any current cell cannot exceed $O(n^{0.8})$. Using this observation, we bound the overall execution time of our algorithm by $\tilde{O}(n^{1.8}\Phi(n)\log \Delta)$.

\begin{figure}
    \centering
    \includegraphics[width=0.45\textwidth]{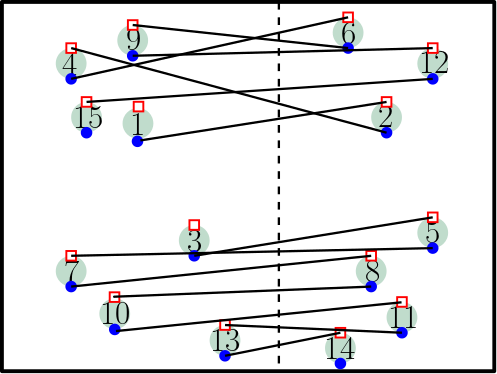}
    \hspace{2em}
    \includegraphics[width=0.45\textwidth]{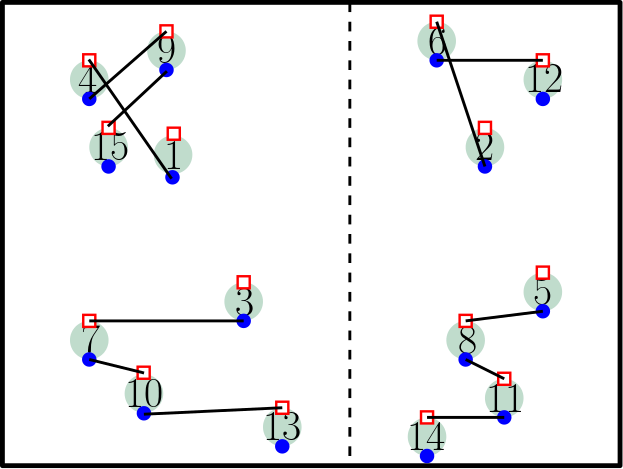}
    \caption{(left) A sub-problem of the $k$-SP problem, where the optimal solution has a high cost, and (right) there exists a low-cost high-cardinality matching inside the sub-problems.}
    \label{fig:ksp}
\end{figure}




Our proof techniques also improve the analysis of the GRS algorithm for the case where the cost between points $a \in A$ and $b \in B$ is given by $\|a-b\|_2^q$ for any $q \ge 2$. Previous work established that the GRS algorithm computes the minimum-cost perfect matching between two-dimensional stochastic points in $\tilde{O}(n^{2-\frac{1}{2(q+1)}}\Phi(n))$ time~\cite{gattani2023robust}. Notably, as $q$ approaches $\infty$, this analysis implies a quadratic number of queries to the DWNN data structure.
\citet{raghvendra2024new} recently observed that, for any $q\ge1$, there is a low-cost high-cardinality matching with only sub-linearly many free points between stochastic point sets. By incorporating this observation with the novel analysis techniques in our paper, we show that the GRS algorithm makes only a sub-quadratic number of queries to the DWNN data structure, regardless of the value of $q$. 
  

\begin{restatable}{theorem}{GRSAnalysis}\label{lemma:GRS-analysis}
    Suppose $U$ is a set of $2n$ points inside the unit square and $A$ is a subset chosen uniformly at random from all subsets of size $n$. Let $B=U\setminus A$.  Then, there exists an algorithm that computes the minimum-cost perfect matching on the complete bipartite graph on $A$ and $B$ under $\ell_2^q$ costs in $\tilde{O}(n^{7/4}\Phi(n)\log \Delta)$ expected time.
\end{restatable}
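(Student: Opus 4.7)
The plan is to execute the GRS divide-and-conquer algorithm on a randomly-shifted quadtree over $U$: recursively compute a minimum-cost extended matching inside each cell $\cell$, and merge sibling cells by erasing their shared boundary. Each erased boundary frees exactly the points of $B$ previously matched to it, and each freed point can be rematched via one augmenting-path search in $\tilde{O}(n\Phi(n))$ time using a DWNN data structure. Consequently, the total running time is controlled by the total number of boundary matches summed across all cells of the quadtree.

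The first key ingredient is the hidden-matching bound of Raghvendra (2024): for a uniformly random subset $A\subseteq U$ of size $n$ and $B=U\setminus A$, and for any $q\ge 1$, there exists a matching between $A$ and $B$ with $\tilde{O}(n^{3/4})$ free points and expected cost $\tilde{O}(n^{3/4})$ under $\ell_2^q$ cost. Plugging this into the structural argument used by Gattani et al. for the $q=1$ case produces, inside every quadtree cell $\cell$, a feasible extended matching with few boundary edges and sub-linear cost. By optimality of the extended matching computed by the algorithm within $\cell$, the number of points of $B\cap\cell$ matched to the boundary of $\cell$ is at most $\tilde{O}(n_\cell^{3/4})$, where $n_\cell=|(A\cup B)\cap\cell|$; a standard packing of the $n_\cell^{3/4}$ terms over the quadtree then yields $\tilde{O}(n^{7/4}\log\Delta)$ augmenting-path searches in total, and hence the claimed $\tilde{O}(n^{7/4}\Phi(n)\log\Delta)$ running time.

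The analysis technique introduced in this paper is what closes the per-cell argument. In the $k$-SP analysis, the number of ``corrections'' inside a current cell is bounded by charging each augmenting-path net-cost against the hidden matching's cost contribution inside that cell. The same charging scheme is applied here: for each boundary match to be corrected during the merge of $\cell$, we charge its net-cost to the restriction of the hidden matching of Raghvendra (2024) to $\cell$. Because the hidden matching has sub-linear cost for every $q\ge 1$, the charging gives a bound on augmenting paths per cell that is sub-linear in $n_\cell$ \emph{and independent of $q$}, in contrast with the $\tilde{O}(n^{2-1/(2(q+1))}\Phi(n))$ bound of Gattani et al., whose dependence on $q$ blows up as $q\to\infty$.

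The main obstacle is that $\ell_2^q$ for $q>1$ is not a metric: several steps in Gattani et al.\ implicitly use the triangle inequality or linear scaling of $\ell_2$ with boundary length, so that, e.g., rerouting a matching edge across a boundary segment of length $L$ costs $O(L^q)$ rather than $O(L)$, and one has to verify that this blow-up is absorbed. The plan is to sidestep explicit rerouting by invoking Raghvendra (2024) directly as a primal/dual certificate, so that only the existence of a low-cost high-cardinality matching is used, never the metric property. Once the boundary-match bound is established in this form, the running-time calculation proceeds as in the second paragraph and the theorem follows uniformly for all $q\ge 1$.
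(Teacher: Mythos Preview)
Your high-level plan is right --- divide and conquer, bound boundary matches per cell via a hidden low-cost high-cardinality matching, then sum over cells --- and it matches the paper's strategy. But there is a genuine gap in the per-cell bound.

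You write: ``By optimality of the extended matching \ldots\ the number of points of $B\cap\cell$ matched to the boundary of $\cell$ is at most $\tilde{O}(n_\cell^{3/4})$,'' and later justify this by charging each boundary match's net-cost to the hidden matching. That charging only controls the boundary-matched points that are \emph{far} from the boundary. Concretely, a boundary-matched point $b$ has $y(b)=\distance{b}{\Gamma_\cell}$ (the $\ell_2^q$ distance to the divider), and the inequality $\sum_b y(b)\le w(M')$ gives nothing when $b$ sits arbitrarily close to $\Gamma_\cell$: such points contribute essentially zero to the sum yet still count toward $|B_\cell^\mcC|$. The paper's proof (Lemma~\ref{lemma:GRS-cell-process}) therefore splits $B_\cell^\mcC$ into ``close'' points (within $\ell_\cell n^{-1/4}$ of the divider) and ``far'' points; the far points are handled by your cost-charging, but the close points require a \emph{separate} margin argument (Lemma~\ref{lemma:margin}), which in turn relies on the paper's deterministic hierarchical partitioning where each divider is chosen to minimize the number of nearby points. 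On a randomly-shifted quadtree you would instead need a probabilistic margin bound (expected number of points within $\ell_\cell n^{-1/4}$ of a random boundary is $O(n^{3/4})$), but you never state or use any such bound.

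Two smaller points. First, the paper does not run the original GRS algorithm on a random quadtree; it runs its own algorithm from Section~\ref{sec:k-seq} (with $k=0$) on the deterministic partitioning $\mcH$ with $\lambda=9n^{-1/4}$. Second, your hidden-matching cost ``$\tilde{O}(n^{3/4})$'' is not the right form: the paper constructs $M'$ with cost $O(\ell_\cell^q n_\cell^{1-(q+1)/(d+2)})$ (Lemma~\ref{lemma:GRS-matching} and its $d$-dimensional analogue), and the $\ell_\cell^q$ scaling is exactly what cancels against the threshold $(\ell_\cell n^{-1/4})^q$ in the far-point count. Getting these exponents right is what makes the bound independent of $q$.
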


Note that the setting considered by \citet{gattani2023robust} in the analysis of the GRS algorithm, where the two sets $A$ and $B$ are i.i.d samples from the same distribution, is a special case of randomly partitioning a set of $2n$ points into two sets $A$ and $B$ of $n$ points, considered in Theorem~\ref{lemma:GRS-analysis}.

\paragraph{Organization.} In Section~\ref{sec:geo-primal-dual}, we establish our primal-dual framework for the minimum-cost bipartite matching problem and use it in Section~\ref{sec:k-seq} to present our sub-quadratic algorithms for the $k$-SP and $k$-SPI problems. We present our fast implementation of the Hungarian algorithm in Section~\ref{sec:nk-algod} and the improved analysis of the GRS algorithm in Section~\ref{sec:randomly-colored}. We conclude the paper by discussing some open questions in Section~\ref{sec:conclusion}.

\section{Geometric Primal-Dual Framework}\label{sec:geo-primal-dual}

Let $\sigma$ be an input to the $k$-SP problem, where the distance between two locations $a$ and $b$ is given by $\distance{a}{b}=\|a-b\|_p$. In this section, we introduce a primal-dual framework based on hierarchical partitioning to compute a minimum-cost $(n-k)$-matching in $\mcG_\sigma=(A\cup B, E\subset A\times B)$. We begin by describing the hierarchical partitioning scheme.


\subsection{Hierarchical Partitioning}\label{sec:hierarchical}
Using $\lambda:=9n^{-1/5}$, we construct a hierarchical partitioning $\mcH$ recursively. Each node of $\mcH$ is an axis-parallel rectangle, referred to as a \emph{cell}. The root node, $\cell^*:=[-3n, 3n]^2$, contains all points in $A\cup B$. For each node $\cell$, let $A_\cell$ and $B_\cell$ be the points of $A$ and $B$ inside $\cell$ and let $n_{\cell}=|A_\cell \cup B_{\cell}|$. If $n_{\cell}\le2$ (i.e., $\cell$ is empty or contains the entry and exit gates of a single request), then $\cell$ is marked as a leaf node. Otherwise, we partition $\cell$ into two smaller rectangles as follows. Let $\ell_\cell$ be the larger of the length and width of rectangle $\cell$. Without loss of generality, assume that $\ell_\cell$ is the width of $\cell$ and let $x_{\min}$ be the $x$-coordinate of the bottom-left corner of $\cell$.   For any value $\hat{x}\in [x_{\min}+\frac{\ell_\cell}{3}, x_{\min}+\frac{2\ell_\cell}{3}]$, define $\Lambda(\hat{x}):=\{u\in A_\cell\cup B_\cell: |u_x-\hat{x}|\le \ell_\cell\lambda\}$; here, $u_x$ denotes the $x$ coordinate of the point $u$. Let $x^*:=\arg\min_{\hat{x}\in [x_{\min}+\frac{\ell_\cell}{3}, x_{\min}+\frac{2\ell_\cell}{3}]}|\Lambda(\hat{x})|$. 
We partition $\cell$ into two smaller rectangles by using a vertical line defined by $x=x^*$ and add them as the children of $\cell$ to $\mcH$. We refer to the segment partitioning $\cell$ into its two children as its \emph{divider} and denote it by $\divider\cell$. See Figure~\ref{fig:hierarchical}. For any cell $\cell$, the four sides of its rectangle are defined by the dividers of its ancestor or the boundaries of the root square.
This completes the construction of $\mcH$. Note that the height of the tree is $O(\log n\Delta)$. A simple sweep-line algorithm can compute $x^*$ in $O(n_\cell\log n_\cell)$ time. Using this procedure, we construct $\mcH$ in $\tilde{O}(n \log (n\Delta))$ time.
\begin{figure}
    \centering
    \includegraphics[width=0.6\linewidth]{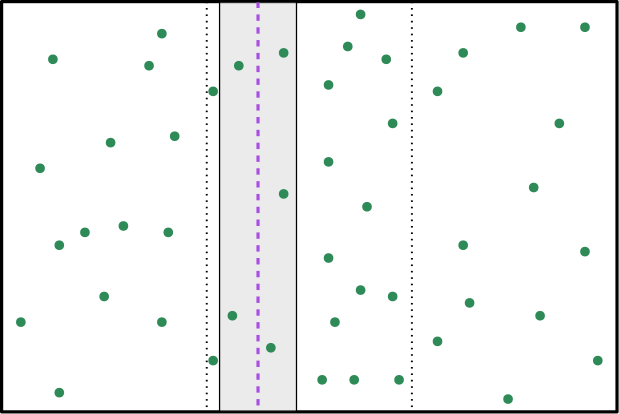}
    
    \caption{We Partition a rectangle into two children by picking a divider (the purple dashed vertical line) with the minimum number of points close to it (gray shaded area) within the middle-third of its longer side (the part between the two vertical dotted lines).}
    \label{fig:hierarchical}
\end{figure}

\begin{restatable}{lemma}{margin}\label{lemma:margin}
    For each cell $\cell$ of $\mcH$, the ratio of the largest to the smallest side of $\cell$ is at most $3$. Furthermore, the number of points of $A_\cell\cup B_\cell$ with a distance smaller than $\ell_\cell\lambda$ to the divider $\divider\cell$ is $O(n_\cell\lambda)$.
\end{restatable}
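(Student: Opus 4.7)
The plan is to handle the two claims separately: an induction on the depth in $\mcH$ for the aspect-ratio bound, and a simple averaging argument over the position of the divider for the margin count.

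For the aspect-ratio bound, I would induct on the depth of $\cell$ in $\mcH$. The root $\cell^* = [-3n, 3n]^2$ is a square, so the base case is immediate. For the inductive step, let $\ell_\cell$ be the longer side of $\cell$ and $s_\cell$ the shorter, with $\ell_\cell \le 3 s_\cell$ by hypothesis. Assume the split is along the longer direction at $x^* \in [x_{\min} + \ell_\cell/3,\, x_{\min} + 2\ell_\cell/3]$; a child then has dimensions $w \times s_\cell$ with $w \in [\ell_\cell/3, 2\ell_\cell/3]$. If $w \ge s_\cell$, the child's aspect ratio is $w/s_\cell \le (2\ell_\cell/3)/s_\cell \le 2$, using $\ell_\cell \le 3 s_\cell$. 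If $w < s_\cell$, the ratio is $s_\cell/w \le s_\cell/(\ell_\cell/3) = 3 s_\cell / \ell_\cell \le 3$, using $s_\cell \le \ell_\cell$. In either case, the bound of $3$ is preserved and the induction goes through.

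For the margin count, the key is that $x^*$ was chosen to minimize $|\Lambda(\hat x)|$ over the interval $I := [x_{\min} + \ell_\cell/3,\, x_{\min} + 2\ell_\cell/3]$ of length $\ell_\cell/3$. Thus $|\Lambda(x^*)| \le \frac{1}{|I|}\int_I |\Lambda(\hat x)| d\hat x$. Swapping sum and integral, $\int_I |\Lambda(\hat x)| d\hat x = \sum_{u \in A_\cell \cup B_\cell} \mathrm{length}\{\hat x \in I : |u_x - \hat x| \le \ell_\cell \lambda\}$, and each individual term is at most $2\ell_\cell\lambda$. Hence $\int_I |\Lambda(\hat x)| d\hat x \le 2 \ell_\cell \lambda \, n_\cell$, which yields $|\Lambda(x^*)| \le 6 \lambda\, n_\cell = O(n_\cell \lambda)$.

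Finally, I would verify that the set of points within distance $\ell_\cell \lambda$ of the segment $\divider\cell$ (and not merely of the line $x=x^*$) is contained in $\Lambda(x^*)$. For any $u \in A_\cell \cup B_\cell$, the foot of the perpendicular from $u$ to the vertical line $x = x^*$ lies inside $\cell$ because $\divider\cell$ spans the entire height of the rectangle; hence its $\ell_p$ distance to $\divider\cell$ coincides with $|u_x - x^*|$. This shows the margin count is indeed bounded by $|\Lambda(x^*)| = O(n_\cell \lambda)$. The only mild subtlety in the whole argument is this last geometric observation; apart from that, both parts are essentially a one-line induction and a one-line averaging, so I do not anticipate any serious obstacle.
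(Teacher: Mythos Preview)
Your proposal is correct and follows essentially the same approach as the paper: an identical induction for the aspect-ratio bound, and an averaging argument over the divider position for the margin count. The only cosmetic difference is that you average continuously via $\frac{1}{|I|}\int_I |\Lambda(\hat x)|\,d\hat x$, whereas the paper averages over a discrete set of $\Theta(1/\lambda)$ candidate positions spaced $3\ell_\cell\lambda$ apart so that the corresponding $\Lambda$-sets are disjoint; both yield the same $O(n_\cell\lambda)$ bound.
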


Recollect that a matching $M$ in $\mcG_\sigma$ is a subset of vertex-disjoint edges. We refer to a point $b\in B$ as \emph{unmatched} in $M$ if it does not have an edge of $M$ incident on it and \emph{matched} otherwise. 
The following structural property of the $k$-SP problem will be critical in bounding the efficiency of our algorithm.

\begin{restatable}{lemma}{geometricAnalysis}\label{lemma:geometric-matching}
    For any cell $\cell$ of $\mcH$, there exists a matching $M'$ between $A_\cell\cup B_\cell$ that matches all except $O(n_\cell^{4/5})$ points of $B_\cell$ and has a cost $O(\ell_\cell n_\cell^{3/5})$.
\end{restatable}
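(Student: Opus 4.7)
The plan is to exhibit $M'$ explicitly by overlaying an auxiliary regular grid on $\cell$. I will choose a parameter $t$, subdivide $\cell$ into a $t\times t$ grid of axis-aligned sub-rectangles, and within each sub-rectangle chain consecutive requests together using the DAG structure of $\mcG_\sigma$. Lemma~\ref{lemma:margin} guarantees that $\cell$ has aspect ratio at most~$3$, so each sub-rectangle has diameter $O(\ell_\cell/t)$ under the $\ell_p$ norm, which will bound the per-edge cost.

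Concretely, for each non-empty sub-rectangle $s$ I would let $i^s_1<i^s_2<\cdots<i^s_{m_s}$ denote the indices of the requests whose location $r_{i^s_j}$ lies in $s$, sorted by the order they appear in $\sigma$, and then add the edge $(a_{i^s_j},b_{i^s_{j+1}})$ to $M'$ for every $j=1,\ldots,m_s-1$. Each such edge is present in $\mcG_\sigma$ because $i^s_j<i^s_{j+1}$, and has cost at most the diameter of $s$, i.e.\ $O(\ell_\cell/t)$. Since each entry/exit gate lies in a unique sub-rectangle, these edges are automatically vertex-disjoint and form a valid matching.

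For the two bounds, exactly one $B$-vertex per non-empty sub-rectangle, namely $b_{i^s_1}$, is left unmatched, so the total number of unmatched $B$-vertices is at most $t^2$. Using $\sum_s m_s = n_\cell/2$ (since each request $r_i$ contributes its $a_i$ and $b_i$ at the same location), the total cost telescopes to $\sum_s (m_s-1)\cdot O(\ell_\cell/t) = O(\ell_\cell n_\cell/t)$. Setting $t := \lceil n_\cell^{2/5}\rceil$ balances these two quantities and yields $O(n_\cell^{4/5})$ unmatched vertices and total cost $O(\ell_\cell n_\cell^{3/5})$; the exponents $3/5$ and $4/5$ are forced by the balance $t^2 = n_\cell/t$.

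I do not anticipate a serious obstacle. The construction is fully explicit, the DAG constraint $i<j$ is ensured by always pairing an earlier request in $\sigma$ with a later one inside the same sub-rectangle, and the per-edge cost bound follows from the aspect-ratio guarantee of Lemma~\ref{lemma:margin}. The only conceptual subtlety is that the overlaid grid is \emph{auxiliary}: its sub-rectangles are not themselves nodes of $\mcH$, and serve purely as a witness that the $k$-SP instance localised to $\cell$ admits a hidden low-cost, near-perfect matching -- which is precisely the structural ingredient the paper says it critically exploits in the analysis of its main algorithm.
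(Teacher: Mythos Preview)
Your proposal is correct and takes essentially the same approach as the paper: overlay an auxiliary grid on $\cell$ with side length $\ell_\cell n_\cell^{-2/5}$ (equivalently, a $t\times t$ grid with $t=\lceil n_\cell^{2/5}\rceil$), and inside each grid square chain the requests in arrival order via edges $(a_{i^s_j},b_{i^s_{j+1}})$, leaving exactly one $B$-vertex unmatched per non-empty square. One small remark: your parenthetical that ``the exponents $3/5$ and $4/5$ are forced by the balance $t^2=n_\cell/t$'' is off (that equation gives $t=n_\cell^{1/3}$); the choice $t=n_\cell^{2/5}$ is not the balance point but simply the value that makes both bounds match the lemma's statement.
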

\begin{proof}
We place a grid $G$ with cell-side-length $\ell_\cell n_\cell^{-2/5}$ inside $\cell$. For each square $\xi$ of $G$, let $\requests_\xi$ denote the subset of requests that lie inside $\xi$, and define $M_\xi$ to be the bipartite matching that corresponds to a single server serving the requests of $\requests_\xi$. Define $M':=\bigcup_{\xi\in G}M_\xi$. See Figure~\ref{fig:extended_matching}(left). 
\end{proof}

\begin{figure}
    \centering
    \includegraphics[width=0.4\linewidth]{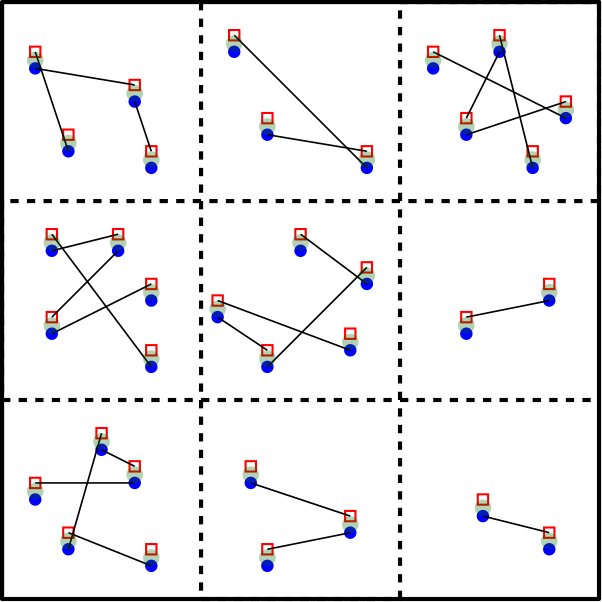}\hspace{2em}
    \includegraphics[width=0.4\linewidth]{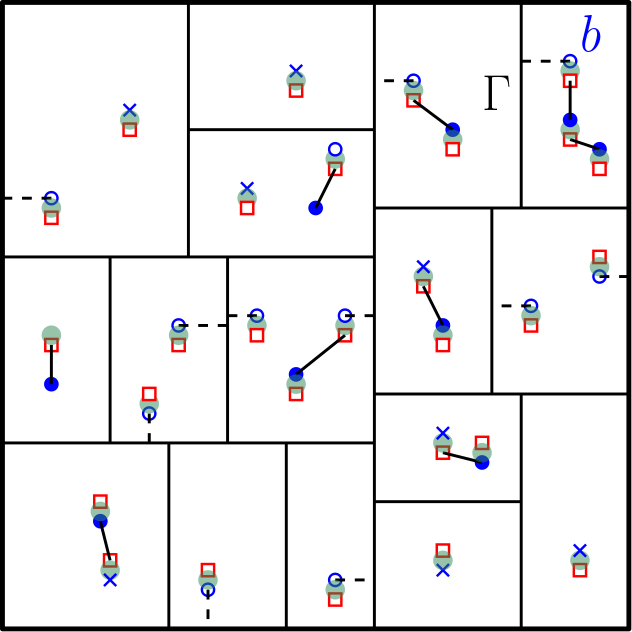}
    \caption{(left) The low-cost high-cardinality matching constructed inside a cell $\cell$ of $\mcH$, and (right) a $\mcC$-extended $20$-matching with $9$ matched points (blue discs), $11$ boundary-matched points (blue circles), and $8$ free points (blue crosses). The matching cost is the total length of the solid and dashed lines. The boundary-matched point $b$ is matched to the boundary $\Gamma$.}
    \label{fig:extended_matching}
\end{figure}

\subsection{Extended Bipartite Matching}\label{sec:primal}
Suppose we are given a subset $\mcC$ of cells from $\mcH$ that partitions $\cell^*$, i.e., the cells of $\mcC$ are interior disjoint and these cells cover the root square $\cell^*$. Let $\distance{u}{\cell}$ denote the shortest distance from $u$ to the boundaries of the cell $\cell$. For any point $u$ inside $\cell^*$, let $\cell_u$ be the cell of $\mcC$ that contains $u$. We define $\distance{b}{\mcC} = \distance{b}{\cell_b}$. We extend the definition of matching to allow points in $B$ to match to the boundaries of cells in $\mcC$. A \emph{$\mcC$-extended matching} consists of a matching $M$ as well as a subset $B^\mcC$ of points of $B$ that are unmatched in $M$ but instead matched to the boundaries of the cells of $\mcC$ that contain them. See Figure~\ref{fig:extended_matching}(right). The cost of a $\mcC$-extended matching $M^\mcC$ is
\begin{equation}
    w_\mcC(M^\mcC) := \sum_{(a,b)\in M} \distance{a}{b} + \sum_{b \in B^{\mcC}} \distance{b}{\mcC}.
\end{equation}

We refer to all points of $B^{\mcC}$ as \emph{boundary-matched}. All points of $B$ that are neither matched in $M$ nor boundary-matched are considered \emph{free}. All points of $A$ that are not matched in $M$ are also considered free. The size of $M^{\mcC}$ is equal to $|M|+|B^{\mcC}|$. When $\mcC$ is clear from the context, we refer to $M^{\mcC}$ as an extended matching. An extended matching $M^{\mcC}$ of size $t$ with the minimum cost is called a \emph{minimum-cost $\mcC$-extended $t$-matching}.

Consider the partitioning $\mcC^*=\{\cell^*\}$, where $\cell^*$ is the root cell of $\mcH$. Using the fact that the input points are far from the boundaries of $\cell^*$, we show in Lemma~\ref{lemma:root-inactive} below that any minimum-cost $\mcC^*$-extended matching $M^{\mcC^*}$ of size $t$ is also a minimum-cost $t$-matching, i.e., no point of $B$ is boundary-matched in $M^{\mcC^*}$. 
\begin{restatable}{lemma}{rootinactive}\label{lemma:root-inactive}
    Suppose $\mcC^*=\{\cell^*\}$, where $\cell^*$ is the root cell of $\mcH$. Let $M^{\mcC^*}=(M, B^{\mcC^*})$ be a minimum-cost extended $t$-matching on $\mcG_\sigma$, for $t<n$. Then, the matching $M$ is a minimum-cost $t$-matching.
\end{restatable}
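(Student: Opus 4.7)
The plan is a short proof by contradiction that exploits the huge gap between the dimensions of the root cell $\cell^*=[-3n,3n]^2$ and the unit-square domain $[0,1]^2$ containing all request locations. First I would observe that, under any $\ell_p$ norm, the distance from any $b\in B$ to the boundary of $\cell^*$ is at least $3n-1$: the nearest side of $\cell^*$ is reached by an axis-parallel displacement of length at least $3n-1$ (since $b$'s coordinates lie in $[0,1]$), and the $\ell_p$-distance from a point to an axis-parallel line equals this perpendicular displacement. Consequently, if $B^{\mcC^*}\neq\emptyset$, then $w_{\mcC^*}(M^{\mcC^*})\ge 3n-1$.

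Next, I would exhibit a cheap extended $t$-matching that uses no boundary at all. Since $t<n$, the set $M'=\{(a_i,b_{i+1}) : 1\le i\le t\}$ is a valid $t$-matching in $\mcG_\sigma$: each edge $(a_i,b_{i+1})$ is present because $i<i+1$, and the endpoints are distinct across different values of $i$. Each edge cost is bounded by the $\ell_p$-diameter of $[0,1]^2$, which is at most $2^{1/p}\le 2$, so $w(M')\le 2t\le 2(n-1)$. The pair $(M',\emptyset)$ is thus a $\mcC^*$-extended $t$-matching of cost at most $2(n-1)$.

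Combining the two bounds yields the contradiction: the optimality of $M^{\mcC^*}$ gives $w_{\mcC^*}(M^{\mcC^*})\le w(M')\le 2(n-1)$, but if $B^{\mcC^*}\neq\emptyset$ the first-paragraph estimate gives $w_{\mcC^*}(M^{\mcC^*})\ge 3n-1$, which is impossible for every $n\ge 1$. Hence $B^{\mcC^*}=\emptyset$ and $M^{\mcC^*}$ reduces to an ordinary $t$-matching $M$ in $\mcG_\sigma$. Furthermore, any $t$-matching $M''$ strictly cheaper than $M$ would yield an extended $t$-matching $(M'',\emptyset)$ strictly cheaper than $M^{\mcC^*}$, again contradicting its optimality; so $M$ is itself a minimum-cost $t$-matching. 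The only subtlety is verifying that a $t$-matching of size $t$ exists in $\mcG_\sigma$ for every $t<n$, which the explicit ``shift by one'' construction above handles, and I do not foresee any substantive obstacle beyond this.
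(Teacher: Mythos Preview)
Your proof is correct and, in fact, somewhat more direct than the paper's. Both arguments hinge on the same two observations: every point of $B$ lies at $\ell_p$-distance at least $3n-1$ from the boundary of $\cell^*$, while every edge of $\mcG_\sigma$ has cost at most $2$. Where they differ is in how the contradiction is derived. The paper performs a local swap: it takes a minimum net-cost augmenting path $P$ with respect to $M$ (whose net-cost is at most $2n$), augments along $P$, and then drops one boundary-matched point from $B^{\mcC^*}$; the resulting extended $t$-matching is strictly cheaper because the $\ge 3n-1$ boundary cost removed exceeds the $\le 2n$ augmenting cost added. You instead compare globally against the explicit ``shift-by-one'' $t$-matching $M'=\{(a_i,b_{i+1}):1\le i\le t\}$ with $w(M')\le 2(n-1)$, which immediately rules out any boundary match. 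Your route is shorter and avoids reasoning about augmenting paths altogether; the paper's route has the minor advantage of not needing to name a concrete comparison matching (it works from $M$ itself), but for this lemma that generality is not needed.
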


Let $M^\mcC=(M, B^\mcC)$ denote a $\mcC$-extended matching.
Any path $P$ on the graph $\mcG_\sigma$ whose edges alternate between matching and non-matching edges in $M$ is called an \emph{alternating path}. 
An alternating path $P$ is called an \emph{augmenting path} if $P$ starts from a free point $b\in B$ and ends with either (i) a free point $a\in A$, or (ii) a point $b'\in B$.  We \emph{augment} the extended matching $M^{\mcC}$ along an augmenting path $P$ by updating its matching $M\leftarrow M\oplus P$ and for case (ii), we match $b'$ to the boundary and update $B^\mcC$ to include $b'$. The \emph{net-cost} of $P$ in case (i) is $\phi(P):=\sum_{(a,b)\in P\setminus M}\distance{a}{b} - \sum_{(a,b)\in P\cap M}\distance{a}{b}$, and 
in case (ii) is \[\phi(P):=\distance{b'}{\mcC} + \sum_{(a,b)\in P\setminus M}\distance{a}{b} - \sum_{(a,b)\in P\cap M}\distance{a}{b}.\]

From Lemma~\ref{lemma:root-inactive}, given a sequence $\sigma$ of $n$ requests in the unit square, one can compute an optimal solution to the $k$-SP problem on $\sigma$ by computing a minimum-cost $\mcC^*$-extended $(n-k)$-matching on $\mcG_\sigma$. Given a partitioning $\mcC$, to compute a minimum-cost $\mcC$-extended $(n-k)$-matching, similar to the Hungarian algorithm, one can start from an empty extended matching $M^\mcC$ and iteratively augment $M^\mcC$ along a minimum net-cost augmenting path. In the next section, we present a primal-dual framework that our algorithm uses to efficiently compute minimum net-cost augmenting paths.

\subsection{A Constrained Dual Formulation for Extended Matchings}\label{sec:dual}
Suppose $\mcC$ is a set of cells of $\mcH$ partitioning the root cell $\cell^*$. Consider a $\mcC$-extended matching $M^\mcC=(M, B^\mcC)$ on $\mcG_\sigma$ along with a set of non-negative dual weights $y:A\cup B\rightarrow \mbR_{\ge0}$. Let $A_F$ be the set of free points of $A$ with respect to $M^\mcC$. We say that $M^{\mcC}, y(\cdot)$ is \emph{feasible} if,
\begin{align}
    y(b) - y(a) &\le \distance{a}{b},&  \quad \forall(a,b)&\in E,\label{eq:dualfeasibility-non-matching-time}\\
    y(b) - y(a) &= \distance{a}{b},& \quad \forall(a,b)&\in M,\label{eq:dualfeasibility-matching-time}\\
    y(b) &\le \distance{b}{\mcC},&  \quad \forall b&\in B,\label{eq:dualfeasibility-b}\\
    y(b) &= \distance{b}{\mcC},&  \quad \forall b&\in B^\mcC,\label{eq:dualfeasibility-b-admissible}\\
    y(a) &= 0,& \quad \forall a&\in A_F.\label{eq:dualfeasibility-a-free}
\end{align}

For any  edge $(a,b)\in E$, the \emph{slack} of $(a,b)$ is defined as $s(a,b):=\distance{a}{b} - y(b) + y(a)$. The edge $(a,b)$ is \emph{admissible} if $s(a,b)=0$. For any point $b\in B$, the slack of $b$ is defined as $s(b):=\distance{b}{\mcC} - y(b)$. For any feasible extended matching $M^{\mcC}, y(\cdot)$, the slack of every edge as well as every point $b \in B$ is non-negative.
Recall that an augmenting path $P$ starts at a free point $b\in B$ and ends at (i) a free point $a\in A$ or (ii) a point $b'\in B$. The path $P$ is \emph{admissible} if all edges of $P$ are admissible and in case (ii), the slack of the end-point $b'$ is $s(b')=0$.
The following properties of extended feasible matchings are critical in the design of an efficient and correct algorithm.

\begin{restatable}{lemma}{netcost}\label{lemma:augpath}
Given a feasible $\mcC$-extended $t$-matching $M^{\mcC} = (M, B^{\mcC})$ and a set of non-negative dual weights $y(\cdot)$ on $A\cup B$, let $P$ be a minimum net-cost augmenting path with respect to $M^{\mcC}$. Let, for any $\cell \in \mcC$, $y_\cell=\max_{b' \in B_{\cell}} y(b')$.
Then, 
\begin{itemize}
\item[(a)] all points of $P$ lie inside a single cell of $\mcC$, and
\item[(b)] if, for every cell $\cell \in \mcC$, $y_\cell \le \phi(P)$ and for all free points $b\in B_\cell$, $y(b) = y_\cell$, then $M^\mcC$ is a minimum-cost extended $t$-matching.
\end{itemize}
\end{restatable}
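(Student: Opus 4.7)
My plan for part (a) is to exploit the telescoping identity
\[\phi(Q) = y(b_0) + \sum_{(u,v)\in Q\setminus M} s(u,v) + s(b_{\text{end}})\cdot\mathbf{1}[Q \text{ ends with a boundary match}],\]
valid for every augmenting path $Q$ starting at a free vertex $b_0\in B$, obtained by summing the dual equality $y(b)-y(a)=d(a,b)$ along matching edges.  Since all slacks are non-negative, minimality forces every minimum-net-cost augmenting path to be admissible: each non-matching slack along it vanishes, and if it ends by boundary-matching some $b_k$ then $s(b_k)=0$.  Among all minimum-net-cost augmenting paths I choose $P$ to have the fewest edges, and set $\cell := \cell_{b_0}$.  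Suppose for contradiction $P$ exits $\cell$ for the first time via an edge $e$.  If $e=(b_i,a_{i+1})$ is a non-matching edge with $b_i\in\cell$ and $a_{i+1}\notin\cell$, then admissibility $d(b_i,a_{i+1})=y(b_i)-y(a_{i+1})$, combined with $d(b_i,a_{i+1})\ge d(b_i,\mcC)$ and $y(b_i)\le d(b_i,\mcC)$, pins $y(b_i)=d(b_i,\mcC)$ (hence $s(b_i)=0$) and $y(a_{i+1})=0$.  Truncating $P$ at $b_i$ and boundary-matching it then yields an admissible path inside $\cell$ with the same net cost but strictly fewer edges, contradicting the tiebreaker.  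If instead $e=(a_i,b_i)$ is a matching edge with $a_i\in\cell$ and $b_i\notin\cell$, the chain $y(b_i)-y(a_i)=d(a_i,b_i)\ge d(b_i,\mcC)\ge y(b_i)$ forces $y(a_i)=0$ and places $a_i$ on the boundary of $b_i$'s cell, a degeneracy that I rule out via a standard general-position assumption---the divider-selection of Section~\ref{sec:hierarchical} has positive-measure freedom in the choice of $x^*$ to ensure no input point lies on any cell boundary.

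For part (b), the plan is weak LP duality.  The complementary-slackness equalities yield
\[w_\mcC(M^\mcC) = \sum_{b\in\tilde{B}}y(b) - \sum_{a\in\tilde{A}}y(a),\]
where $\tilde{B}\subseteq B$ collects the matched and boundary-matched $B$-vertices (with $|\tilde{B}|=t$) and $\tilde{A}\subseteq A$ the matched $A$-vertices.  For any competing $\mcC$-extended $t$-matching $M'^\mcC=(M',B'^\mcC)$, the dual-feasibility inequalities give $w_\mcC(M'^\mcC)\ge \sum_{b\in\tilde{B}'}y(b) - \sum_{a\in\tilde{A}'}y(a)$.  Subtracting, the desired $w_\mcC(M'^\mcC)\ge w_\mcC(M^\mcC)$ reduces to
\[\sum_{b\in\tilde{B}'\setminus\tilde{B}} y(b) + \sum_{a\in\tilde{A}\setminus\tilde{A}'}y(a) \;\ge\; \sum_{b\in\tilde{B}\setminus\tilde{B}'}y(b) + \sum_{a\in\tilde{A}'\setminus\tilde{A}}y(a).\]
Each $a\in\tilde{A}'\setminus\tilde{A}$ is free in $M^\mcC$, so $y(a)=0$ removes that term; each $b\in\tilde{B}'\setminus\tilde{B}$ is free in $M^\mcC$, giving $y(b)=y_{\cell_b}$; and each $b\in\tilde{B}\setminus\tilde{B}'$ satisfies $y(b)\le y_{\cell_b}$ by the definition of $y_\cell$.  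To finish I decompose the symmetric difference $\tilde{M}\oplus \tilde{M}'$ (in the bipartite graph augmented by a boundary-dummy $b^*$ for each $b\in B$) into alternating paths and even cycles.  Each alternating path, appropriately oriented, reads as an augmenting path with respect to $M^\mcC$ and therefore has net cost at least $\phi(P)$; combined with the ceiling $y_\cell\le\phi(P)$ on the dual weights, the positive contributions from $\tilde{B}'\setminus\tilde{B}$ dominate the negative contributions from $\tilde{B}\setminus\tilde{B}'$ term by term.

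The main obstacle I anticipate is Case B of part (a): excluding matching-edge boundary crossings requires a clean general-position argument, leveraging the freedom in the divider-selection of Section~\ref{sec:hierarchical} (which allows choosing $x^*$ within a positive-measure interval) to ensure no input point ever lies on a cell boundary.  A secondary subtlety in part (b) lies in orienting and pairing the exchange paths consistently when boundary-matches differ between $M^\mcC$ and $M'^\mcC$, since boundary-dummy vertices participate asymmetrically in the augmented bipartition.
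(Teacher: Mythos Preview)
Your argument for part (a) contains a genuine gap. The claim that ``minimality forces every minimum-net-cost augmenting path to be admissible'' is false. The telescoping identity $\phi(Q)=y(b_0)+\sum_{e\in Q\setminus M}s(e)+s(b_{\text{end}})\cdot\mathbf{1}[\cdots]$ is correct, but minimality of $\phi(P)$ over the \emph{discrete} set of augmenting paths does not force the slacks along $P$ to vanish: there is no continuous deformation available that would let you ``shave off'' a positive slack while keeping a valid path. A trivial counterexample: one free $b_0$ with $y(b_0)=2$, one free $a_1$ with $y(a_1)=0$, a single edge of cost $5$; the unique augmenting path has slack $3$ and is certainly not admissible. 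Since your entire Case~A analysis (deriving $y(b_i)=\mathrm d(b_i,\mcC)$ and $y(a_{i+1})=0$) rests on admissibility of the crossing edge, it collapses.

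The paper's proof of (a) avoids this entirely. It first establishes separately (Lemma~\ref{lemma:no-cross}) that no \emph{matching} edge of a feasible extended matching can cross a cell boundary, so the first crossing edge along $P$ is necessarily a non-matching edge $(b_i,a_i)$. Then it compares net-costs directly: the suffix $P'=\langle a_i,\ldots\rangle$ has $\phi(P')\ge 0$ (from non-negative slacks and $y(a_m)=0$), and $\mathrm d(b_i,\mcC)\le \mathrm d(b_i,a_i)$ because $a_i$ lies outside $\cell$; hence the truncation $P''=\langle b_1,\ldots,b_i\rangle$ boundary-matched at $b_i$ satisfies $\phi(P'')+\mathrm d(b_i,\mcC)\le\phi(P)$, contradicting minimality (or, with your shortest-path tiebreaker, minimality-plus-shortest). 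No admissibility is invoked. Your Case~B is also unnecessary once Lemma~\ref{lemma:no-cross} is in hand, so the general-position maneuver on divider placement is not needed.

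For part (b), your weak-duality approach is in the right spirit but is both more involved and not fully justified as written: the assertion that each alternating component of the symmetric difference ``reads as an augmenting path with respect to $M^\mcC$'' needs care, since such a component may begin at a vertex that is matched (not free) in $M^\mcC$, and the boundary-dummy bookkeeping you allude to is nontrivial. The paper sidesteps all of this by first replacing $y(\cdot)$ with new duals $y'(\cdot)$ (via Lemma~\ref{lemma:fresh_duals}/\ref{lemma:adjust_duals}) for which $M^\mcC,y'(\cdot)$ remains feasible, every free $b$ has $y'(b)=\phi(P)$, and every $b$ has $y'(b)\le\phi(P)$. With a single global ceiling $y_{\max}=\phi(P)$, the comparison $w_\mcC(M^\mcC)=\sum_b y'(b)-\sum_a y'(a)-|B_F|\,y_{\max}\le w_\mcC(\hat M^\mcC)$ goes through in two lines with no path decomposition at all.
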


\vspace{1em}
The property described in Lemma~\ref{lemma:augpath}(a) is important for the design of an efficient algorithm. Unlike the Hungarian algorithm, which searches the entire graph for the minimum net-cost augmenting path, our algorithm can find the minimum net-cost augmenting path by searching for the cheapest augmenting path inside each cell $\cell \in\mcC$ and then taking the smallest among them. Thus, we can replace a global search with a search inside each cell of $\mcC$.
The property in Lemma~\ref{lemma:augpath}(b) is important for the design of a correct algorithm since it provides conditions under which an extended matching is a minimum-cost extended matching. Our algorithm is designed to maintain these conditions as invariants during its execution.  
Lemma~\ref{lemma:fresh_duals} provides a method to update the dual weights, which will be essential during the process of merging cells.

\begin{restatable}{lemma}{updateDuals}\label{lemma:fresh_duals}
        Suppose $M^\mcC, y(\cdot)$ is a feasible $\mcC$-extended matching and $P$ is a minimum net-cost augmenting path. For any cell $\cell\in\mcC$, define $y_\cell=\max_{b' \in B_{\cell}} y(b')$, and suppose $y_\cell\le \phi(P)$ and $y(b_f)=y_\cell$ for all free points $b_f\in B_\cell$. Then, one can update the dual weights in $\tilde{O}(n_\cell\Phi(n_\cell))$ time such that $M^\mcC, y(\cdot)$ remains feasible, $y(b)\le \phi(P)$ for all $b\in B_\cell$, and $y(b_f)=\phi(P)$ for all free points $b_f\in B_\cell$.
\end{restatable}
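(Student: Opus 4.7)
Define $\delta := \phi(P) - y_\cell \ge 0$. The plan is to perform a Hungarian-style multi-source shortest-path search inside $\cell$ and use the resulting distances to raise the duals of points in $A_\cell \cup B_\cell$: raising every free $b_f \in B_\cell$ by exactly $\delta$, and raising others by just enough to preserve feasibility.

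Concretely, run multi-source Dijkstra from the free set $F_B := \{b \in B_\cell : b \text{ is free in } M^\mcC\}$ to compute $\ell(v) \in [0,\infty]$ for every $v \in A_\cell \cup B_\cell$, where a non-matching edge $(a,b)$ carries weight $s(a,b)$ and a matching edge carries weight $0$. Because the algorithm augments only along paths internal to a single cell of $\mcC$ (Lemma~\ref{lemma:augpath}(a)) starting from an empty matching, every edge of $M$ lies inside one current cell, so the search may be restricted to $A_\cell \cup B_\cell$. Implementing the extract-min step with Vaidya's DWNN data structure keyed on $y(\cdot)+\ell(\cdot)$ yields $\tilde O(\Phi(n_\cell))$ per extraction and a total of $\tilde O(n_\cell \Phi(n_\cell))$ time. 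Setting $\tilde\ell(v) := \min(\ell(v),\delta)$, update $y(v) \leftarrow y(v) + (\delta - \tilde\ell(v))$ for every $v \in A_\cell \cup B_\cell$, leaving all other duals untouched.

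The conclusions are then verified constraint-by-constraint. Any free $b_f$ satisfies $\ell(b_f)=0$, giving new value $y_\cell + \delta = \phi(P)$; every other $b \in B_\cell$ is raised by at most $\delta$ while $y(b) \le y_\cell$, ensuring $y(b) \le \phi(P)$. Matching equality $y(b)-y(a)=\distance{a}{b}$ is preserved because in the alternating graph $a$ is reached only through its matching partner $b$ via a zero-weight edge, forcing $\ell(a)=\ell(b)$ and hence identical raises. For non-matching $(a,b)$ inside $\cell$, Dijkstra's triangle inequality $\ell(a) \le \ell(b)+s(a,b)$ upgrades to $\tilde\ell(a) \le \tilde\ell(b)+s(a,b)$, preserving $y(b)-y(a) \le \distance{a}{b}$. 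For free $a \in A_\cell$, the global minimality of $\phi(P)$ forces every within-$\cell$ augmenting path ending at $a$ to have net cost $\ell(a)+y_\cell \ge \phi(P)$, hence $\ell(a)\ge \delta$, so $\tilde\ell(a)=\delta$ and the raise is $0$, preserving $y(a)=0$. For every $b \in B_\cell$ the analogous augmenting path that boundary-matches $b$ has net cost $\ell(b)+s(b)+y_\cell$, so $\ell(b)+s(b) \ge \delta$; this bound survives the cap at $\delta$ and gives raise $\le s(b)$, preserving $y(b)\le \distance{b}{\mcC}$; for $b \in B^\mcC$ the search cannot enter $b$ at all (no matching edge to continue), so $y(b)$ is untouched and the equality $y(b)=\distance{b}{\mcC}$ is preserved.

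Finally, edges of $\mcG_\sigma$ crossing $\partial\cell$ must be checked since only $\cell$-internal duals move. For a non-matching edge with its $A$-endpoint in $\cell$ the slack only increases; for one whose $B$-endpoint $b \in B_\cell$ has $a$ outside $\cell$, the triangle inequality gives $\distance{a}{b} \ge \distance{b}{\mcC}$, so $s(a,b) \ge s(b) \ge $ raise$(b)$, keeping slack non-negative; no matching edge crosses $\partial\cell$ by the invariant. The main obstacle is precisely the coupling between a strictly local search and the global quantity $\phi(P)$: the two lower bounds $\ell(a) \ge \delta$ (for free $a \in A_\cell$) and $\ell(b)+s(b) \ge \delta$ (for general $b \in B_\cell$) are what prevent the raise from breaking the equality-type constraints $y(a)=0$ and $y(b)=\distance{b}{\mcC}$, and both crucially need $P$ to be minimum across \emph{all} cells of $\mcC$, not just within $\cell$.
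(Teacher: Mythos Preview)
Your proposal is correct and follows essentially the same route as the paper: Dijkstra on the residual graph restricted to $\cell$, a capped dual raise up to $\kappa=\phi(P)$, and a constraint-by-constraint feasibility check, with the $\tilde O(n_\cell\Phi(n_\cell))$ bound coming from the DWNN data structure. Two cosmetic points: for a matching pair $(a,b)\in M$ it is $b$ (not $a$) whose only incoming residual edge is the zero-weight edge from its partner---your conclusion $\ell(a)=\ell(b)$ is unaffected---and the fact that no matching edge crosses $\partial\cell$ follows directly from feasibility via condition~\eqref{eq:dualfeasibility-b} (as the paper's auxiliary Lemma~\ref{lemma:no-cross}) rather than from the algorithm's history, which avoids any appearance of circularity.
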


\vspace{0.5em}
The next lemma provides critical properties that allow for correctly and efficiently merging cells in $\mcC$.

\begin{restatable}{lemma}{combination}\label{lemma:combination}
    For a cell $\cell$ of $\mcH$, suppose $\cell'$ and $\cell''$ denote its two children, and let $\mcC$ denote a partitioning containing $\cell'$ and $\cell''$. Let $\mcC'=\mcC\cup\{\cell\}\setminus\{\cell',\cell''\}$. Given a feasible $\mcC$-extended matching $(M, B^\mcC), y(\cdot)$, let $B_\cell^\mcC\subseteq B^\mcC$ denote the subset of boundary-matched points that are matched to the divider $\Gamma_\cell$ of $\cell$. Then,
    \begin{itemize}
        \item[(a)] the $\mcC'$-extended matching $(M, B^\mcC\setminus B_\cell^\mcC), y(\cdot)$ is also feasible, and,
        \item[(b)]  $|B_\cell^\mcC|=O(n^{4/5})$.
    \end{itemize}
\end{restatable}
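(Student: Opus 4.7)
For part (a), I plan to verify each dual-feasibility condition \eqref{eq:dualfeasibility-non-matching-time}--\eqref{eq:dualfeasibility-a-free} for the new extended matching $(M, B^\mcC\setminus B_\cell^\mcC)$ under the coarser partition $\mcC'$. Since $M$ and $y(\cdot)$ are unchanged, \eqref{eq:dualfeasibility-non-matching-time}, \eqref{eq:dualfeasibility-matching-time}, and \eqref{eq:dualfeasibility-a-free}, which do not mention the partition, transfer verbatim, so only \eqref{eq:dualfeasibility-b} and \eqref{eq:dualfeasibility-b-admissible} require attention. The geometric crux is the monotonicity $\distance{b}{\mcC'}\ge\distance{b}{\mcC}$ for every $b\in\cell'\cup\cell''$, with equality whenever $\Gamma_\cell$ is not a nearest side of $b$'s child cell. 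This holds because the parent cell's boundary is obtained from the child cell's boundary by replacing $\Gamma_\cell$ with the opposite side of $\cell$, which lies strictly farther from $b$. Monotonicity gives \eqref{eq:dualfeasibility-b}, and the equality case applied to the points of $B^\mcC\setminus B_\cell^\mcC$ (whose closest side is by definition not $\Gamma_\cell$) gives \eqref{eq:dualfeasibility-b-admissible}.

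For part (b), my plan is to combine Lemma~\ref{lemma:geometric-matching} applied to the parent $\cell$ with a path-based primal-dual telescoping argument. That lemma supplies a matching $M'\subseteq A_\cell\times B_\cell$ of total cost $O(\ell_\cell n_\cell^{3/5})$ that covers all but $O(n_\cell^{4/5})\le O(n^{4/5})$ points of $B_\cell$, so I can restrict to $b\in B_\cell^\mcC$ covered by $M'$. I split these by distance to $\Gamma_\cell$: Lemma~\ref{lemma:margin} bounds those within $\ell_\cell\lambda$ of $\Gamma_\cell$ by $O(n_\cell\lambda)\le O(n^{4/5})$, so it remains to bound the ``far'' points with $\distance{b}{\Gamma_\cell}\ge\ell_\cell\lambda$. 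For each such $b$ with $M'$-partner $a\in A_\cell$, I trace the alternating path $P$ in $M\cup M'$ starting at $b$ and apply \eqref{eq:dualfeasibility-non-matching-time} to each $M'$-edge and \eqref{eq:dualfeasibility-matching-time} to each $M$-edge of $P$; the intermediate dual weights telescope to give
\[
\distance{b}{\Gamma_\cell}\;=\;y(b)\;\le\;\sum_{e\in P\cap M'}\cost{e}\;-\;\sum_{e\in P\cap M}\cost{e}\;+\;(\text{endpoint term}),
\]
where the endpoint term vanishes whenever $P$ ends at an $A$-vertex free in $M$ (via \eqref{eq:dualfeasibility-a-free}). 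Since $M$ and $M'$ are matchings, the paths from distinct far points are edge-disjoint; in this favorable case summing gives $\sum_{\text{far}}\distance{b}{\Gamma_\cell}\le\cost{M'}=O(\ell_\cell n_\cell^{3/5})$, and dividing by the per-point lower bound $\ell_\cell\lambda$ produces the claimed count $O(n_\cell^{3/5}n^{1/5})\le O(n^{4/5})$.

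The main obstacle is bounding the endpoint contribution when $P$ terminates at a $B$-vertex $v\in M\setminus M'$, where the inequality gains an additive $y(v)\le\distance{v}{\mcC}$ that is a priori too large. My plan is to invoke the minimum-cost property of $M^\mcC$ on an auxiliary size-preserving modification along $P$ that boundary-matches $v$ inside its own cell of $\mcC$, yielding an inequality that bounds $y(v)$ by the path-cost difference along a suffix of $P$. The $v$'s arising this way split into those in $B_\cell$ --- of which there are only $O(n_\cell^{4/5})$ since $v\notin M'$ --- and those lying outside $B_\cell$ whose path exits $\cell$ through an $M$-edge; the latter are absorbed by running the same argument at the cells of $\mcC$ adjacent to $\cell$, and edge-disjointness of paths in $M\cup M'$ ensures each $v$ is charged at most once. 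Ensuring that these charges telescope properly so the total endpoint contribution is absorbed into the $O(\ell_\cell n_\cell^{3/5})$ budget is the most delicate bookkeeping step of the proof.
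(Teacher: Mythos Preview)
Your approach to part (a) is correct and matches the paper's proof.

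For part (b), your overall structure---use the low-cost matching $M'$ from Lemma~\ref{lemma:geometric-matching}, look at alternating/augmenting paths in $M\oplus M'$, telescope the duals along augmenting paths to get $y(b)\le\phi(P)$, sum to bound by $w(M')$, then split into close/far---is exactly the paper's argument. The difference is that you have over-complicated the endpoint analysis and as a result propose to use a hypothesis that you do not have.

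The key fact you are missing is this: feasibility alone already forces every edge of $M$ to lie inside a single cell of $\mcC$. Indeed, if $(a,b)\in M$ with $a$ and $b$ in different cells of $\mcC$, then by \eqref{eq:dualfeasibility-matching-time} we have $y(b)=\distance{a}{b}+y(a)\ge\distance{a}{b}>\distance{b}{\mcC}$, contradicting \eqref{eq:dualfeasibility-b}. Since $M'$ is constructed entirely inside $\cell$, every path in $M\oplus M'$ starting from a point of $B_\cell^\mcC$ stays inside $\cell$. Your entire ``paths exiting $\cell$ through an $M$-edge'' case is therefore vacuous, and with it the need to argue at adjacent cells.

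Once this is observed, the paths ending at a $B$-vertex $v$ are exactly those with $v$ free in $M'$, and you already note there are at most $|F(M')|=O(n_\cell^{4/5})$ of them; you simply \emph{count} these paths and discard them, rather than trying to telescope through them. For the remaining paths (those ending at some $a\in A_\cell$ free in $M$), the endpoint term vanishes by \eqref{eq:dualfeasibility-a-free}, and your far/close split goes through cleanly. In particular, there is no need to invoke any ``minimum-cost property of $M^\mcC$''---which is just as well, since the lemma only assumes feasibility, so that property is not available to you.
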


\vspace{1em}

From Lemma~\ref{lemma:combination}(a), erasing a divider  does not cause the feasibility conditions to be violated. The proof of Lemma~\ref{lemma:combination}(a) relies on the fact that when we erase the divider of a cell in $\mcC$, the RHS of~\eqref{eq:dualfeasibility-b} will only increase and so it is not violated. Despite preserving the feasibility conditions, erasing the boundary may result in the violation of Lemma~\ref{lemma:augpath}(b), i.e., the matching may no longer be a minimum-cost extended matching. 
In our algorithm in Section~\ref{sec:k-seq}, we describe a process to adjust the matching $M^\mcC$ and dual weights $y(\cdot)$ and obtain a minimum-cost extended matching.

\myparagraph{Residual Graph.}
Similar to the Hungarian algorithm, we define a residual graph that assists in finding the minimum net-cost augmenting path. Consider a feasible $\mcC$-extended matching $M^{\mcC}=(M,B^{\mcC})$ along with a set of dual weights $y(\cdot)$ on points of $A\cup B$. For each cell $\cell \in \mcC$, we define a residual graph $\mcG_{\cell}$. The vertex set of $\mcG_{\cell}$ is a source vertex $s$ and the points in $A_{\cell}\cup B_{\cell}$. For any edge $(a,b)\in E$ inside $\cell$, if $(a,b)$ is an edge in $M$ (resp. not an edge in $M$), there is an edge directed from $a$ to $b$ (resp. from $b$ to $a$) with a weight $s(a,b)$ in $\mcG_\cell$. Furthermore, there is an edge directed from $s$ to every free point $b\in B$ with a weight $y(b)$.

\section{A Sub-Quadratic Algorithm for the \texorpdfstring{$k$}{}-SP and \texorpdfstring{$k$}{}-SPI Problems}\label{sec:k-seq}
In this section, we describe an algorithm that, given a sequence $\requests$ of $n$ requests in $2$-dimensions, computes the optimal solution to the $k$-SP and $k$-SPI problems in $\tilde{O}(n^{1.8}\Phi(n)\log(n\Delta))$ time. We begin by describing our algorithm for the $k$-SP problem and discuss how it can be extended to the $k$-SPI problem in Section~\ref{sec:kspi-extension-sub}.

\subsection{Algorithm for the \texorpdfstring{$k$}{}-SP Problem}
Initialize $\mcC$ to the leaf cells of $\mcH$. Let $M^{\mcC}=(M, B^{\mcC})$ be the extended matching maintained by the algorithm and initialized to $M=\emptyset$ and $B^{\mcC}=\emptyset$. For each point $v \in A\cup B$, let $y(v)$ denote its dual weight initialized to $y(v)=0$. Let $B_F$, initialized to $B$, be the free points of $B$ with respect to $M^{\mcC}$. For each cell $\cell$ that contains at least one free point $b \in B_F$, let $P_\cell$ denote the minimum net-cost augmenting path inside $\cell$. Initially, since $\cell$ is a leaf of $\mcH$, it contains only one point $b \in B_F$, and therefore, $P_{\cell}$ is this point with a net-cost equal to $\distance{b}{\mcC}$. Our algorithm maintains a priority queue $\pq$ storing every leaf cell $\cell\in\mcC$ with at least one free vertex with a key of $\phi(P_\cell)$. 
At any time during the execution of our algorithm, let $\cell_{\min}$ be the cell with the smallest key in $\pq$ and let $\ymax$ be the key of $\cell_{\min}$.
Execute the following steps until $\pq$ becomes empty:

\begin{itemize}
    \item While $|B_F| > k$, 
    \begin{itemize}
        \item {\it Extended Hungarian search step:} Extract the cell $\cell$ with the minimum key of $\ymax$ from $\pq$. Augment the matching $M^\mcC$ along $P_\cell$ and update the key of $\cell$ in $\pq$ (See Section~\ref{subsubsec:global} for details).
    \end{itemize}
    \item {\it Merge step:} If $\mcC=\{\cell^*\}$, remove $\cell^*$ from $\pq$ and return the matching $M$ of $M^{\mcC}$. Otherwise, pick a cell $\cell'\in\mcC$ with the smallest perimeter and let $\cell$ and $\cell''$ be the parent and sibling of $\cell'$ in $\mcH$, respectively. Erase the divider of $\cell$, i.e., set $\mcC=\mcC\setminus\{\cell',\cell''\}\cup \{\cell\}$. 
    We execute a Merge procedure that updates the matching $M^\mcC$ inside the cell $\cell$ so that $\phi(P_\cell)$ is at least $\ymax$ (See Section~\ref{subsubsec:local} for details). At this point, the size of the updated extended matching $M^{\mcC}$ may not be $(n-k)$, i.e., there may be more than $k$ free points.
\end{itemize}

\paragraph{Invariants.} For each cell $\cell \in \mcC$, let $y_\cell= \max_{b\in B_\cell}y(b)$. During the execution of our algorithm,
\begin{itemize}
    \item[(I1)] the extended matching $M^\mcC, y(\cdot)$ is feasible,
    \item[(I2)] For each cell $\cell \in \mcC$, $y_\cell \le \ymax$ and for all free point $b\in B_\cell$, $y(b)=y_{\cell}$, and,
   \item[(I3)] The $\ymax$-value is non-decreasing. Furthermore, after each step of the algorithm, $\ymax$ denotes the smallest net-cost of all augmenting paths with respect to $M^{\mcC}$.
\end{itemize}

\subsubsection{Details of the Extended Hungarian Search Step}\label{subsubsec:global} 
Given a feasible extended matching $M^\mcC, y(\cdot)$  and a cell $\cell\in\mcC$, the extended Hungarian search procedure computes the minimum net-cost augmenting path $P_\cell$ and augments $M^\mcC$ along $P_\cell$. It then computes the new minimum net-cost augmenting path and updates the key of $\cell$ in $\pq$ to be its net-cost. 
This procedure is similar to the classical Hungarian search procedure executed on $\mcG_\cell$ and is mildly modified to include the augmenting paths that end at the boundary of $\cell$. 
Details of the procedure are as follows. 
\begin{enumerate}
\item {\it Update duals:} With $s$ as the source vertex, execute Dijkstra's shortest path algorithm on the residual graph $\mcG_\cell$. Let $P_v$ be the shortest path from $s$ to each $v\in A_\cell\cup B_\cell$ and let $\kappa_v$ be the cost of $P_v$. Let
\begin{equation}\label{eq:global-kappa}
    \kappa = \min\{ \min_{a\in \freeofcell{A}{\cell}}\kappa_a, \min_{b\in B_\cell}\kappa_b + s(b)\}.
\end{equation}
For any $v \in A_\cell\cup B_\cell$, if $\kappa_v < \kappa$, set $ y(v) \leftarrow y(v) +\kappa - \kappa_v$.
\item {\it Augment:} Let $u\in \freeofcell{A}{\cell}\cup B_\cell$ be the point realizing the minimum distance in Equation~\eqref{eq:global-kappa}. Let $P$ be the augmenting path obtained by removing $s$. Augment $M^\mcC$ along $P$. 
\item {\it Update key:} If $B_\cell$ has no free points, then remove $\cell$ from $\pq$. Otherwise,
\begin{enumerate}
    \item Recompute the residual graph $\mcG_{\cell}$ with respect to the updated matching,
    \item With $s$ as the source, execute the Dijkstra's shortest path algorithm on $\mcG_\cell$. For each $v \in A_\cell\cup B_\cell$, let $\kappa_v$ be the distance from $s$ to $v$.
    \item  Update the key of $\cell$ in $\pq$ to $\kappa_\cell = \min\{ \min_{a\in \freeofcell{A}{\cell}}\kappa_a, \min_{b\in B_\cell}\kappa_b + s(b)\}$.
\end{enumerate}   
\end{enumerate}

Lemma~\ref{lemma:hung_properties} establishes the properties of the extended Hungarian search procedure.
\begin{restatable}{lemma}{hungarianProp}\label{lemma:hung_properties}\label{lemma:Hungarian_path}
After the execution of the extended Hungarian search procedure for a cell $\cell$, the extended matching $M^\mcC, y(\cdot)$ remains feasible, $y(v)\le \ymax$ for all points $v \in A_{\cell}\cup B_{\cell}$, and $y(b_f)=\ymax$ for all free points $b_f\in B_\cell$. Furthermore, the path $P$ computed by the procedure is a minimum net-cost augmenting path inside $\cell$. After augmenting along $P$, the updated key for $\cell$ is the smallest net-cost of all augmenting paths inside $\cell$ and is at least $\ymax$.
\end{restatable}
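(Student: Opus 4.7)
The plan is to verify the four assertions of the lemma---feasibility preservation, the minimum net-cost property of $P$, the bounds $y(v) \le \ymax$ and $y(b_f) = \ymax$, and the updated-key lower bound---in sequence, using invariants (I1)--(I3) at entry together with Lemma~\ref{lemma:augpath}(b), which implies that the input $M^\mcC$ is already a minimum-cost extended $t$-matching. This minimality will drive the non-negativity of alternating-path net costs needed for the dual bounds.

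For feasibility, the edge constraints~\eqref{eq:dualfeasibility-non-matching-time} and~\eqref{eq:dualfeasibility-matching-time} follow from the standard Dijkstra inequalities $\kappa_a \le \kappa_b + s(a,b)$ on non-matching edges and $\kappa_a = \kappa_b$ on matching edges, so the update of $y(v)$ by $\kappa - \kappa_v$ keeps non-matching slacks non-negative and matching edges tight. The new ingredient is the boundary constraint~\eqref{eq:dualfeasibility-b}: the $+s(b)$ term inside~\eqref{eq:global-kappa} forces $\kappa \le \kappa_b + s(b)$, so that the update to $y(b)$ cannot exceed $\distance{b}{\mcC}$. Constraint~\eqref{eq:dualfeasibility-a-free} is preserved because the outer minimization over free $a$-vertices in~\eqref{eq:global-kappa} already forces $\kappa \le \kappa_a$, so free $a$-vertices are never updated. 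Any boundary-matched $b \in B^\mcC$ has no incoming matching-edge and no source edge in $\mcG_\cell$, so $\kappa_b = \infty$ and~\eqref{eq:dualfeasibility-b-admissible} is preserved. The augmentation step then preserves feasibility since, after the dual update, $P$ has zero total reduced cost and every edge it uses (together with the endpoint $b'$ in case (ii)) is admissible.

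For the minimum net-cost claim and the dual bounds, I would use a telescoping identity: along any directed $s \to b_0 \to a_1 \to b_1 \to \cdots \to v$ path in $\mcG_\cell$, matching-edge slacks vanish and the accumulated reduced cost telescopes to $y(v)$ plus $\phi$ of the underlying alternating path when $v$ is a free $a$ (where $y(a)=0$), and to $\phi$ of the underlying augmenting path minus $s(v)$ when the path terminates at $v = b \in B_\cell$ via a matching edge. Hence the minimum $\kappa$ from~\eqref{eq:global-kappa} equals $\phi(P_\cell)$, identifying $P$ as a minimum net-cost augmenting path inside $\cell$. The equality $y(b_f) = \ymax$ after the update is immediate: the direct source edge of weight $y(b_f)_{\mathrm{old}}$ gives $\kappa_{b_f} = y(b_f)_{\mathrm{old}}$, and the update resets $y(b_f)$ to $\kappa = \ymax$. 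For the general bound $y(v) \le \ymax$ I need $y(v)_{\mathrm{old}} \le \kappa_v$, i.e., that the net cost of every alternating path from a free $b_0$ to $v$ is non-negative. I would establish this by a \emph{swap argument}: re-routing the matching along such a path (also removing the old matching edge $(a_k, b^*)$ when $v = a_k$ is matched with partner $b^*$) produces another extended $t$-matching whose cost differs from $w_\mcC(M^\mcC)$ by the path's alternating net cost (and, in the matched-$a_k$ case, by an extra $-\distance{a_k}{b^*} \le 0$). Minimality of $M^\mcC$ then forces this difference to be non-negative, which is exactly $y(v)_{\mathrm{old}} \le \kappa_v$ after substituting the telescoping identity.

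The updated-key lower bound follows because every free $b_f \in B_\cell$ that remains free after augmentation still satisfies $y(b_f) = \ymax$, and every reachable vertex in the updated residual graph must be reached through one of these source edges; non-negativity of the remaining edge weights then gives $\kappa'_v \ge \ymax$ for every reachable $v$, so the recomputed key is at least $\ymax$. The step I expect to be hardest is the swap argument: it requires a case split on whether $v \in B$ or $v \in A$ is matched, together with careful bookkeeping so that the constructed comparison extended matching has the same size and boundary-matched set as $M^\mcC$, thereby legitimizing the appeal to minimality.
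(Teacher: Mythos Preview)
Your proposal is correct and largely parallels the paper's proof, but you take a genuinely different route for the key step of bounding $y(v)\le\ymax$ after the dual update. The paper's argument is a two-line consequence of invariant~(I2): since every free $b_f\in B_\cell$ has $y'(b_f)=y_\cell$ at entry, every source edge in $\mcG_\cell$ has weight exactly $y_\cell$, hence $\kappa_v\ge y_\cell$ for all reachable $v$; combined with $y'(v)\le y_\cell$ this gives $y(v)=y'(v)+\kappa-\kappa_v\le\kappa=\ymax$ directly. You instead invoke Lemma~\ref{lemma:augpath}(b) to certify that $M^\mcC$ is already a minimum-cost extended $t$-matching, and then run a swap argument to show $\phi_{\mathrm{alt}}(P')\ge 0$ for every alternating path $P'$, which via your telescoping identity yields $y'(v)\le\kappa_v$. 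Both arguments are valid; the paper's is shorter and avoids the case split on whether $v\in A$ is matched (and the attendant removal of $(a_k,b^*)$), while yours is more robust in that it does not rely on the free-point duals being uniform---it would go through even if~(I2) only guaranteed $y_\cell\le\ymax$ without the equality clause. Two minor points you leave implicit but which go through easily: feasibility of edges with exactly one endpoint in $\cell$ (handled by $y(b)\le\distance{b}{\mcC}\le\distance{a}{b}$ when $a\notin\cell$, and by $y(a)$ being non-decreasing when $b\notin\cell$), and the fact that the recomputed key \emph{equals} the minimum augmenting-path net-cost inside $\cell$ (this is the same telescoping identity you already set up, applied to the updated residual graph).
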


\subsubsection{Details of the Merge Step}\label{subsubsec:local} 
Given a feasible extended matching $M^\mcC=(M, B^\mcC), y(\cdot)$, any cell $\cell$ of $\mcH$ where both of its children $\cell'$ and $\cell''$ are in $\mcC$, and a value $\ymax$, the merge procedure uses the algorithm in Lemma~\ref{lemma:fresh_duals} to update the dual weights $y(\cdot)$ inside $\cell'$ (resp. $\cell''$)  so that $M^\mcC, y(\cdot)$ remains feasible, the dual weights of all points in $\cell'$ (resp. $\cell''$) are at most $\ymax$, and the dual weight of all free points $b'_f\in B_{\cell'}$ (resp. $b''_f\in B_{\cell''}$) is $y(b'_f)=\ymax$ (resp. $y(b''_f)=\ymax$). The procedure then updates $\mcC\leftarrow \mcC\cup \{\cell\}\setminus\{\cell', \cell''\}$ and makes the points matched to the divider $\Gamma_\cell$ free.  
While there exists a free point $b\in B_\cell$ with $y(b)<\ymax$,
\begin{enumerate}
    \item With $s$ as the source, execute Dijkstra's shortest path algorithm on the residual graph $\mcG_\cell$. For each $v \in A_\cell\cup B_\cell$, let $P_v$ be the shortest path from $s$ to $v$ in $\mcG_\cell$ and let $\kappa_v$ be its cost. Define
\begin{equation}\label{eq:local-kappa}
    \kappa = \min\{ \min_{a\in \freeofcell{A}{\cell}}\kappa_a, \min_{b\in B_\cell}\kappa_b + s(b), \min_{b\in B_\cell}\kappa_b + \ymax - y(b)\}.
\end{equation}

\item Let $u\in\freeofcell{A}{\cell}\cup B_\cell$ be the point realizing the minimum value in Equation~\eqref{eq:local-kappa}. Let $P$ be the path obtained by removing $s$ from the path $P_u$.
\begin{enumerate}
    \item If $u\in B_\cell$ and $\kappa=\kappa_u+\ymax-y(u)$ (i.e., $\kappa$ is determined by the third element in the RHS of Equation~\eqref{eq:local-kappa}), then $P$ is a path from a free point $b\in B_\cell$ to the matched point $u$. For each $v \in A_\cell\cup B_\cell$, if $\kappa_v < \kappa$, update its dual weight to $y(v)\leftarrow y(v)+\kappa - \kappa_v$. The path $P$ is an admissible alternating path with respect to the updated dual weights. Set $M\leftarrow M\oplus P$. Note that $u$ is now a free point with $y(u)=\ymax$.
    \item Otherwise, $P$ is an augmenting path. For each $v \in A_\cell\cup B_\cell$, if $\kappa_v < \kappa$, update its dual weight to $y(v)\leftarrow y(v)+\kappa - \kappa_v$. The path $P$ is an admissible augmenting path with respect to the updated dual weights. Augment $M$ along $P$.
\end{enumerate}
\end{enumerate}
At the end of this execution, all free points of $B$ in $\cell$ have a dual weight of $\ymax$. Finally, we update the key for $\cell$ by executing steps 3(a)--(c) from the extended Hungarian search step.

The following lemma states the useful properties of the merge step.

\begin{restatable}{lemma}{mergeProp}\label{lemma:merge_properties}
After the execution of the merge procedure on a cell $\cell$,
the updated extended matching $M^{\mcC}, y(\cdot)$ is feasible, the dual weights $y(v)$ for every $v \in A_\cell\cup B_\cell$ is at most $\ymax$, and $y(b_f)=\ymax$ for all free points $b_f\in B_\cell$. Furthermore, the updated key for $\cell$ is the smallest net-cost of all augmenting paths within $\cell$ and is at least $\ymax$.
\end{restatable}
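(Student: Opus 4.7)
The plan is to mirror the proof structure of Lemma~\ref{lemma:Hungarian_path}, extending it to account for the cap introduced by the third term in Equation~\eqref{eq:local-kappa} and for the two-case dispatch in each iteration of the while loop.

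I would begin by verifying the pre-loop setup. Invoking Lemma~\ref{lemma:fresh_duals} on each child $\cell'$ and $\cell''$ (with the current value of $\ymax$) brings both sub-cells to a state where the extended matching is feasible, every dual is at most $\ymax$, and every free point of $B$ in the child has dual exactly $\ymax$. Lemma~\ref{lemma:combination}(a) then certifies that switching $\mcC$ to contain $\cell$ instead of its two children---together with freeing the points previously matched to $\divider{\cell}$---preserves feasibility, because the new $\distance{b}{\mcC}$ can only be at least as large as the old one, slackening constraint~\eqref{eq:dualfeasibility-b}. After this setup, the only free points of $B_\cell$ whose dual is below $\ymax$ are the newly freed divider-matched points.

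Next I would handle the while loop by induction over iterations. A standard primal-dual argument, essentially identical to the one used for Lemma~\ref{lemma:Hungarian_path}, shows that the Dijkstra-based update $y(v)\leftarrow y(v)+(\kappa-\kappa_v)^+$ preserves every feasibility condition: matching edges stay tight because $\kappa_a=\kappa_b$ whenever $(a,b)\in M$, non-matching edges retain non-negative slack from $\kappa_a\le\kappa_b+s(a,b)$, the first term of~\eqref{eq:local-kappa} enforces~\eqref{eq:dualfeasibility-a-free} for every free $a$, and the second term enforces~\eqref{eq:dualfeasibility-b} and~\eqref{eq:dualfeasibility-b-admissible} (for $b\in B^\mcC$, $s(b)=0$ forces $\kappa\le\kappa_b$, so $y(b)$ is not updated). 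The new third term, $\min_{b\in B_\cell}\kappa_b+\ymax-y(b)$, serves as a cap: it guarantees $\kappa\le\kappa_v+\ymax-y(v)$ for every $v$ reachable from $s$ in $\mcG_\cell$, so $y(v)$ remains at most $\ymax$ after the update; for a matched $a$ the bound transfers from its partner $b$ via $\kappa_a=\kappa_b$ together with the matching-edge equality $y(a)=y(b)-\distance{a}{b}$. Admissibility of the extracted path $P$ then yields feasibility of the post-XOR matching by a uniform argument that covers case~(b) (a standard augmenting path to a free $a$ or to a boundary-bound $b$) and case~(a) (the alternating-path swap in which a matched $u$ becomes free with dual exactly $\ymax$).

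Termination of the loop I would argue via the potential $\Psi=\sum_{b_f\in B_F\cap B_\cell}(\ymax-y(b_f))$, which strictly decreases in each iteration: either the originating free $b$ leaves $B_F$ (removing its $\ymax-y(b)$ contribution) or a matched $u$ with $y(u)<\ymax$ enters $B_F$ at dual $\ymax$ (contributing zero while removing the old gap); in either event the drop is at least the minimum positive dual gap, which is a finite supply over the remainder of the execution. After the loop, every free $b\in B_\cell$ has $y(b)=\ymax$, establishing the first three claims of the lemma. The final key update executes steps 3(a)--(c) of the extended Hungarian search procedure unchanged, and so the corresponding part of Lemma~\ref{lemma:Hungarian_path} applies verbatim to give the fourth claim. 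I expect the main obstacle to be the case-(a) analysis: simultaneously checking that the partial dual update combined with the alternating-path XOR preserves feasibility, keeps $y(v)\le\ymax$ across all affected vertices, and yields $y(u)=\ymax$ for the newly freed $u$ requires the tight algebraic interplay between the third term of~\eqref{eq:local-kappa} and the telescoping structure of alternating-path weights in the residual graph, a step with no direct analogue in the proof of Lemma~\ref{lemma:Hungarian_path}.
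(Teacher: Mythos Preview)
Your proposal is correct and follows essentially the same route as the paper: invoke Lemma~\ref{lemma:fresh_duals} on each child and Lemma~\ref{lemma:combination}(a) for the setup, then run the standard primal--dual verification iteration by iteration (with the third term of~\eqref{eq:local-kappa} supplying the cap $y(v)\le\ymax$), and finally reuse the key-update analysis from Lemma~\ref{lemma:Hungarian_path}. The only notable divergence is your termination argument: the paper does not use a real-valued potential $\Psi$ but instead tracks the integer count $|\{b_f\in B_F\cap B_\cell:y(b_f)<\ymax\}|$, observing that it drops by exactly one per iteration (the originating free point $b^*$ of $P$ always satisfies $y(b^*)=\kappa_{b^*}\le\kappa$, so when $\kappa<\ymax$ it lies in this set and is removed, while when $\kappa=\ymax$ the dual update empties the set in one shot). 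Your potential argument is valid but the phrase ``finite supply of dual gaps'' is not quite a proof of finite termination; the integer count makes this immediate, and your ``either/or'' phrasing for the two cases is slightly off since in both case~(a) and case~(b) the originating free $b$ leaves $B_F$.
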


\subsection{Analysis}\label{subsec:geometric-analysis}

We begin by showing in Section~\ref{sec:correctness} that the three invariants (I1)--(I3) hold during the execution of our algorithm and use them to show the correctness of our algorithm. We then show in Section~\ref{subsec:efficiency} that the running time of our algorithm is $\tilde{O}(n^{9/5}\Phi(n)\log n\Delta)$.

\subsubsection{Correctness}\label{sec:correctness}

Our algorithm initializes $M^{\mcC}$ with a feasible $\mcC$-extended matching and sets all dual weights and $\ymax$ to $0$. Therefore, (I1) and (I2) hold at the start of the algorithm. The extended Hungarian search (Lemma~\ref{lemma:hung_properties}) as well as the merge step (Lemma~\ref{lemma:merge_properties}) do not violate the feasibility of the extended matching and therefore (I1) holds during the execution of the algorithm. The extended Hungarian search (Lemma~\ref{lemma:hung_properties}) and the merge (Lemma~\ref{lemma:merge_properties}) procedures keep the dual weight of every point inside $\cell$ at or below $\ymax$, while ensuring that the dual weight of free points inside $\cell$ is $\ymax$; hence, (I2) holds. Finally, both the extended Hungarian search (Lemma~\ref{lemma:hung_properties}) and the merge (Lemma~\ref{lemma:merge_properties}) procedures update the key of $\cell$ to the smallest net-cost of all augmenting paths inside $\cell$ and do not decrease the key of any cell. From this observation, (I3) follows in a straightforward way.

From (I1) and (I2), our algorithm maintains a feasible $\mcC$-extended matching where, for each cell $\cell\in\mcC$, $y_\cell \le \ymax$. From (I3), $\ymax$ is equal to the minimum net-cost of all augmenting paths with respect to $M^{\mcC}$; combining this with Lemma~\ref{lemma:augpath}(b), we conclude that $M^{\mcC}$ is a minimum-cost $\mcC$-extended matching. Upon termination, the algorithm returns a minimum-cost $\mcC^*$-extended $(n-k)$-matching for $\mcC^*=\{\cell^*\}$, which from Lemma~\ref{lemma:root-inactive} has no boundary-matched points. Therefore, this matching is also a minimum-cost $(n-k)$-matching, as desired.

\subsubsection{Efficiency}\label{subsec:efficiency}

Both the merge step and the extended Hungarian search step require an execution of Dijkstra's shortest path algorithm on $\mcG_\cell$. In Section~\ref{sec:hungarian_nk}, we show that this execution takes $\tilde{O}(n_\cell\Phi(n_\cell))$ time. Using this, we analyze the execution time.

We begin by establishing a bound on the execution time of the merge step, which combines $\cell'$ and $\cell''$ into a single cell $\cell$. At the start of this step, the dual weight of all free points in $\cell'$ and $\cell''$ are raised to $\ymax$ (from Lemma~\ref{lemma:fresh_duals}). As a result, once the divider is removed, the only free points with dual weights below $\ymax$ are those that are matched to the divider $\Gamma_\cell$.  From Lemma~\ref{lemma:combination}(b), the number of points matched to $\Gamma_\cell$ is $O(n^{4/5})$. Therefore, the while-loop in the merge procedure executes only $O(n^{4/5})$ times. Since each iteration takes $\tilde{O}(n_\cell\Phi(n_\cell))$ time, the total execution time of a single execution of the merge step is $\tilde{O}(n^{4/5}n_\cell \Phi(n_\cell))$. 
Given that each point is in only $O(\log n\Delta)$ many cells of $\mcH$, the total time taken by the merge step across all cells of $\mcH$ is $\tilde{O}(n^{9/5}\Phi(n)\log n\Delta)$.

Similarly, if the execution time for the extended Hungarian search within a single cell $\cell$ is bounded by $O(n^{4/5}n_\cell \Phi(n_\cell))$, then the cumulative execution time of the extended Hungarian search across all cells -- and consequently, the total runtime of the entire algorithm -- can be bounded by $\tilde{O}(n^{9/5}\Phi(n)\log n\Delta)$.
In the remainder of our analysis, we establish a bound of $O(n^{4/5}n_\cell \Phi(n_\cell))$ for the time taken by the extended Hungarian search within a single cell $\cell$.
Recall that the algorithm selects a cell $\cell$ containing the minimum net-cost augmenting path from the priority queue $\pq$ and performs an extended Hungarian search procedure within $\cell$, requiring $\tilde{O}(n_\cell\Phi(n_\cell))$ time. To analyze the total execution time of the extended Hungarian search procedure, we show that any cell $\cell$ can be selected by the algorithm at most $O(n^{4/5})$ times.

We categorize the selection of $\cell$ as a \emph{low-net-cost} case if $\ymax\le \ell_\cell n^{-1/5}$ and a \emph{high-net-cost} case otherwise. First, we show that in the high-net-cost case ($\ymax>\ell_\cell n^{-1/5}$), the number of free points remaining in $\cell$ is $O(n^{4/5})$, and as a result, the total number of high-net-cost selections of $\cell$ is $O(n^{4/5})$.

\begin{restatable}{lemma}{highMax}\label{lemma:high-max}
    Given a feasible extended matching $M^\mcC, y(\cdot)$ and any cell $\cell\in\mcC$, if the net-cost of the minimum net-cost augmenting path inside $\cell$ is greater than $\ell_\cell n^{-1/5}$, then the number of free points of $M^\mcC$ is $O(n^{4/5})$.
\end{restatable}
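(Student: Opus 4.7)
My proposed proof proceeds via the contrapositive: if $|B_F \cap B_\cell| > C n^{4/5}$ for a sufficiently large absolute constant $C$, then some augmenting path inside $\cell$ has net-cost at most $\ell_\cell n^{-1/5}$, contradicting the hypothesis. First, I would invoke Lemma~\ref{lemma:geometric-matching} to obtain a matching $M'$ on $A_\cell \cup B_\cell$ that leaves only $O(n_\cell^{4/5})$ points of $B_\cell$ unmatched and has cost $\cost{M'} = O(\ell_\cell n_\cell^{3/5})$. By Lemma~\ref{lemma:augpath}(a), every augmenting path used by the algorithm lies within a single cell of $\mcC$, so inductively no edge of the maintained matching $M$ crosses the boundary of any cell of $\mcC$; in particular the restriction $M_\cell$ of $M$ to $A_\cell \cup B_\cell$ is itself a matching. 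Consequently, $M_\cell \oplus M'$ decomposes into vertex-disjoint alternating paths and cycles, all lying inside $\cell$.

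Let $F$ be the set of free points of $M^\mcC$ in $B_\cell$ that are matched by $M'$; by the preceding, $|F| \geq |B_F \cap B_\cell| - O(n_\cell^{4/5})$. Each $b \in F$ has degree exactly one in $M_\cell \oplus M'$, so it is the endpoint of a unique alternating path $P_b$ that begins with an $M'$-edge. A bipartite-parity argument then pins the other endpoint of $P_b$ into one of two cases: either (i) a vertex $a^\ast \in A_\cell$ free in $M$ (odd-length path, last edge in $M'$), giving a type-(i) extended augmenting path; or (ii) a vertex $b^\ast \in B_\cell$ matched in $M$ but free in $M'$ (even-length path, last edge in $M$), giving a type-(ii) extended augmenting path terminated by boundary-matching $b^\ast$. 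In particular, parity rules out $P_b$ ending at a boundary-matched vertex or at another free vertex of $B_F \cap B_\cell$, since both would demand a $B$-endpoint reached by an $M'$-edge --- the wrong parity. Because only $O(n_\cell^{4/5})$ points of $B_\cell$ are free in $M'$, at most $O(n_\cell^{4/5})$ of the $P_b$ are of type (ii), so at least $|B_F \cap B_\cell| - O(n_\cell^{4/5})$ of them are of type (i).

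For each type-(i) path, $\phi(P_b) = \cost{P_b \cap M'} - \cost{P_b \cap M} \leq \cost{P_b \cap M'}$, and since the paths are edge-disjoint the sum of their net-costs is at most $\cost{M'} = O(\ell_\cell n_\cell^{3/5})$. Averaging over the type-(i) paths and using $n_\cell \leq n$,
\[
\phi^\ast_\cell \;\leq\; \min_{P_b}\phi(P_b) \;\leq\; \frac{O(\ell_\cell n^{3/5})}{|B_F \cap B_\cell| - O(n_\cell^{4/5})} \;\leq\; \ell_\cell n^{-1/5},
\]
where the last inequality holds once $C$ is chosen large enough to dominate the hidden constants and to ensure $|B_F \cap B_\cell| - O(n_\cell^{4/5}) \geq (C/2)n^{4/5}$. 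This contradicts $\phi^\ast_\cell > \ell_\cell n^{-1/5}$ and completes the argument. The main obstacle will be the bipartite-parity classification of the endpoints of each $P_b$: restricting the average to the type-(i) paths is crucial, since including the $\distance{b^\ast}{\mcC}$ contributions of type-(ii) paths would inflate the numerator by an additive $\ell_\cell \cdot O(n_\cell^{4/5})$ term and weaken the bound only to $O(\ell_\cell)$, falling short of the required $\ell_\cell n^{-1/5}$.
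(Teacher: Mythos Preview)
Your proof is correct and follows essentially the same approach as the paper: both take the symmetric difference $M_\cell \oplus M'$ with the matching $M'$ of Lemma~\ref{lemma:geometric-matching}, bound the paths ending in $B$ (your type~(ii)) by the number of free points of $M'$, and bound the type-(i) paths by comparing their total net-cost $\le w(M')=O(\ell_\cell n^{3/5})$ against the per-path lower bound $\ell_\cell n^{-1/5}$; your contrapositive/averaging phrasing and the paper's direct count are two sides of the same inequality. One small remark: since the lemma is stated for an arbitrary feasible extended matching, the fact that no edge of $M$ crosses a cell boundary should be derived directly from feasibility (Conditions~\eqref{eq:dualfeasibility-matching-time} and~\eqref{eq:dualfeasibility-b} force $y(b)\le\distance{b}{\mcC}<\distance{a}{b}$ for any $a$ outside $\cell_b$) rather than inductively from the algorithm's history via Lemma~\ref{lemma:augpath}(a).
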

\begin{proof}
    Let $M$ be the matching corresponding to the extended matching $M^\mcC$ and let $M'$ be the matching defined in Lemma~\ref{lemma:geometric-matching}. Every free point of $M^\mcC$ participates in an augmenting or an alternating path in $M\oplus M'$. The number of alternating paths in the symmetric difference cannot exceed the number of free points of $M'$, which is $O(n^{4/5})$. Furthermore, the combined net-cost of these augmenting paths is at most $w(M')=O(\ell_\cell n^{3/5})$, and each one has a net-cost at least $\ell_\cell n^{-1/5}$. Thus, there are $O(n^{4/5})$ augmenting paths in the symmetric difference. See Appendix~\ref{appendix:hung_time} for a complete proof.
\end{proof}

Next, we bound the number of low-net-cost selections of $\cell$. Define $\mcC_\cell$ as the set of all cells $\cell'\in\mcH$ that are processed by the merge procedure while $\cell\in\mcC$ and $\ymax\le \ell_\cell n^{-1/5}$ (Figure~\ref{fig:current}). Our algorithm always picks the smallest perimeter cells to merge, and as a result, we obtain the following lemma.
\begin{restatable}{lemma}{ratio}
\label{lemma:ratio}
For any non-leaf cell $\cell$ in $\mcH$ and any cell $\cell'\in\current_\cell$, $\frac{1}{3}\ell_{\cell'}\le \ell_\cell \le \frac{9}{4}\ell_{\cell'}$.
\end{restatable}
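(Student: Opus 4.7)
The plan is to exploit two ingredients. First, since the merge step always picks the smallest-perimeter cell in $\mcC$, and merging two siblings into their parent strictly increases the perimeter, the perimeters of the cells picked over time form a non-decreasing sequence. Second, by Lemma~\ref{lemma:margin}, every cell $c\in\mcH$ has aspect ratio at most $3$; writing $s_c$ for its shorter side, the perimeter $2(\ell_c+s_c)$ is therefore sandwiched between $\frac{8}{3}\ell_c$ and $4\ell_c$.

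Let $t_2$ denote the moment $\cell'$ is processed, i.e., picked as the smallest-perimeter cell in $\mcC$, and let $t_0\le t_2$ be the moment $\cell$ was itself created by the merge of its two children. Let $\cell_1$ be the child of $\cell$ picked at $t_0$. By the construction of $\mcH$, the divider of $\cell$ lies inside the middle third of its longer side, so $\cell_1$ has sides $s_\cell$ and $w_1$ with $w_1\in[\ell_\cell/3,\,\ell_\cell/2]$ (the lower bound from the divider placement, the upper bound from $\cell_1$ being the smaller-perimeter child). Perimeter monotonicity of picked cells then yields $s_\cell+w_1\le \ell_{\cell'}+s_{\cell'}$, while $\cell\in\mcC$ at $t_2$ combined with the minimality of $\cell'$ yields $\ell_{\cell'}+s_{\cell'}\le \ell_\cell+s_\cell$.

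The lower bound $\ell_\cell\ge \frac{1}{3}\ell_{\cell'}$ follows directly from the second inequality after applying $s_{\cell'}\ge \ell_{\cell'}/3$ on the left and $s_\cell\le \ell_\cell$ on the right, which yields in fact the stronger estimate $\ell_\cell\ge \frac{2}{3}\ell_{\cell'}$. The upper bound $\ell_\cell\le \frac{9}{4}\ell_{\cell'}$ would follow from the first inequality by lower-bounding $s_\cell+w_1$ in terms of $\ell_\cell$ and upper-bounding $\ell_{\cell'}+s_{\cell'}$ in terms of $\ell_{\cell'}$. The main delicate step, and the one where the constant really matters, is sharpening the lower bound on $s_\cell+w_1$: a blunt application of $s_\cell,w_1\ge\ell_\cell/3$ together with $s_{\cell'}\le\ell_{\cell'}$ only gives $\ell_\cell\le 3\ell_{\cell'}$, so obtaining $\frac{9}{4}$ requires combining the split bound $w_1\le \ell_\cell/2$ with the aspect-ratio constraint on $\cell_1$ itself (namely $\max(s_\cell,w_1)\le 3\min(s_\cell,w_1)$) to tighten the bound on $s_\cell+w_1$.
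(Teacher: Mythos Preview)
Your overall plan---use monotonicity of the picked perimeters together with the aspect-ratio bound---is the right one and is exactly how the paper proceeds, but there is a misidentification that breaks both directions of your argument.

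When the merge procedure is executed on a cell $\cell'\in\mcC_\cell$, it is \emph{not} $\cell'$ that is picked as the smallest-perimeter cell of $\mcC$; rather, the algorithm picks the smaller child $\cell'_{\min}$ of $\cell'$, and only then replaces $\cell'_{\min}$ and its sibling by $\cell'$. Consequently, at your time $t_2$ the minimality gives $p_{\cell'_{\min}}\le p_\cell$ (writing $p_c$ for the half-perimeter $\ell_c+s_c$), \emph{not} $p_{\cell'}\le p_\cell$. Your claimed inequality $\ell_{\cell'}+s_{\cell'}\le \ell_\cell+s_\cell$ is therefore unjustified, and the ``stronger'' bound $\ell_\cell\ge\frac{2}{3}\ell_{\cell'}$ that you derive from it is not supported. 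The paper instead chains $\frac{4}{3}\ell_{\cell'}\le p_{\cell'}\le 2p_{\cell'_{\min}}\le 2p_\cell\le 4\ell_\cell$, which yields precisely $\ell_\cell\ge\frac{1}{3}\ell_{\cell'}$.

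For the upper bound, your first inequality $s_\cell+w_1\le \ell_{\cell'}+s_{\cell'}$ is correct (since $p_{\cell_1}\le p_{\cell'_{\min}}\le p_{\cell'}$ by monotonicity and the child-parent relation), but the sharpening you propose cannot close the gap: the configuration $s_\cell=w_1=\ell_\cell/3$ satisfies all of your constraints, including $w_1\le\ell_\cell/2$ and the aspect ratio of $\cell_1$, and it realizes $s_\cell+w_1=\frac{2}{3}\ell_\cell$. So comparing $p_{\cell_1}$ directly to $p_{\cell'}\le 2\ell_{\cell'}$ never beats $\ell_\cell\le 3\ell_{\cell'}$. The paper obtains the constant $\frac{9}{4}$ by comparing $p_{\cell_1}$ to $p_{\cell'_{\min}}$ (again using monotonicity of picked perimeters) and then invoking the child-parent half-perimeter bound $p_{\cell'_{\min}}\le\frac{3}{4}p_{\cell'}$, which yields $\frac{4}{3}\ell_\cell\le p_\cell\le 2p_{\cell_1}\le 2p_{\cell'_{\min}}\le\frac{3}{2}p_{\cell'}\le 3\ell_{\cell'}$.
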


For any cell $\cell'\in\mcC_\cell$, the merge step erases the divider $\Gamma_{\cell'}$ and makes the points that are matched to $\Gamma_{\cell'}$ free. Each of these new free points might cause $\cell$ to be selected by the algorithm. Therefore, to bound the number of low-net-cost selections of $\cell$, we show that the total number of points that are matched to the dividers of the cells in $\mcC_\cell$ is at most $O(n^{4/5})$. 
For any cell $\cell'\in\mcC_\cell$, let $\mcB_{\cell'}$ denote the set of points of $B^\mcC$ that are matched to the divider $\Gamma_{\cell'}$ when our algorithm starts the merge step on $\cell'$.
Thus, we have to bound $\sum_{\cell'\in\mcC_\cell}|\mcB_{\cell'}|$ by $O(n^{4/5})$. 
By invariant (I2), for each point $b\in \mcB_{\cell'}$, $y(b)\le \ymax\le \ell_\cell n^{-1/5}$. Furthermore, by Condition~\eqref{eq:dualfeasibility-b-admissible}, $y(b)=\distance{b}{\cell'}=\distance{b}{\Gamma_{\cell'}}$. Therefore, the point $b$ is within a distance $\ell_\cell n^{-1/5}$ from the divider $\Gamma_{\cell'}$. 
\begin{figure}
    \centering
    \includegraphics[width=0.45\linewidth]{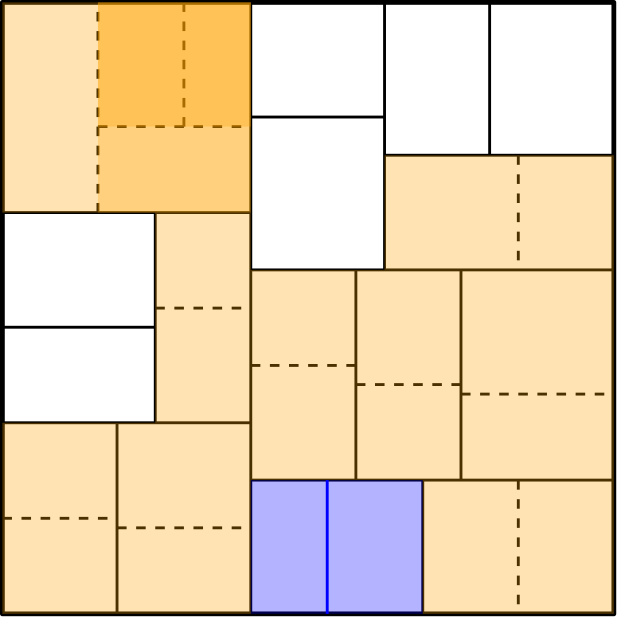}\hspace{2em}
    \includegraphics[width=0.45\linewidth]{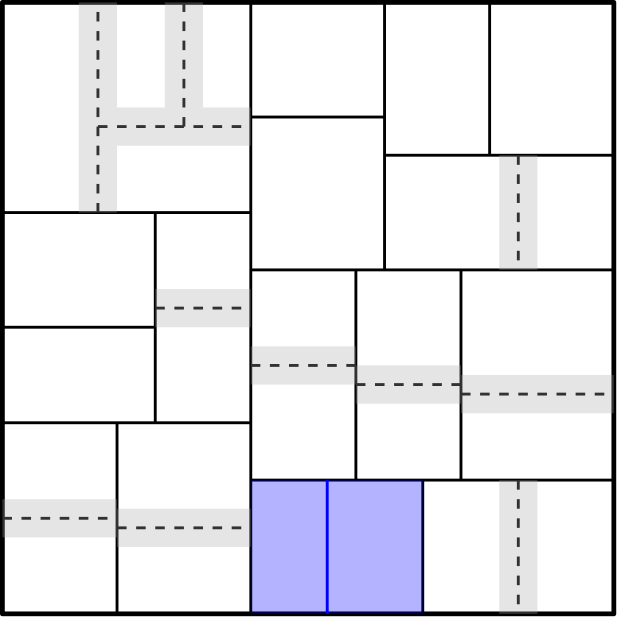}
    \caption{(left) The set $\mcC_\cell$ (the cells shaded in orange with dashed lines as their divider) for a cell $\cell$ (shaded in blue), and (right) any boundary-matched point in the gray area might cause a low-net-cost execution of the extended Hungarian search procedure on $\cell$.}
    \label{fig:current}
\end{figure}
From Lemma~\ref{lemma:ratio}, for each cell $\cell'\in\mcC_\cell$ and each point $b\in \mcB_{\cell'}$,  the point $b$ is within a distance $\ell_\cell n^{-1/5} < \ell_{\cell'}\lambda$ from the divider $\Gamma_{\cell'}$, and from the construction of $\mcH$ (Lemma~\ref{lemma:margin}),
$|\mcB_{\cell'}|=O(n_{\cell'}n^{-1/5})$. Furthermore, from Lemma~\ref{lemma:ratio} and the construction of $\mcH$, each point $b\in B$ can lie inside a constant number of cells in $\mcC_\cell$; hence, $\sum_{\cell'\in\mcC_\cell}|\mcB_{\cell'}| = O(\sum_{\cell'\in\mcC_\cell}n_{\cell'} n^{-1/5}) = O(n^{4/5})$. Therefore, the total number of low-net-cost selections of $\cell$ by the extended Hungarian search procedure is $O(n^{4/5})$, as claimed.

\subsection{Fast Search Procedures for the Geometric \texorpdfstring{$k$}{}-SP and the \texorpdfstring{$k$}{}-SPI Problems}\label{sec:hungarian_nk}

Given an instance of the geometric $k$-SP, in this section, we show that the search for the minimum net-cost augmenting path using a Hungarian search style procedure (as in the merge and extended Hungarian search procedures of our algorithm) can be done in $O(n\Phi(n)\log^3 n)$ time, where $\Phi(n)$ is the query/update time of an existing dynamic weighted nearest neighbor data structure (DWNN). 

Given a matching $M$ and dual weights $y(\cdot)$ satisfying Conditions~\eqref{eq:dualfeasibility-non-matching-time} and~\eqref{eq:dualfeasibility-matching-time}, the Hungarian search (which executes a Dijkstra's shortest path procedure) can be efficiently implemented on a complete bipartite graph using only $\tilde{O}(n)$ queries to a dynamic weighted bichromatic closest pair data structure (BCP). This search procedure grows a tree $\mathcal{T}$ by finding the cheapest cut edge under a weighted distance. See~\cite{agarwal1995vertical, ra_soda12,vaidya1989geometry} where the Hungarian search procedure that uses $O(n)$ queries to a BCP has been described. Furthermore, one can implement a BCP using a DWNN in $\tilde{O}(\Phi(n))$ time; here $\Phi(n)$ is the update/query time of the DWNN~\cite{eppstein1995dynamic}. 

In our case, the graph $\mathcal{G}_{\requests}$ is not a complete graph. However, we can easily decompose the graph into a set of $O(n)$ complete bipartite graphs as follows. Build a balanced binary search tree $T$ where the indices of requests $\{1,\ldots, n\}$ form the leaves. Let, for any node $v$, $L(v)$ denote the indices at the leaves of the left subtree of $v$ and $R(v)$ denote the indices at the leaves of the right subtree of $v$. We partition the edges as follows: At each node $v$ in $T$, we store all edges that go from the exit gates of requests with indices in $L(v)$ to the entry gates of the requests with indices in $R(V)$. See Figure~\ref{fig:cliquedec} for an example. Thus, at each node, we simply store the complete bipartite graph between points in $L(v)$ and $R(v)$. It is easy to see that any request $r_j$ participates in $O(\log n)$ cliques. 

\begin{figure}
    \centering
    \includegraphics[width=\textwidth]{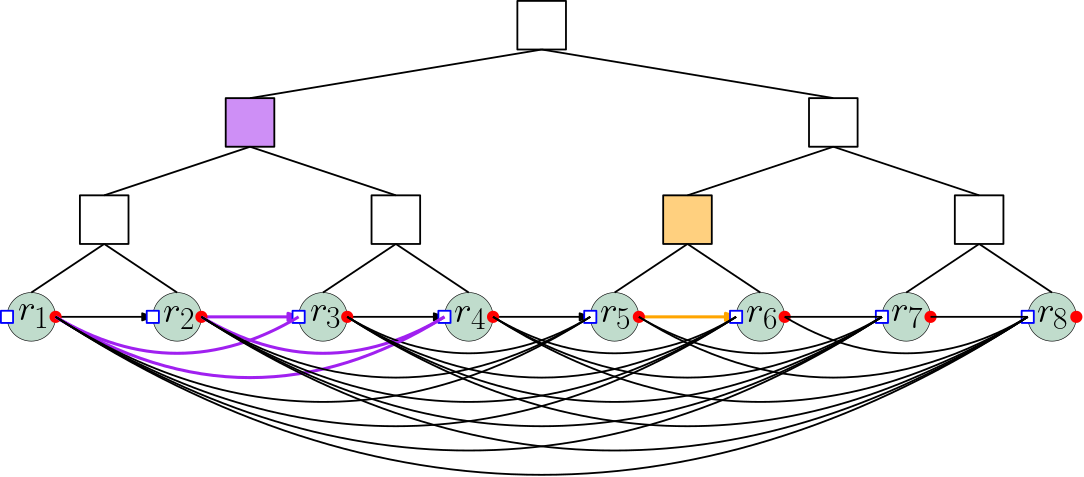}
    \caption{The balanced binary search tree $T$ used in the construction of our dynamic weighted bichromatic closest pair data structure. The purple (resp. orange) edges in the graph are stored at the purple (resp. orange) node of $T$.}
    \label{fig:cliquedec}
\end{figure}

While executing the Hungarian search, we maintain a BCP at each node $v$ of this tree. Let $\mathcal{T}$ denote the Dijkstra's shortest path tree maintained by the Hungarian search. Our data structure stores the nodes of $L(v)$ that are currently in $\mathcal{T}$ and the nodes of $R(v)$ that are currently not in $\mathcal{T}$. The weighted bichromatic closest pair for each of these cliques is stored in a global priority queue. The query to this data structure will simply return the pair at the top of this priority queue. Adding a node $v$ to the Dijkstra's shortest path tree will trigger insertion or deletion of $v$ in the $O(\log n)$ cliques that it participates in. For each of these cliques, we update their representative pair in the global priority queue. Thus, all insertions, deletions, and queries can be done in $\tilde{O}(\Phi(n))$ time.
We obtain a fast Hungarian search by using this data structure in place of the standard dynamic weighted bichromatic closest pair data structure as used in~\cite{ra_soda12}. 

\subsection{Extension of our Sub-Quadratic to the \texorpdfstring{$k$}{-}-SPI Problem}\label{sec:kspi-extension-sub}
Our algorithm for the $k$-SP problem computes a minimum-cost $(n-k)$-matching on a graph $\mcG_\sigma$ between a point set $A$ of size $n$ and a point set $B$ of size $n$. For the $k$-SPI problem on an instance $\sigma'=\eta\sigma$, since there exists a matching on the graph $\mcG_{\requests'}$ that matches all points of $B$, one can use the same algorithm by setting $k=0$ to compute a minimum-cost $n$-matching on $\mcG_{\requests'}$. Our algorithm therefore easily extends to the $k$-SPI problem. Note that the efficiency of our algorithm only requires the existence of a matching that matches the majority of the points of $B_\cell$ in each cell $\cell$ and has a low cost (Lemma~\ref{lemma:geometric-matching}). We observe that the same property holds for the $k$-SPI problem, as the initial locations of the servers only introduce new points to the set $A_\cell$ and the number of unmatched points of $B_\cell$ in the matching constructed in Lemma~\ref{lemma:geometric-matching} remains unchanged. Therefore, our algorithm runs in $\tilde{O}(n^{9/5}\Phi(n)\log\Delta)$ time for the $k$-SPI problem as well.

\section{Fast Implementations of the Hungarian Algorithm for the \texorpdfstring{$k$}{-}-SP and the \texorpdfstring{$k$}{-}-SPI Problems}\label{sec:nk-algod}
In this section, given a sequence of $n$ requests in $2$ dimensions, we present an adaptation of the Hungarian algorithm~\cite{kuhn1955hungarian} for computing a minimum $(n-k)$-matching under any $\ell_p$ norm in $\tilde{O}(nk\Phi(n))$ time.  We begin by describing the notions used extensively by existing bipartite matching algorithms and then present our algorithm for the $k$-SP problem.

\subsection{Background}
For a matching $M$ on $\mcG_\requests$, let $B_F$ (resp. $A_F$) denote the set of free points of $B$ (resp. $A$) with respect to $M$. The matching $M$ along with a set of non-negative dual weights $y:A\cup B\rightarrow \mbR_{\ge0}$ is \emph{feasible} if 
\begin{align}
    y(b) - y(a) &\le \distance{a}{b},&  \quad \forall(a,b)&\in E,\label{eq:dualfeasibility-non-matching}\\
    y(b) - y(a) &= \distance{a}{b},& \quad \forall(a,b)&\in M,\label{eq:dualfeasibility-matching}\\
    y(a) &= 0,& \quad a&\in A_F.\label{eq:dualfeasibility-free_a-nk}
\end{align}
Any feasible $t$-matching $M, y(\cdot)$, where $y(b)=\max_{b'\in B} y(b')$ for all free points $b\in B_F$, is a \emph{dual-optimal} matching. 
The following lemma relates the dual-optimal $t$-matchings and minimum-cost $t$-matchings.

\begin{restatable}{lemma}{partialoptimaltime}\label{lemma:partial-optimal-time}
    For any $t>0$ and any dual-optimal $t$-matching $M, y(\cdot)$ on a graph $\mcG_\requests$, the matching $M$ is a minimum-cost $t$-matching on $\mcG_\requests$.
\end{restatable}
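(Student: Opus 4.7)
The plan is to use a standard primal-dual duality argument, exploiting the definition of a dual-optimal matching (where all free points of $B$ share the common maximum dual weight $y_{\max} := \max_{b' \in B} y(b')$) to rewrite the cost of $M$ purely in terms of the dual weights, and then use feasibility (\eqref{eq:dualfeasibility-non-matching}--\eqref{eq:dualfeasibility-free_a-nk}) to lower bound the cost of any other $t$-matching $\hat{M}$ by the same expression.

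First, using Condition~\eqref{eq:dualfeasibility-matching}, I would express
\[
w(M) \;=\; \sum_{(a,b)\in M}\distance{a}{b} \;=\; \sum_{(a,b)\in M}\bigl(y(b)-y(a)\bigr).
\]
Splitting the sums over all of $A$ and $B$ and subtracting the free points, and invoking Condition~\eqref{eq:dualfeasibility-free_a-nk} (so that free points of $A$ contribute $0$) together with the dual-optimality hypothesis that $y(b)=y_{\max}$ for every $b\in B_F$, this simplifies to
\[
w(M) \;=\; \sum_{b\in B} y(b) \;-\; \sum_{a\in A} y(a) \;-\; (n-t)\,y_{\max}.
\]

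Next, for an arbitrary $t$-matching $\hat{M}$ with free-point sets $\hat{A}_F\subseteq A$ and $\hat{B}_F\subseteq B$, I would use Condition~\eqref{eq:dualfeasibility-non-matching} to write
\[
w(\hat{M}) \;\ge\; \sum_{(a,b)\in \hat{M}}\bigl(y(b)-y(a)\bigr) \;=\; \sum_{b\in B} y(b) \;-\; \sum_{a\in A} y(a) \;-\; \sum_{b\in \hat{B}_F} y(b) \;+\; \sum_{a\in \hat{A}_F} y(a).
\]
The dual weights are non-negative and every $y(b)\le y_{\max}$; hence $\sum_{a\in \hat{A}_F} y(a)\ge 0$ and $\sum_{b\in \hat{B}_F} y(b) \le |\hat{B}_F|\,y_{\max} = (n-t)\,y_{\max}$, since $|\hat{B}_F|=|B|-|\hat{M}|=n-t$. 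Combining these two inequalities gives $w(\hat{M}) \ge w(M)$, establishing that $M$ is a minimum-cost $t$-matching.

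This is a textbook LP-duality computation, so there is no substantive obstacle; the only care required is bookkeeping the free-vertex contributions on both sides. The crucial ingredient is precisely the dual-optimality assumption that every free $b\in B_F$ has $y(b)=y_{\max}$: this makes the $(n-t)\,y_{\max}$ term in the identity for $w(M)$ match (via $|\hat{B}_F|=n-t$ and $y(b)\le y_{\max}$) the upper bound on $\sum_{b\in \hat{B}_F}y(b)$ used in the inequality for $w(\hat{M})$, which is exactly what permits the cancellation that yields $w(M)\le w(\hat{M})$.
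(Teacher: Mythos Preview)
Your proposal is correct and follows essentially the same LP-duality argument that the paper uses (see the paper's proof of Lemma~\ref{lemma:augpath}(b), which carries out exactly this computation in the more general setting of extended matchings). The key steps---rewriting $w(M)$ via Condition~\eqref{eq:dualfeasibility-matching}, lower-bounding $w(\hat{M})$ via Condition~\eqref{eq:dualfeasibility-non-matching}, and cancelling the $(n-t)\,y_{\max}$ terms using $|B_F|=|\hat{B}_F|$ together with $y(b)\le y_{\max}$ and $y(a)\ge 0$---are identical.
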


For any feasible matching $M, y(\cdot)$ and any edge $(a,b)\in E$, the \emph{slack} of $(a,b)$, denoted by $s(a,b)$, is defined as $s(a,b):=\distance{a}{b} - y(b) + y(a)$. The edge $(a,b)$ is \emph{admissible} if $s(a,b)=0$. Any alternating path $P$ is admissible if all edges in $P$ are admissible.

\myparagraph{Reversed residual graph.} We construct a reversed residual graph $\bar{G}$ of $\mcG_\requests$ and matching $M$ as follows. The vertex set of $\bar{G}$ is the point set $A\cup B$ in addition to a source vertex $s$. For any pair $(a,b)\in E$, if $(a,b)$ is a matching (resp. non-matching) edge, there is an edge directed from $b$ to $a$ (resp. from $a$ to $b$) with a weight $s(a,b)$ in $\bar{G}$. Furthermore, there is an edge from $s$ to every point $b\in B$ with a weight $y_{\max} - y(b)$, where $y_{\max}:=\max_{b'\in B} y(b')$. 

\myparagraph{Reverse augmenting paths.}Define a \emph{reverse augmenting path} as an alternating path $P$ with the first and last edges as matching edges, i.e., a directed path from a point $b\in B$ to a point $a\in A$ in the reversed residual graph. Given a reverse augmenting path $P$, we \emph{reduce} the matching $M$ along $P$ by setting $M\leftarrow M\oplus P$. Reducing $M$ along $P$ decreases the number of matching edges of $M$ by one.

\subsection{Efficient Hungarian Algorithm for the \texorpdfstring{$k$}{}-SP Problem}\label{sec:hungarian_n2k_ksp}

At a high level, our algorithm starts with a matching $M$ on $\mcG_\requests$ corresponding to the optimal solution to the $1$-SP problem, i.e., the matching $M$ is the maximum matching on $\mcG_\requests$ representing the $1$-partitioning $\langle r_1, r_2,\ldots, r_n\rangle$. Our algorithm then iteratively increments the number of partitions (i.e., increases the number of free points of $B$ by one) while ensuring that the maintained partitioning has a minimum cost. In particular, using a reversed version of the Hungarian search procedure, our algorithm finds and updates the matching along the minimum (negative) net-cost reverse augmenting path (i.e., an alternating path with its first and last edges as matching edges). Our algorithm stops and returns the matching when $|M|=n-k$. We provide the details next.

\myparagraph{Algorithm.} Set $M= \{(a_i, b_{i+1})\mid i\in[1, n-1]\}$ as a matching that corresponds to serving all requests in $\langle r_1, r_2,\ldots, r_n\rangle$ by a single server. Assign dual weights to the points in $A\cup B$ as follows. Set $y(a_n)\leftarrow 0$. Starting from $i=n-1$, set $y(a_{i})\leftarrow\max\{0, \max_{j>i+1}y(b_j)-\distance{a_{i}}{b_j}\}$ and $y(b_{i+1})\leftarrow\distance{a_{i}}{b_{i+1}} + y(a_{i})$. Finally, set $y(b_1)\leftarrow \max_{i>1}y(b_i)$. The matching $M, y(\cdot)$ is a dual-optimal $(n-1)$-matching on $\mcG_\requests$.

While $|M|>n-k$ (i.e., there are less than $k$ free points in $B$), execute the \reverse\ procedure described below, which updates the dual weights and returns an admissible reverse augmenting path $P$. Reduce $M$ along $P$. 

{\it \reverse\ Procedure.} Execute the Dijkstra's shortest path procedure from the source vertex $s$ on the reversed residual graph $\bar{G}$, which computes the distance $\kappa_u$ of each vertex $u\in A\cup B$ from $s$. Define $\kappa:=\min_{a\in A} \kappa_a + y(a)$. For any point $u\in A\cup B$ with $\kappa_u<\kappa$, set $y(u)\leftarrow y(u)-\kappa+\kappa_u$. Let $a\in A$ denote the point with the minimum distance (i.e., $\kappa_a=\kappa$), and let $P$ denote the shortest path from $s$ to $a$. Return the path $P'$ obtained by removing $s$ from $P$.

\begin{restatable}{lemma}{nkAlgo}\label{lemma:nkAlgo}
    For any $1\le t\le k$, the matching $M, y(\cdot)$ maintained by our algorithm after $(t-1)$ iterations is a dual-optimal $(n-t)$-matching on $\mcG_\requests$.
\end{restatable}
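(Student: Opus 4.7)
The plan is to argue by induction on $t$.

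\textbf{Base case ($t=1$).} I would verify directly that the initial matching $M=\{(a_i,b_{i+1}):i\in[1,n-1]\}$ together with the explicitly constructed duals is a dual-optimal $(n-1)$-matching. Conditions~\eqref{eq:dualfeasibility-matching} and~\eqref{eq:dualfeasibility-free_a-nk} hold by construction ($a_n$ is the unique free $A$-vertex and $y(a_n)=0$), Condition~\eqref{eq:dualfeasibility-non-matching} for every remaining non-matching edge $(a_i,b_j)$ with $j>i+1$ is exactly the clause maximized over in the definition of $y(a_i)$, and the dual-optimality requirement at the unique free $B$-vertex $b_1$ holds because we set $y(b_1)=\max_{i>1}y(b_i)$.

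\textbf{Inductive step.} Assume $M,y(\cdot)$ is a dual-optimal $(n-t)$-matching after $t-1$ iterations, set $y_{\max}=\max_{b\in B}y(b)$, and analyze the $t$-th call to \reverse. Two structural properties of the reversed residual graph $\bar{G}$ drive everything. First, in $\bar{G}$ every vertex $a\in A$ is reached only through matching edges $b\to a$ of weight $0$; since $a$ has at most one matching partner, a matched $a$ has a unique incoming edge, and hence $\kappa_a=\kappa_b$ whenever $(a,b)\in M$ (free $a$ have $\kappa_a=\infty$). Second, along the returned shortest path $s\to b_0\to a_0\to\cdots\to a_k$ the predecessor of $b_0$ in Dijkstra's tree is $s$, so the direct edge realizes $\kappa_{b_0}=y_{\max}-y(b_0)$.

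Using these, I would verify that the update $y'(u)\gets y(u)-\kappa+\kappa_u$ (applied only when $\kappa_u<\kappa$) preserves feasibility. For a matching edge $(a,b)\in M$ the equality $\kappa_a=\kappa_b$ forces the two duals to move in lockstep, preserving $y'(b)-y'(a)=\distance{a}{b}$; for a non-matching edge $(a,b)$, the Dijkstra relaxation $\kappa_b\le \kappa_a+s(a,b)$ gives $s'(a,b)\ge 0$ in each of the four cases for the pair $(\kappa_a,\kappa_b)$ relative to $\kappa$. The choice $\kappa=\min_a(\kappa_a+y(a))$ both keeps all duals non-negative and forces $y'(a_k)=0$ at the newly freed endpoint, satisfying~\eqref{eq:dualfeasibility-free_a-nk}. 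The returned path $P$ is admissible because every non-matching edge on $P$ is a shortest-path-tree edge and so satisfies $\kappa_b=\kappa_a+s(a,b)$, giving zero slack after the update, while every matching edge on $P$ already has zero slack; reducing $M$ along $P$ therefore produces a feasible $(n-t-1)$-matching.

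It remains to establish dual-optimality of $M'$. The only $B$-vertex that becomes newly free is $b_0$, and the second structural fact gives $y'(b_0)=y_{\max}-\kappa$. For every other $b\in B$, the direct edge from $s$ yields $\kappa_b\le y_{\max}-y(b)$, and a short case split on whether $\kappa_b<\kappa$ shows $y'(b)\le y_{\max}-\kappa$; each previously free $b$ has $y(b)=y_{\max}$ and $\kappa_b=0$, so also $y'(b)=y_{\max}-\kappa$. Hence $y'_{\max}=y_{\max}-\kappa$ and is attained at every free $B$-vertex of $M'$, completing the induction. \textbf{The main obstacle} is the feasibility check for matching edges after the dual update; it rests on the direction-sensitive observation that in $\bar{G}$ each matched $a$ has a \emph{unique} incoming edge from its partner. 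This differs from the familiar forward-Hungarian analysis (where the corresponding equality comes from a different edge orientation), and I would want to set up the structure of $\bar{G}$ carefully before starting the case analysis.
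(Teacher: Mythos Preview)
The paper does not actually include a proof of this lemma; it is stated without proof in Section~\ref{sec:nk-algod}, with the base case (``$M, y(\cdot)$ is a dual-optimal $(n-1)$-matching'') and the key property of \reverse\ (``returns an admissible reverse augmenting path'') only asserted inline. Your proposal correctly supplies the missing details and follows exactly the template the paper uses elsewhere for the forward direction (cf.\ the proofs of Lemmas~\ref{lemma:hung_properties} and~\ref{lemma:fresh_duals}): verify feasibility of the dual update via the four-case slack analysis, use the structural fact that each matched $a$ has a unique incoming edge in $\bar G$ to get $\kappa_a=\kappa_b$ on matching pairs, use $\kappa=\min_a(\kappa_a+y(a))$ to force $y'(a_k)=0$ and keep all duals nonnegative, and track that the new maximum $y'_{\max}=y_{\max}-\kappa$ is attained precisely at every free $B$-vertex. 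Your identification of the one subtle point---that the matching-edge case relies on the edge orientation in $\bar G$ being reversed relative to the standard residual graph---is apt, and your argument handles it correctly.
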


The initialization step of our algorithm takes $O(n^2)$ time to compute the set of dual weights and the matching $M$. Additionally, our algorithm runs $k$ iterations, where in each iteration, it executes the \reverse\ procedure in $O(n^2)$ time. Therefore, the total running time of our algorithm would be $O(n^2k)$ time. In geometric settings, using the BCP data structure described in Section~\ref{sec:hungarian_nk}, the initialization step, as well as the \reverse\ procedure, can be executed in $\tilde{O}(nk\Phi(n))$ time.

\subsection{Extension of the Fast Hungarian Algorithm to the \texorpdfstring{$k$}{}-SPI Problem} 
For the $k$-SPI problem, given $\mcG_{\requests'}$ with $|A|=n+k$ and $|B|=n$, the goal is to compute a minimum-cost $n$-matching of $A$ and $B$ that matches all points in $B$. We extend our algorithm for the $k$-SP problem to the $k$-SPI problem in a straightforward way as follows: Our algorithm initializes a matching $M$ corresponding to the $1$-SP optimal solution for the requests $\requests$. It then iteratively introduces a new server's initial location and finds a minimum net-cost alternating path from the free point of $A$ corresponding to the new server to a matched point of $A$. The details are provided next.

Set $M\leftarrow \{(a^1, b_1)\}\cup \{(a_i, b_{i+1})\mid i\in[1, n-1]\}$ as a matching that corresponds to the $1$-partitioning $\langle s_1, r_1, r_2,\ldots, r_n\rangle$ with the $s_1$ as the initial location. Assign dual weights to the points in $A\cup B$ as follows. Set $y(a_n)\leftarrow 0$. Starting from $i=n-1$, set $y(a_{i})\leftarrow\max\{0, \max_{j>i+1}y(b_j)-\distance{a_{i}}{b_j}\}$ and $y(b_{i+1})\leftarrow\distance{a_{i}}{b_{i+1}} + y(a_{i})$. Finally, set $y(a^{1})\leftarrow\max\{0, \max_{i}y(b_i)-\distance{a^{1}}{b_i}\}$ and $y(b_1)\leftarrow \distance{a^{1}}{b_{1}} + y(a^{1})$. Define $\mcG=\mcG_{\{s_1\}\requests}$ as the graph representation of the $1$-SPI problem with $s_1$ as the only server. The matching $M, y(\cdot)$ is a dual-optimal $(n-1)$-matching on $\mcG$.

In each iteration $1\le t\le k-1$, our algorithm adds a point $a^{t+1}$ to the vertex set $A$ of the graph $\mcG$ and connects it to all points $b_i\in B$ with a cost $\distance{a^{t+1}}{b_i} = \distance{s_{t+1}}{r_i}$. Assign $y(a^{t+1})\leftarrow\max\{0, \max_{i}y(b_i)-\distance{a^{t+1}}{b_i}\}$. Construct a reversed residual graph, similar to the one constructed for the $k$-SP problem, with the difference that the source vertex $s$ is only connected to the point $a^{t+1}$ with a zero cost. Execute the \reverse\ procedure on the reversed residual graph to update the dual weights and compute an admissible alternating path $P$ from $a^{t+1}$ to a point $a\in A$ with a zero dual weight. Set $M\leftarrow M\oplus P$.

\section{Bipartite Matching on Randomly Colored Points}\label{sec:randomly-colored}
Given a set $U$ of $2n$ points inside the unit square, suppose $A$ is a subset of $n$ points of $U$ chosen uniformly at random and let $B=U\setminus A$. Define $G$ to be a complete bipartite graph between $A$ and $B$, i.e., $G=(A\cup B, E=A\times B)$ is a bipartite graph on the vertex set $A\cup B$ with an edge set $E=A\times B$. In this section, we show that our algorithm from Section~\ref{sec:k-seq} can be used to compute the minimum-cost perfect matching between $A$ and $B$ under $\ell_2^q$ distances, for any $q\ge 1$, in $\tilde{O}(n^{7/4}\Phi(n)\log\Delta)$ time in expectation. For simplicity in presentation, we first present our analysis for $q=2$. In Section~\ref{sec:appendix-GRS} in the appendix, we extend our analysis to any dimension $d\ge 2$ and any $q\ge 1$.

To compute a minimum-cost perfect matching between $A$ and $B$, we construct a hierarchical partitioning $\mcH$ as described in Section~\ref{sec:hierarchical} with a parameter $\lambda = 9n^{-1/4}$. We then execute our algorithm from Section~\ref{sec:k-seq}, where the parameter $k$ is set to $0$. In this way, our algorithm computes a minimum-cost $\mcC^*$-extended $n$-matching $M^{\mcC^*}$ for the partitioning $\mcC^*=\{\cell^*\}$ for the root cell $\cell^*$ of $\mcH$. In the following lemma, we show that the matching $M$ of the extended matching $M^{\mcC^*}$ computed by our algorithm is a minimum-cost perfect matching, as desired.

\begin{lemma}
    Given the partitioning $\mcC^*=\{\cell^*\}$ and any minimum-cost $\mcC^*$-extended $n$-matching $M^\mcC=(M, B^{\mcC^*})$, the matching $M$ is a minimum-cost $n$-matching.
\end{lemma}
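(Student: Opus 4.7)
The plan is to adapt the argument of Lemma~\ref{lemma:root-inactive} to the $t=n$ case, exploiting the fact that the graph $G$ in this section is complete bipartite. I would argue by contradiction: suppose $M^{\mcC^*}=(M,B^{\mcC^*})$ is a minimum-cost $\mcC^*$-extended $n$-matching but $M$ is not a perfect matching, so $|M|<n$. Since $|M|+|B^{\mcC^*}|=n=|A|=|B|$, the deficit $n-|M|>0$ forces the simultaneous existence of at least one free point $a\in A$ and at least one boundary-matched point $b\in B^{\mcC^*}$.

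The next step is to use completeness of $G$: the edge $(a,b)$ lies in $E=A\times B$ automatically, so no path-finding (as was needed in Lemma~\ref{lemma:root-inactive}, where $\mcG_\sigma$ is not complete) is required. Setting $\hat{M}:=M\cup\{(a,b)\}$ and $\hat{B}^{\mcC^*}:=B^{\mcC^*}\setminus\{b\}$ produces a new extended matching $\hat{M}^{\mcC^*}$ of size $(|M|+1)+(|B^{\mcC^*}|-1)=n$, whose cost differs from $w_{\mcC^*}(M^{\mcC^*})$ by exactly $d(a,b)-\distance{b}{\mcC^*}$.

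The final ingredient is a crude magnitude estimate. Because $a$ and $b$ both lie in the unit square, $d(a,b)=\|a-b\|_2^q\le 2^{q/2}$. On the other hand, the root cell $\cell^*=[-3n,3n]^2$ constructed in Section~\ref{sec:hierarchical} sits at Euclidean distance at least $3n-1$ from every point of the unit square, so $\distance{b}{\mcC^*}\ge(3n-1)^q$. For any $n\ge 2$ the latter strictly exceeds the former, forcing $w_{\mcC^*}(\hat{M}^{\mcC^*})<w_{\mcC^*}(M^{\mcC^*})$ and contradicting the minimality of $M^{\mcC^*}$. Hence $|M|=n$, so $M$ has no boundary-matched points, and cost-minimality passes directly from $M^{\mcC^*}$ to $M$ as a minimum-cost perfect matching. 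The only delicate point I would double-check is the counting argument in the first paragraph that produces both a free $a$ and a boundary-matched $b$; everything downstream is trivialised by completeness of $G$ together with the enormous gap between inter-point costs in the unit square and distances to the boundary of $\cell^*$.
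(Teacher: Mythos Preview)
Your proposal is correct and follows essentially the same argument as the paper: assume a boundary-matched $b$ exists, use the consequent free $a\in A$ together with completeness of $G$ to swap the boundary match for the edge $(a,b)$, and invoke the gap between inter-point distances in the unit square and the distance to $\partial\cell^*$ to obtain a cheaper extended $n$-matching. The paper states the comparison as $\distance{a}{b}\le\diam<\distance{b}{\mcC^*}$ without writing out the explicit bounds $2^{q/2}$ and $(3n-1)^q$, but the content is identical.
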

\begin{proof}
    For the sake of contradiction, suppose the matching $M$ is not an $n$-matching and there exists a boundary-matched point $b\in B^{\mcC^*}$. In this case, there is a free point $a\in A$, where $\distance{a}{b}\le \diam <\distance{b}{\mcC^*}$, where $\diam$ is the diameter of the points $A\cup B$ i.e., matching the point $b$ to the free point $a$ instead of the boundaries of $\cell^*$ reduces the cost of the extended matching, which is a contradiction to the assumption that $M^{\mcC^*}$ is a minimum-cost extended matching. Hence, $B^{\mcC^*}=\emptyset$, and $M$ is a (minimum-cost) $n$-matching.
\end{proof}


We next analyze the running time of our algorithm, leading to the following lemma.

\begin{restatable}{lemma}{GRSAnalysiss}\label{lemma:GRS-analysis-p2}
    Suppose $U$ is a set of $2n$ points inside the unit square and $A$ is a subset chosen uniformly at random from all subsets of size $n$. Let $B=U\setminus A$.  Then, our algorithm computes the minimum-cost matching on the complete bipartite graph on $A$ and $B$ under squared Euclidean distances in $\tilde{O}(n^{7/4}\log \Delta)$ expected time.
\end{restatable}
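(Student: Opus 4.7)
\begin{proofsketch}
The plan is to execute the algorithm of Section~\ref{sec:k-seq} with $k=0$ and hierarchical-partitioning parameter $\lambda=9n^{-1/4}$, and then replay the efficiency analysis of Section~\ref{subsec:efficiency} with exponents rescaled for the squared-Euclidean/random-partition setting. Correctness is already handed to us by the observation preceding the statement: with $k=0$, a minimum-cost $\mcC^*$-extended $n$-matching has no boundary-matched points, so the returned matching is a minimum-cost perfect matching on the complete bipartite graph.

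The key new ingredient is an expected-cost analog of Lemma~\ref{lemma:geometric-matching}: for every cell $\cell$ of $\mcH$, I would show that in expectation over the random partition there exists a matching $M'$ inside $\cell$ leaving at most $\tilde{O}(n^{3/4})$ points of $B_\cell$ unmatched and having cost at most $\tilde{O}(\ell_\cell^2 n^{1/4})$. I construct $M'$ by a truncated hierarchical grid matching: subdivide $\cell$ dyadically into $4^i$ sub-cells of side $\ell_\cell/2^i$ at each level $i\in[j^*,L]$ with $j^*=\lceil\log_2(n^{3/4}/\sqrt{n_\cell})\rceil$, and at each level match minorities within each subcell, propagating the remaining color-majority points up to the next coarser level. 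Hypergeometric concentration for a random partition gives $O(\sqrt{s})$ expected color imbalance in any subcell containing $s$ points, so the free count after level $j^*$ telescopes to $\tilde{O}(2^{j^*}\sqrt{n_\cell})=\tilde{O}(n^{3/4})$ and the per-level cost $O(\ell_\cell^2\sqrt{n_\cell}/2^i)$ telescopes to $O(\ell_\cell^2\sqrt{n_\cell}/2^{j^*})=O(\ell_\cell^2 n^{1/4})$.

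With this $M'$ in hand I would re-run the proofs of Lemmas~\ref{lemma:combination}(b) and~\ref{lemma:high-max} essentially verbatim. Decomposing $M_\cell\oplus M'$ into standard paths, the free endpoints in $B_\cell^\mcC$ are accounted for by at most $|F(M')|=\tilde{O}(n^{3/4})$ alternating paths plus augmenting paths whose free endpoints $b$ satisfy $y(b)=\distance{b}{\divider\cell}$ by~\eqref{eq:dualfeasibility-b-admissible}. Split the latter at Euclidean threshold $\ell_\cell n^{-1/4}$: close endpoints number $O(n_\cell n^{-1/4})\le O(n^{3/4})$ by Lemma~\ref{lemma:margin}, while each far endpoint contributes at least $\ell_\cell^2 n^{-1/2}$ to $\sum y(b)\le w(M')$ via Lemma~\ref{lem:slackcost}, capping the far count at $\tilde{O}(w(M')/(\ell_\cell^2 n^{-1/2}))=\tilde{O}(n^{3/4})$. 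The same split bounds the free points remaining in $\cell$ once $\ymax>\ell_\cell^2 n^{-1/2}$ by $\tilde{O}(n^{3/4})$, capping the high-net-cost selections of $\cell$ by the extended Hungarian search; low-net-cost selections are bounded as in Section~\ref{subsec:efficiency} via Lemmas~\ref{lemma:margin} and~\ref{lemma:ratio} combined with $\sum_{\cell'\in\mcC_\cell}n_{\cell'}=O(n)$, yielding another $\tilde{O}(n^{3/4})$. Summing $\tilde{O}(n^{3/4})$ processings per cell at $\tilde{O}(n_\cell\Phi(n_\cell))$ each over the $O(\log\Delta)$ levels of $\mcH$ gives the claimed $\tilde{O}(n^{7/4}\log\Delta)$ expected time, with $\Phi(n)=\log^{O(1)}n$ for squared-Euclidean~\cite{kaplan2020dynamic} absorbed into $\tilde{O}(\cdot)$.

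The main obstacle is the hidden-matching estimate. In the i.i.d.\ setting of~\citet{gattani2023robust} the subcell color counts are independent and the per-subcell imbalance is a routine Chernoff computation; here the random partition couples the subcell imbalances through the global constraint $|A|=n$, so I would handle this by invoking hypergeometric tail bounds subcell-by-subcell (summing expectations requires only marginals, not joint independence) or by the negative-association property of sampling without replacement, ensuring that the telescoping estimate above holds simultaneously at all $O(\log\Delta)$ levels.
\end{proofsketch}
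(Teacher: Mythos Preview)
Your proof is correct and follows the same high-level strategy as the paper: run the algorithm with $k=0$ and $\lambda=9n^{-1/4}$, exhibit a hidden matching inside each cell, and bound $|B_\cell^\mcC|$ by $\tilde{O}(n^{3/4})$ via the $M_\cell\oplus M'$ decomposition with the same close/far split on boundary-matched points. Two places where you diverge are worth noting.

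First, your hidden-matching construction is the standard dyadic hierarchical matching (match minorities at each level, propagate excess upward), truncated at level $j^*$ with $2^{j^*}\approx n^{3/4}/\sqrt{n_\cell}$. The paper instead uses a sequence of only $t=\lceil\log\log n\rceil$ grids with calibrated side-lengths $\ell_\cell n_\cell^{-\alpha_i}$ (Lemma~\ref{lemma:GRS-matching}), obtaining the slightly sharper local bounds $O(n_\cell^{3/4})$ free points and $O(\ell_\cell^2 n_\cell^{1/4})$ cost. Your simpler construction suffices for the final time bound; you should, however, make explicit the choice of the finest level $L$ (say $4^L\ge n_\cell^{3/4}$) so that the initial level-$L$ matching cost $O(n_\cell\ell_\cell^2/4^L)$ is also $O(\ell_\cell^2 n^{1/4})$ --- your per-level formula $O(\ell_\cell^2\sqrt{n_\cell}/2^i)$ only covers the propagation levels.

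Second, you replicate the full low/high-net-cost split and the $\mcC_\cell$ bookkeeping from Section~\ref{subsec:efficiency} to bound extended Hungarian search selections of $\cell$. The paper exploits $k=0$ to bypass this entirely (Lemma~\ref{lemma:matching-iter}): before any merge every cell has zero free points, so merging $\cell$ creates at most $|B_\cell^\mcC|$ free points, all lying in $\cell$, and no other cell can be selected until those are matched. Hence both the merge iterations and the search executions on $\cell$ are bounded directly by $|B_\cell^\mcC|$. Your more elaborate argument is not wrong, just unnecessary in the perfect-matching setting.
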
 


Similar to our efficiency analysis in Section~\ref{subsec:geometric-analysis}, we first show that for any cell $\cell$ of $\mcH$, there exists a low-cost high-cardinality matching inside $\cell$. We then use this matching to prove a bound on the number of iterations of the Hungarian search procedure.

\begin{restatable}{lemma}{GRSPartialMatching}\label{lemma:GRS-matching}
    For any cell $\cell$ of $\mcH$, there exists a matching $M'$ that, in expectation, matches all except $O(n_\cell^{3/4})$ points of $B_\cell$ and has a cost $O(\ell_\cell^2 n_\cell^{1/4})$.
\end{restatable}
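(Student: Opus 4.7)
The plan is to follow the grid construction in the proof of Lemma~\ref{lemma:geometric-matching}, adapted to the random-coloring setting with the $\ell_2^q$ ($q = 2$) cost. I would place an axis-aligned grid $G$ inside $\cell$ with cell-side-length $s = \ell_\cell n_\cell^{-1/4}$, giving $|G| = O(\sqrt{n_\cell})$ cells. For each grid cell $\xi$, set $A_\xi := A \cap \xi$ and $B_\xi := B \cap \xi$, and take $M_\xi$ to be the minimum-cost bipartite matching between $A_\xi$ and $B_\xi$ under squared-Euclidean cost; let $M' := \bigcup_{\xi \in G} M_\xi$.

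To bound the expected number of unmatched points of $B$, note that the points of $B_\xi$ left unmatched by $M_\xi$ number at most $\bigl| |A_\xi| - |B_\xi| \bigr|$. Since $A$ is a uniformly random size-$n$ subset of $U$, for each $\xi$ with $m_\xi := |U \cap \xi|$ the count $|A_\xi|$ is hypergeometric with mean $m_\xi/2$ and variance $O(m_\xi)$, so $\mathbb{E}\bigl[\, \bigl| |A_\xi| - |B_\xi| \bigr|\, \bigr] = O(\sqrt{m_\xi})$. Summing over $\xi$ and applying Cauchy--Schwarz against $|G| = O(n_\cell^{1/2})$,
\[
\mathbb{E}\Bigl[\sum_{\xi \in G} \bigl| |A_\xi| - |B_\xi| \bigr|\Bigr] \le O\Bigl(\sqrt{|G| \cdot \sum_\xi m_\xi}\Bigr) = O\bigl(\sqrt{n_\cell^{1/2}\cdot n_\cell}\bigr) = O(n_\cell^{3/4}).
\]

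For the cost, the key ingredient would be a per-cell bound $\mathbb{E}[w(M_\xi)] = O(s^2 \sqrt{m_\xi})$: for any $m$ points inside a diameter-$D$ region, a uniformly random two-coloring yields expected optimal squared-Euclidean matching cost $O(D^2 \sqrt{m})$. Applying this on each cell of diameter $O(s)$ and summing via Cauchy--Schwarz once more,
\[
\mathbb{E}[w(M')] \le O(s^2)\cdot \sum_\xi \mathbb{E}\bigl[\sqrt{m_\xi}\bigr] \le O\bigl(s^2\cdot n_\cell^{3/4}\bigr) = O\bigl(\ell_\cell^2 n_\cell^{-1/2}\cdot n_\cell^{3/4}\bigr) = O(\ell_\cell^2 n_\cell^{1/4}),
\]
matching the stated bound.

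The hard part is establishing the per-cell bound $\mathbb{E}[w(M_\xi)] = O(s^2 \sqrt{m_\xi})$ for arbitrary point positions under a random two-coloring; this is the only piece beyond the deterministic argument of Lemma~\ref{lemma:geometric-matching}. I would obtain it by a dyadic decomposition of $\xi$ into sub-cells at scales $s, s/2, s/4,\dots$: at each level the expected count-discrepancy between red and blue in each sub-cell is controlled via a Hoeffding-type deviation, and the squared-Euclidean transport cost can be written as a telescoping sum of (sub-cell-side-length)${}^2\cdot$(per-level discrepancy) terms, summing to $O(s^2\sqrt{m_\xi})$ in expectation. This is the random-coloring analog of the AKT matching bound; its specialization to $\ell_2^q$ costs is established in~\cite{raghvendra2024new} and supplies exactly the ingredient we need.
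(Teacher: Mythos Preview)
Your approach is correct and takes a genuinely different route from the paper's own proof. The paper constructs $M'$ via an explicit multi-scale scheme: it fixes a sequence of $t=\lceil\log\log n\rceil$ grids with carefully tuned side-lengths $\ell_\cell n_\cell^{-\alpha_i}$, greedily matches within cells of the finest grid first, then matches the leftover excess within cells of the next grid, and so on. The cost and unmatched-count bounds then follow by summing the per-level contributions, each controlled by the expected-excess bound of Lemma~\ref{lemma:convergence_excess}.

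You instead use a single grid at scale $\ell_\cell n_\cell^{-1/4}$ and push the multi-scale work into the per-cell black box $\mathbb{E}[w(M_\xi)]=O(s^2\sqrt{m_\xi})$. This is a clean repackaging: the dyadic argument you sketch for the black box is essentially the same telescoping-over-scales computation the paper carries out globally, just localized to each grid cell. Your presentation is more modular (it isolates the random-coloring matching lemma as a reusable ingredient) and, because the geometric series $\sum_i 2^{-i}$ converges, your per-cell bound and the subsequent Cauchy--Schwarz step yield clean $O(\cdot)$ constants, whereas the paper's direct construction picks up $\tilde{O}(\cdot)$ factors from the Hoeffding bound in Lemma~\ref{lemma:convergence_excess} and the $\log\log n$ levels. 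The trade-off is that your proof is not fully self-contained until the black-box lemma is written out, while the paper's explicit $\alpha_i$ recursion makes every step visible at the cost of heavier bookkeeping.
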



For any cell $\cell$ of $\mcH$, we next show that the number of iterations of the merge step and the total number of executions of the extended Hungarian search procedure on $\cell$ is $\tilde{O}(n^{3/4})$.

\myparagraph{Number of Iterations.} 
Let $B_\cell^\mcC$ denote the set of all boundary-matched points of $B_\cell$ that are matched to the divider $\divider\cell$ in the matching maintained by our algorithm before executing the merge step on $\cell$. We first show in Lemma~\ref{lemma:matching-iter} that the total number of iterations of the merge step and the extended Hungarian search procedure on $\cell$ is $|B_\cell^\mcC|$.

\begin{restatable}{lemma}{matchingIterations}\label{lemma:matching-iter}
    For any cell $\cell$ of $\mcH$, the total number of iterations of the merge step and the total number of executions of the search procedure on $\cell$ is $O(|B_\cell^\mcC|)$.
\end{restatable}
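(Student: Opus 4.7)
The plan is to bound separately the number of merge-step iterations and the number of extended Hungarian search calls performed on $\cell$, and to show both quantities are $O(|B_\cell^\mcC|)$.

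First, the merge step on $\cell$ is executed exactly once, namely when the two children $\cell'$ and $\cell''$ of $\cell$ are replaced by $\cell$ in $\mcC$. Lemma~\ref{lemma:local-inactive} already supplies the required bound on its inner iterations: after erasing the divider $\Gamma_\cell$, exactly $|B_\cell^\mcC|$ points of $B_\cell$ become free with dual weight strictly below $\ymax$, and the while-loop of the merge procedure decreases this count by one per iteration, for a total of $O(|B_\cell^\mcC|)$ iterations.

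For the search calls I would exploit the fact that in this setting $k=0$. The outer loop of the algorithm therefore runs the extended Hungarian search until $|B_F|=0$ before every merge step, so immediately before $\cell$'s merge there are no free points in $A\cup B$; in particular $B_{\cell'}\cup B_{\cell''}$ contains no free point. Erasing $\Gamma_\cell$ then introduces $|B_\cell^\mcC|$ free points, all lying inside $B_\cell$. Each inner iteration of the merge procedure either matches a free point (case (b)), reducing the free-point count by one, or transfers ``freeness'' from a point of low dual weight to a matched point whose dual weight is raised to $\ymax$ (case (a)), leaving the count unchanged. Hence at the end of $\cell$'s merge the number of free points in $B_\cell$ is at most $|B_\cell^\mcC|$.

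At this moment the priority queue $\pq$ contains only $\cell$, so the ensuing search phase picks $\cell$ in every iteration. By Lemma~\ref{lemma:hung_properties}, each such iteration augments along an admissible path ending either at a free vertex of $A_\cell$ or at a zero-slack vertex of $B_\cell$ that becomes boundary-matched; in either case the number of free vertices of $B_\cell$ drops by one. Therefore at most $|B_\cell^\mcC|$ search iterations are performed on $\cell$ before it is removed from $\pq$, and because no later merge can reintroduce free points into $B_\cell$ (its children have already been merged), $\cell$ is never picked by the search procedure again. Summing the two bounds gives the lemma. The main subtlety, and the step I expect to require the most care, is the observation that $k=0$ forces $|B_F|=0$ between consecutive merges, which is what allows every search call on $\cell$ to be charged against the $|B_\cell^\mcC|$ points freed by erasing $\Gamma_\cell$; without this, $\cell$ could in principle be revisited by the search many more times, driven by free-point arrivals from other cells of $\mcC$.
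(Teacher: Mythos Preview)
Your proof is correct and follows essentially the same approach as the paper: both invoke Lemma~\ref{lemma:local-inactive} for the merge iterations, then use $k=0$ to argue that immediately before the merge on $\cell$ there are no free points, that erasing $\Gamma_\cell$ creates exactly $|B_\cell^\mcC|$ free points, that the merge loop never increases this count, and that each subsequent search execution on $\cell$ eliminates one free point. Your additional remarks (that $\pq$ contains only $\cell$ at that moment, and that no later merge can reintroduce free points into $B_\cell$) are correct and make explicit what the paper leaves implicit.
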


We next show that $|B_\cell^\mcC|=O(n^{3/4})$. 
Since each iteration can be executed in $\tilde{O}(n_\cell\Phi(n))$ time, the total processing time of our algorithm for each cell $\cell$ of $\mcH$ is $\tilde{O}(n^{3/4}n_\cell\Phi(n))$, and the total execution time across all cells of $\mcH$ would be $\tilde{O}(n^{7/4}\Phi(n)\log\Delta)$, leading to Lemma~\ref{lemma:GRS-analysis-p2}.

\begin{restatable}{lemma}{GRSIters}\label{lemma:GRS-cell-process}
    For any cell $\cell$ of $\mcH$, $|B_\cell^\mcC|=O(n^{3/4})$.
\end{restatable}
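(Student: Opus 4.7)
The plan is to mirror the analysis of Lemma~\ref{lemma:combination}(b), but now calibrated to the randomly colored setting with $\ell_2^2$ costs. Let $M^\mcC = (M, B^\mcC)$ denote the feasible extended matching maintained just before the merge step on $\cell$, with dual weights $y(\cdot)$. By Lemma~\ref{lemma:no-cross}, no matching edge of $M$ crosses the boundary of any cell in $\mcC$, so the restriction $M_\cell \subseteq M$ to edges inside $\cell$ is itself a matching on $A_\cell\cup B_\cell$, and every point of $B_\cell^\mcC$ is free in $M_\cell$. I will invoke Lemma~\ref{lemma:GRS-matching} to obtain a matching $M'$ inside $\cell$ with $\mbE[|F(M')|] = O(n_\cell^{3/4})$ and $\mbE[w(M')] = O(\ell_\cell^2 n_\cell^{1/4})$, and then analyze the symmetric difference $M_\cell \oplus M'$.

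Each point $b \in B_\cell^\mcC$ is the free endpoint of a unique connected component of $M_\cell \oplus M'$, which is either (i) an alternating path $P$ whose other endpoint is free in $M'$, or (ii) an augmenting path $P$ that ends at a point of $A_\cell$ free in $M_\cell$. Let $\mcP_{\mathrm{alt}}$ and $\mcP_{\mathrm{aug}}$ be these two sets of paths. Case (i) is bounded directly by $|\mcP_{\mathrm{alt}}| \le |F(M')|$, which is $O(n_\cell^{3/4}) = O(n^{3/4})$ in expectation. For case (ii), I define the net-cost of $P$ with respect to $M'$ and $M_\cell$ exactly as in the proof of Lemma~\ref{lemma:combination}(b); telescoping over $\mcP_{\mathrm{aug}}$ and bounding the $M'$-cost by $w(M')$ gives
\[
\sum_{P \in \mcP_{\mathrm{aug}}} \phi(P) \le w(M'),
\]
and then Lemma~\ref{lem:slackcost} yields $\sum_{b \in B_{\mathrm{aug}}} y(b) \le w(M')$, where $B_{\mathrm{aug}}$ is the set of free endpoints of paths in $\mcP_{\mathrm{aug}}$.

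To convert this into a cardinality bound, I split $B_{\mathrm{aug}}$ according to the Euclidean distance of $b$ to the divider $\divider\cell$. Because $b$ is boundary-matched with $y(b) = \distance{b}{\mcC}$ under $\ell_2^2$ costs, a point $b$ at Euclidean distance $d$ from $\divider\cell$ contributes $y(b) = d^2$. Set the threshold at $d = \ell_\cell \lambda = 9\ell_\cell n^{-1/4}$. The close set $B_{\mathrm{aug}}^{\mathrm{close}}$ of points within Euclidean distance $\ell_\cell\lambda$ of $\divider\cell$ has size $O(n_\cell \lambda) = O(n^{3/4})$ by Lemma~\ref{lemma:margin}. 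Each far point has dual weight exceeding $(\ell_\cell\lambda)^2 = \Omega(\ell_\cell^2 n^{-1/2})$, so
\[
|B_{\mathrm{aug}}^{\mathrm{far}}| \le \frac{w(M')}{\Omega(\ell_\cell^2 n^{-1/2})} = O(n_\cell^{1/4} n^{1/2}) = O(n^{3/4})
\]
in expectation. Combining the three bounds and using linearity of expectation gives $\mbE[|B_\cell^\mcC|] = O(n^{3/4})$, and a simple inspection of the argument (all quantities are bounded by deterministic $O(n)$ terms up to the random $w(M')$ and $|F(M')|$) shows the claimed bound holds in expectation.

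The one subtlety that the proof has to handle carefully is that both $|F(M')|$ and $w(M')$ are random; the argument above is really about $\mbE[|B_\cell^\mcC|]$, not a worst-case quantity. This is the right form for the overall runtime bound of Lemma~\ref{lemma:GRS-analysis-p2}, since each iteration of the merge/extended-Hungarian search loop costs $\tilde O(n_\cell\Phi(n))$ deterministically and we can apply linearity of expectation across all cells of $\mcH$. The only technical hurdle is verifying that Lemma~\ref{lemma:GRS-matching} provides a matching that is independent enough of the maintained matching $M_\cell$ to allow the symmetric-difference argument to go through; here, $M'$ is a fixed construction built from the geometry of $\cell$ and the random coloring, so this independence is automatic.
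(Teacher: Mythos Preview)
Your proposal is correct and follows essentially the same route as the paper's proof: take the symmetric difference $M_\cell\oplus M'$ with the low-cost high-cardinality matching $M'$ of Lemma~\ref{lemma:GRS-matching}, bound $|\mcP_{\mathrm{alt}}|$ by $|F(M')|$, bound $\sum_{b\in B_{\mathrm{aug}}} y(b)$ by $w(M')$ via Lemma~\ref{lem:slackcost}, and split $B_{\mathrm{aug}}$ into close/far points using the threshold $\Theta(\ell_\cell n^{-1/4})$ together with Lemma~\ref{lemma:margin} and the squared-distance lower bound on $y(b)$. Your explicit acknowledgment that the bound is in expectation (inherited from Lemma~\ref{lemma:GRS-matching}) and that this suffices for the expected-runtime statement of Lemma~\ref{lemma:GRS-analysis-p2} is a welcome clarification that the paper leaves implicit.
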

\begin{proof}
    Let $M'$ denote the matching constructed in Lemma~\ref{lemma:GRS-matching}, and let $\mcP_{\mathrm{aug}}$ (resp. $\mcP_{\mathrm{alt}}$) denote the set of augmenting paths (resp. alternating paths) with an endpoint in $B_\cell^\mcC$ in the symmetric difference $M_\cell\oplus M'$. Then, $|\mcP_{\mathrm{alt}}|=O(n^{3/4})$ since the number of alternating paths $P$ in $\mcP_{\mathrm{alt}}$ is bounded by the number of free points in $M'$. 
    We partition the free endpoints of $\mcP_{\mathrm{aug}}$ into the set $B^{\mathrm{close}}_{\mathrm{aug}}$ that are at a distance closer than $\lambda'_\cell=\ell_\cell n^{-1/4}$ to $\divider\cell$ and the set $B^{\mathrm{far}}_{\mathrm{aug}}$ that are at a distance further than $\lambda'_\cell$ from $\divider\cell$. By Lemma~\ref{lemma:margin}, only $O(n_\cell^{3/4})$ points are close to $\divider\cell$ and $|B^{\mathrm{close}}_{\mathrm{aug}}|=O(n^{3/4})$. 
    For points in $B^{\mathrm{far}}_{\mathrm{aug}}$, we show that the dual weight of the points in $B^{\mathrm{far}}_{\mathrm{aug}}$ is bounded by the cost of $M'$, and each point in $B^{\mathrm{far}}_{\mathrm{aug}}$ has a dual weight of at least $\lambda'_\cell = \ell_\cell n^{-1/4}$. Therefore, $|B^{\mathrm{far}}_{\mathrm{aug}}|\le \frac{w(M')}{\lambda'_\cell}= O(n^{4/5})$. The complete proof is provided in Appendix~\ref{appendix-randomly-colored}.
\end{proof}

Each iteration of the merge step and each execution of the search procedure takes $\tilde{O}(n_\cell\Phi(n))$ time and therefore, the total execution time of our algorithm on $\cell$ would be $\tilde{O}(n^{3/4}n_\cell\Phi(n))$. Since each point participates in $O(\log(n\Delta))$ cells in $\mcH$, the total execution time of our algorithm across all cells would be $\tilde{O}(n^{7/4}\Phi(n)\log\Delta)$, proving Lemma~\ref{lemma:GRS-analysis-p2}.



\section{Conclusion and Open Questions}\label{sec:conclusion}
In this paper, we presented an exact algorithm for the $k$-SP and $k$-SPI problems for the case where requests are $d$-dimensional points with a spread bounded by $\Delta$. Our algorithm makes sub-quadratic calls to a weighted nearest-neighbor data structure and has a logarithmic dependence on the spread $\Delta$.  We conclude by raising two important open questions:

First, can we eliminate the $O(\log \Delta)$ dependency in the execution time of our algorithm? Achieving this would enable sub-quadratic algorithms for geometric bipartite matching, providing significant progress on a longstanding open question in computational geometry. 

Second, we reduced the computation of the optimal offline solution for the $k$-server problem to finding the minimum-cost bipartite matching on a dense bipartite graph $\mathcal{G}_\requests$. This reduction also proposes the $k$-server problem (online version) as a mild variant of the online metric matching problem. There has been extensive work on designing algorithms for the online $k$-server problem~\cite{bubeck2018k,S7,S16,S3, jamesLee2018, S14, raghvendra2022scalable, rudec2013fast}. By leveraging this reduction, the implementation of the work function algorithm, such as the scalable approach by Raghvendra and Sowle~\cite{raghvendra2022scalable} can be simplified and improved. Furthermore, the best-known algorithm for online metric matching, the RM algorithm~\cite{my_focs_paper, raghvendra2016robust}, can be adapted to solve the online $k$-server problem. Notably, it has been shown that the RM algorithm has a better competitive ratio compared to the work function algorithm for online bipartite matching on a line metric~\cite{s_socg18}. Can we derive better upper bounds for the competitive ratio of the RM algorithm when applied to the $k$-server problem? Such results would mark progress toward proving the deterministic $k$-server conjecture. 

\section*{Acknowledgement}
The research presented in this paper was funded by NSF CCF-2223871. We thank the anonymous reviewers for their useful feedback.

\bibliographystyle{plainnat}
\bibliography{main}

\newpage
\appendix
\section{Relating Bipartite Matching Problem with the \texorpdfstring{$k$}{-}-SP and \texorpdfstring{$k$}{-}-SPI Problems}

\subsection{Reducing \texorpdfstring{$k$}{}-SP and \texorpdfstring{$k$}{}-SPI Problems to the Geometric Bipartite Matching Problem}\label{sec:background}
In this section, we show that the $k$-sequence partitioning problem (resp. the $k$-sequence partitioning with initial locations problem) reduces to the problem of computing a minimum-cost matching of size $n-k$ (resp. $n$) in a bipartite graph with $2n$ (resp. $2n+k$) vertices, leading to Lemma~\ref{lemma:matching-reduction}. 


\myparagraph{Graph Representation.} Given an input sequence $\requests$ for the $k$-SP problem, recall that in our reduction, we create two vertex sets $A$ and $B$ containing $n$ points each. For each request $r_i$, we create a vertex $b_i$ (resp. $a_i$) in $B$ (resp. $A$) as the entry (resp. exit) gate for request $r_i$. The set of edges is created as follows. For any request $r_i$ and every $j > i$, we connect the exit gate $a_i$ of request $r_i$ to the entry gate $b_j$ of request $r_j$ and assign it a cost $\distance{a_i}{b_j}=\|r_i-r_j\|_p$. 
This completes the construction of the graph $\mcG_\requests$. Note that $|A|=|B|=n$.

\myparagraph{Relating a Partial Matching to a $k$-Partitioning.} 
Consider any matching $M$ of size $(n-k)$ in $\mcG_\requests$. There are exactly $k$ requests whose entry gates are free in $M$ and $k$ requests whose exit gates are free in $M$. All these requests have at most one edge incident on them. Every other request has edges incident on both the entry and exit gates. Therefore, with respect to the matching edges, $k$ requests have a degree at most one and all other requests have a degree exactly two. It is easy to see that these edges do not form a cycle, as there is no edge between the exit gate $a_i$ of a request $r_i$ to the entry gate $b_j$ of a request $r_j$ with $j<i$; therefore, the set of matching edges partitions the requests into a set of $k$ paths, representing the $k$-partitioning for the $k$-SP problem. The following lemma states and proves our reduction more formally. 



\begin{restatable}{lemma}{relationToMatching}\label{lemma:relationToMatching}
    Computing a minimum-cost $k$-partitioning for a sequence of requests $\requests$ reduces to computing a minimum-cost $(n-k)$-partial matching in the graph representation $\mcG_\requests$.
\end{restatable}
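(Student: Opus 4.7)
The plan is to establish a cost-preserving bijection between the set of $k$-partitionings of $\requests$ and the set of $(n-k)$-matchings of $\mcG_\requests$. Once such a bijection is in place, the minimum on one side corresponds to the minimum on the other, yielding the reduction directly.

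First, I would describe the forward direction. Given subsequences $\varsigma_1, \ldots, \varsigma_k$ with $\varsigma_j = \langle r_{i^j_1}, \ldots, r_{i^j_{m_j}}\rangle$ (indices listed in increasing order, since each subsequence is drawn in order from $\requests$), I define
\[
M \;=\; \bigcup_{j=1}^{k}\bigl\{(a_{i^j_l},\, b_{i^j_{l+1}}) \,:\, 1 \le l \le m_j - 1\bigr\}.
\]
I would then verify three things: (i) $M$ is a valid matching, since the partitioning guarantees that each $a_i$ and each $b_i$ is incident to at most one chosen edge; (ii) $|M| = \sum_j (m_j - 1) = n-k$; and (iii) its cost equals $\sum_j \seqcost(\varsigma_j)$, which is immediate from the edge weight $\|r_i-r_j\|_p$.

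Next, I would describe the inverse. Given an $(n-k)$-matching $M$ in $\mcG_\requests$, construct a directed graph $H$ on $\{1,\ldots,n\}$ with an arc $i \to j$ whenever $(a_i, b_j)\in M$. Every vertex of $H$ has in-degree and out-degree at most one (by the matching property), and because $\mcG_\requests$ only contains edges $(a_i, b_j)$ with $j > i$, the graph $H$ has no directed cycle. Hence $H$ decomposes into vertex-disjoint directed paths covering all $n$ indices, and since $|M|=n-k$ there are exactly $k$ such paths. Reading each path in order produces a subsequence $\varsigma_j$, and summing the weights $\|r_{i^j_l}-r_{i^j_{l+1}}\|_p$ of its arcs recovers $\seqcost(\varsigma_j)$, so the total cost of $M$ coincides with the cost of the induced partitioning.

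Because the two maps are mutually inverse and both preserve cost, an optimum on either side is transported to an optimum on the other. The only subtlety worth flagging is the acyclicity of $H$, which is precisely where the linear ordering of $\requests$ and the absence of backward edges in $\mcG_\requests$ are essential; no serious obstacle is anticipated beyond this bookkeeping.
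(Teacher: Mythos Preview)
Your proposal is correct and follows essentially the same approach as the paper: both directions construct the same cost-preserving correspondence between $k$-partitionings and $(n-k)$-matchings, with the acyclicity of the auxiliary digraph (equivalently, the absence of backward edges in $\mcG_\requests$) being the key observation that guarantees a path decomposition rather than cycles. Your formulation via the directed graph $H$ is a slightly more compact packaging of the paper's iterative ``follow the matching edges from each free entry gate'' construction, but the content is the same.
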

\begin{proof}
Consider any $k$-partitioning $\plan=\{\varsigma_1,\ldots, \varsigma_k\}$ of $\requests$. We construct a matching $M_\plan$ representing $\plan$ as follows. For any subsequence $\varsigma_i\in \plan$ and any pair of consecutive requests $r_{i_1}, r_{i_2}$ in $\varsigma_i$, we add the edge $(a_{i_1}, b_{i_2})$ to the matching $M_\plan$. Since each request $r_i\in\requests$ is included in exactly one subsequence of $\plan$, at most one edge of $M_\plan$ is incident on each point of $A\cup B$. Furthermore, for each sub-sequence $\varsigma\in\plan$, among all entry and exit gates of the requests in $\varsigma$, only the entry gate of the first request and the exit gate of the final request in $\varsigma$ are free in $M_\plan$. Hence, $M_\plan$ is a matching of size $n-k$ and by the definition of the cost of $k$-partitionings and matchings, has the same cost as $\plan$.

Next, given a $(n-k)$-partial matching $M$ on $\mcG_\requests$, we construct a $k$-partitioning $\plan_M$ of the requests $\requests$ with the same cost as $M$. Our construction relies on the following observation: since $M$ is a $(n-k)$-partial matching, all except $k$ points in $B$ are matched in $M$. For each request $r_i\in\requests$ whose entry gate $b_i$ is free in $M$, we construct a subsequence $\varsigma_i$ of requests by following the matching edges as described next. Initialize $\varsigma_i=\langle r_i\rangle$. At any step, suppose $\varsigma_i=\langle r_{i_1}, \ldots, r_{i_j}\rangle$. If the exit gate $a_{i_j}$ of the last request $r_{i_j}$ is free in $M$, we stop the construction of $\varsigma_i$; otherwise, $a_{i_j}$ is connected to an entry gate $b_{l}$ for a request $r_{l}$ with $l>i_j$. Add $r_l$ to the subsequence $\varsigma_i$ as $r_{i_{j+1}}$. This completes the construction of $k$ subsequences, one for each of the $k$ free entry gates. It is easy to confirm that the resulting subsequences are a $k$-partitioning of $\requests$ and its cost is equal to the cost of the matching $M$.
\end{proof}

\myparagraph{Extension to the $k$-SPI Problem.}
For the $k$-sequence partitioning with initial locations problem, given a sequence $\requests$ of $n$ requests and a set $\servers$ of the initial locations of the $k$ servers, we create two vertex sets $A$ and $B$, where $|A|=n+k$ and $|B|=n$. For each request $r_i\in\requests$, we create a vertex $b_i$ (resp. $a_i$) in $B$ (resp. $A$) as the entry (resp. exit) gate for request $r_i$. We also add a vertex $a^j$ for each server $s_j\in\servers$ to $A$. The set of edges is created as follows. For any server $s_j$, we add an edge from $a^j$ to the entry gate $b_i$ of all requests $r_i$ with a cost $\distance{a^j}{b_i}=\|s_j-r_i\|_p$. For any request $r_i$ and every $j > i$, we connect the exit gate $a_i$ of request $r_i$ to the entry gate $b_j$ of request $r_j$ and assign it a cost $\distance{a_i}{b_j}=\|r_i-r_j\|_p$.  See Figure~\ref{fig:enter-label}. The following lemma, whose proof is a straightforward extension of the proof of Lemma~\ref{lemma:relationToMatching}, states that the $k$-SPI problem reduces to the minimum-cost maximum matching problem on $\mcG_\requests$.

\begin{figure}
    \centering    \hspace{2em}\includegraphics[width=0.4\linewidth]{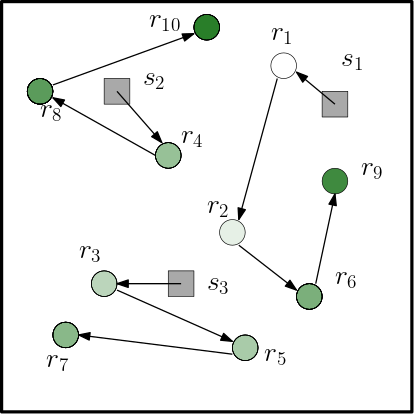}\hfill
    \includegraphics[width=0.4\linewidth]{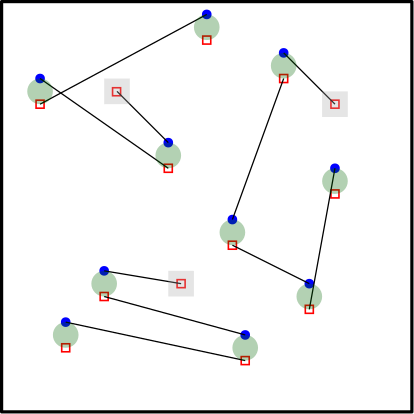}\hspace{2em}
    \caption{(Left) A $3$-partitioning $\langle s_1, r_1, r_2, r_6, r_9\rangle$, $\langle s_2, r_4, r_8, r_{10}\rangle$, $\langle s_3, r_3, r_5, r_7\rangle$ for the $3$-SPI problem and (right) its corresponding maximum matching on the graph representation of the problem.}
    \label{fig:enter-label}
\end{figure}

\begin{restatable}{lemma}{relationToMatchingServers}\label{lemma:relationToMatchingServers}
    Computing a minimum-cost $k$-partitioning with initial locations for a sequence of requests $\requests$ and a set of servers $\servers$ reduces to computing a minimum-cost maximum matching in the graph representation $\mcG_{\servers\requests}$.
\end{restatable}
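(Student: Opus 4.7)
The plan is to mirror the proof of Lemma~\ref{lemma:relationToMatching}, adapted to account for the $k$ server vertices $\{a^j\}_{j=1}^k$ in $A$. In the forward direction, I would take any $k$-partitioning $\plan = \{\varsigma_1, \ldots, \varsigma_k\}$, where $\varsigma_j = \langle s_j, r_{j_1}, \ldots, r_{j_{m_j}}\rangle$, and define $M_\plan$ to consist of the edge $(a^j, b_{j_1})$ (whenever $\varsigma_j$ contains at least one request) together with $(a_{j_i}, b_{j_{i+1}})$ for every consecutive pair of requests inside $\varsigma_j$. A straightforward check shows that each $b_i$ has exactly one incident edge and each server vertex and each exit gate has at most one, so $|M_\plan| = n$ is a maximum matching and $\cost{M_\plan} = \sum_{i=1}^k \seqcost(\varsigma_i)$ by construction.

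For the reverse direction, given a maximum matching $M$ in $\mcG_{\servers\requests}$, I would first argue that $|M| = n$ (the forward construction already exhibits a matching of size $n$, and $|B| = n$ is an upper bound), so every entry gate in $B$ is matched. Then, for each server $s_j$, I would build a subsequence $\varsigma_j$ by starting at $s_j$ and repeatedly appending the request $r_{j'}$ whose entry gate $b_{j'}$ is matched in $M$ to the previous vertex's exit gate (or to the server vertex $a^j$ in the first step), terminating when the current exit gate is free in $M$.

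The main obstacle is verifying that these $k$ subsequences together partition $\requests$, i.e., each request appears in exactly one $\varsigma_j$. The key structural fact to exploit is that every backward pointer from a matched $b_i$ either lands on a server vertex $a^j$ or on an exit gate $a_{i'}$ with $i' < i$, since $\mcG_{\servers\requests}$ contains no edge $(a_{i'}, b_i)$ with $i' \ge i$. Repeatedly following this backward pointer therefore strictly decreases the request index, so the trace must terminate, and it can terminate only at a server vertex. Consequently, every matched $b_i$ lies on a unique forward chain originating at some server, and since all $n$ entry gates are matched, every request appears in exactly one $\varsigma_j$. The cost identity $\cost{M} = \sum_i \seqcost(\varsigma_i)$ is immediate from the construction, so the two directions are cost-preserving inverses, establishing that a minimum-cost $k$-partitioning with initial locations corresponds precisely to a minimum-cost maximum matching in $\mcG_{\servers\requests}$.
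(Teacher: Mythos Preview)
Your proposal is correct and follows essentially the same approach as the paper: both directions construct the matching/partitioning by threading consecutive edges along each server's chain, and the cost identity is immediate. If anything, your backward-trace argument (indices strictly decrease, so every matched $b_i$ reaches a server vertex) makes the partition property in the reverse direction more explicit than the paper's own proof, which simply asserts it.
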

\begin{proof}
For any $k$-partitioning $\plan=\{\varsigma_1,\ldots, \varsigma_k\}$ of $\requests$, we construct a matching $M_\plan$ with the same cost as follows. For any subsequence $\varsigma_i=\langle s_i, r_{i_1}, \ldots, r_{i_t}\rangle$, we add the edge $(a^i, b_{i_1})$ along with $(a_{i_j}, b_{i_{j+1}})$ for each $j\in[1, t-1]$ to $M_\plan$. Each server and each request is included in exactly one subsequence of $\plan$ and therefore, each entry gate of each request has a degree exactly one in $M_\plan$ and the exit gate of each server and each request is incident on at most one edge of $M_\plan$. Hence, $M_\plan$ is a maximum matching, and by the construction of $\mcG_\requests$, $w(M_\plan)$ is the same as the cost of $\plan$.

Next, for a maximum matching $M$ on $\mcG_\requests$, we construct a $k$-partitioning $\plan_M$ of the servers and requests $\servers\requests$ with the same cost as $M$. For each server $s_i$, we construct a subsequence $\varsigma_i$ as follows. If $a^i$ is free, then we add the subsequence $\varsigma_i=\langle s_i\rangle$ to $\plan_M$. Otherwise, $a^i$ is matched to a point $b_j$. Initialize $\varsigma_i=\langle s_i, r_j\rangle$. At any step, suppose $\varsigma_i=\langle s_i, r_{i_1}, \ldots, r_{i_j}\rangle$. If the exit gate $a_{i_j}$ of the last request $r_{i_j}$ is free in $M$, we stop the construction of $\varsigma_i$; otherwise, $a_{i_j}$ is connected to an entry gate $b_{l}$ for a request $r_{l}$ with $l>i_j$. Add $r_l$ to the subsequence $\varsigma_i$ as $r_{i_{j+1}}$. This completes the construction of $k$ subsequences, one for each of the $k$ servers. By construction, the cost of $\plan_M$ is $w(M)$.
\end{proof}

\subsection{Reducing Geometric Bipartite Matching to the \texorpdfstring{$k$}{}-SPI Problem}\label{sec:reduction}

Given an instance of the geometric bipartite matching on $d$-dimensional points $A=\{a_1,\ldots, a_n\}$ and $B=\{b_1,\ldots, b_n\}$ under the $\ell_1$ costs, we create a $(d+1)$-dimensional instance of the $k$-SPI problem. The points of $B$ will be used to create the initial locations of the $n$ servers, and the points of $A$ will be used to create the request sequence. Our construction will be such that the optimal solution to this $n$-SPI problem will create sub-sequences of length $2$ mapping the initial location of each of the $n$ servers to exactly one request. These subsequences can be shown to also correspond to the minimum-cost matching of $A$ and $B$. We provide the details below. 

Let $\diam$ denote the diameter of $A\cup B$ under $\ell_1$ costs. We create two sets $A'$ and $B'$ of $(d+1)$-dimensional points as follows. For each point $a_i \in A$, we create a point $a_i' \in A'$ by setting the $(d+1)$st coordinate of $a_i$ to $3^{n-i+1}\diam$. For each point $b_i \in B$, we create a point $b_i' \in B'$ and set its $(d+1)$st coordinate to $0$. Consider the $n$-SPI problem with $\servers=\langle s_1=b_1',\ldots, s_n=b_n'\rangle$ and $\requests=\langle r_1=a_1',\ldots, r_n=a_n'\rangle$ as inputs. By our construction, for any request $r_i$, any server $s_l$, and any request $r_j$ with $j<i$,
\begin{align}
    \|r_i- s_l\|_1 &= \|a'_i- b'_l\|_1 = 3^{n-i+1}\diam + \|a_i- b_l\|_1\le 3^{n-i+1}\diam + \diam \nonumber \\&<  (3^{n-j+1}-3^{n-i+1})\diam \le (3^{n-j+1}-3^{n-i+1})\diam + \|a_i- a_j\|_1 \nonumber \\&= \|a'_i- a'_j\|_1= \|r_i-r_j\|.\label{eq:distance-EBM}
\end{align}
Intuitively, for any request $r_i$, it would be cheaper to directly serve $r_i$ by a new server (e.g., $s_l$) rather than serving $r_i$ by a server used to serve an earlier request (e.g., $r_j$).
We next formally show that the optimal $n$-partitioning under $\ell_1$ norm matches each of the $n$ servers to exactly one request and the matching of servers to requests is also an optimal matching of points in $A$ to $B$. 

\begin{figure}
    \centering
    \includegraphics[width=0.45\textwidth]{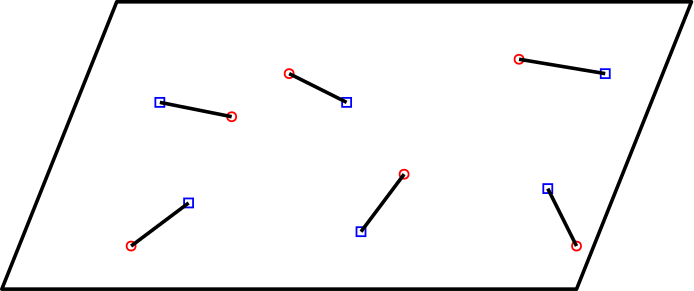}
    \includegraphics[width=0.45\textwidth]{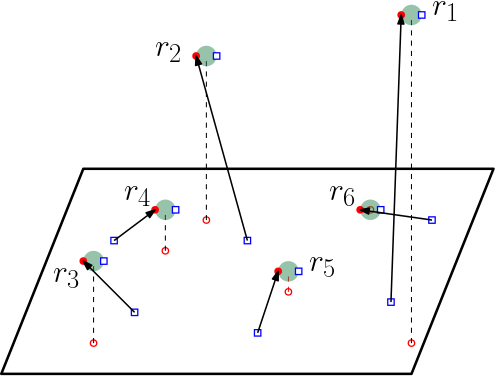}
    \caption{(left) the point set $A$ (red points) and $B$ (blue points) and a minimum-cost bipartite matching between them, (right) the instance of $n$-SPI problem we construct from $A\cup B$.}
    \label{fig:EBM}
\end{figure}

Let $\plan=\{\varsigma_1, \ldots, \varsigma_n\}$ be an optimal $n$-partitioning for $\requests'=\servers\requests$, where $\varsigma_i$ is the sub-sequence of $\requests'$ that is assigned to be served by the server $s_i$. Suppose there exists a server $s_i\in\servers$ with more than one request in $\varsigma(s_i)$, i.e., $\varsigma(s_i)=\langle s_i, r_{i_1}, \ldots, r_{i_t}\rangle$ for some $t>1$. In this case, since there are $n$ servers in $\servers$, there exists at least one server $s_j\in\servers$ with $\varsigma_j = \langle s_j\rangle$. By Equation~\eqref{eq:distance-EBM}, $\|s_j-r_{i_t}\|_1<\|r_{i_{t-1}}-r_{i_t}\|_1$. Therefore, assigning the request $r_{i_t}$ to $s_j$ reduces the cost of $\plan$, which contradicts the assumption that $\plan$ is optimal; thus, any optimal $n$-partitioning of $\requests'$ matches each server to a single request.

Note that any matching $M$ of $A$ and $B$ can be used to construct an $n$-partitioning $\plan_M$ of $A'$ and $B'$. For any vertex $a_i \in A$, let $b_{m(i)}$ be its match in $M$. 
\begin{align}
    w(M)& = \sum_{i=1}^n \|a_i-b_{m(i)}\|_1=  \sum_{i=1}^n \|a_i'-b_{m(i)}'\|_1 -  3^{n-i}\diam = w(\plan_M)-\sum_{i=1}^n 3^{n-i}\diam.\label{eq:matching-EBM}
\end{align}
In other words, for any matching $M$ and $n$-partitioning $\plan_M$, the difference in the cost of $M$ and $\plan_M$ is a constant that is independent of the input. Since the optimal $n$-partitioning maps each server to a unique request and has the smallest possible cost, its corresponding matching is also a matching of $A$ and $B$ with the smallest possible cost.   
Therefore, any algorithm that finds an optimal partitioning for the sequence partitioning problem can be used to compute a geometric bipartite matching.
Note that our reduction extends to $\ell_q^q$ distances for any parameter $q\ge 1$ in a straightforward way.

\section{Missing Proofs and Details of Section~\ref{sec:geo-primal-dual}}\label{sec:appendix-geo-primal-dual}
In this section, we first present a sweep-line algorithm for constructing the hierarchical partitioning described in Section~\ref{sec:hierarchical}. We then present the missing proofs of lemmas in Sections~\ref{sec:geo-primal-dual}.

\myparagraph{Constructing the hierarchical partitioning.} Recall that, as described in Section~\ref{sec:hierarchical}, our algorithm constructs a hierarchical partitioning by recursively partitioning each non-leaf cell of $\mcH$ into two smaller cells as its children. We describe the partitioning procedure with respect to an arbitrary cell $\cell$ of $\mcH$. Let $\ell_x$ and $\ell_y$ denote the width and length of $\cell$, and without loss of generality, assume $\ell_x\ge \ell_y$. Let $x_{\min}$ (resp. $x_{\max}$) denote the $x$ coordinate of the left (resp. right) side of $\cell$. Recall that $\lambda=9n^{-1/5}$. Define a priority queue $\mcQ$ storing two events for each point $u\in A_\cell\cup B_\cell$, namely an entry event $e_u^\downarrow$ at $u_x-\ell_\cell\lambda$ and an exit event $e_u^\uparrow$ at $u_x + \ell_\cell\lambda$; here, $u_x$ denotes the $x$ coordinate of point $u$. See the events $b^\downarrow$ and $b^\uparrow$ in Figure~\ref{fig:sweepline}.

Consider a vertical sweep line that moves from left to right, searching for a value $x^* \in [x_{\min} + \frac{\ell_x}{3}, x_{\min} + \frac{2\ell_x}{3}]$ that minimizes $|\Lambda(x^*)|$; recall that $\Lambda(x):=\{u\in A_\cell\cup B_\cell: |u_x-x|\le \ell_\cell\lambda\}$ for any $x\in [x_{\min} + \frac{\ell_x}{3}, x_{\min} + \frac{2\ell_x}{3}]$. Let $x'$ denote the state of the sweep line. Intuitively, as the sweep line moves from left to right (i.e., $x'$ increases), any point $u\in A_\cell\cup B_\cell$ enters the set $\Lambda(x')$ at its entry event and exits $\Lambda(x')$ at its exit event. Thus, we can keep track of the number of points close to the sweep line by incrementing (resp. decrementing) the count at each entry (resp. exit) event and return the value $x^*$ realizing the minimum count during the sweep-line procedure. 

More formally, we maintain a value $\gamma$ representing the number of points of $A_\cell\cup B_\cell$ at a distance at most $\ell_\cell\lambda$ to the sweep line. We also store, for a set of $O(n_\cell)$ values $x''\in [x_{\min}, x_{\max}]$, a function $\eta:[x_{\min}, x_{\max}] \rightarrow \mbZ_{\ge 0}$ representing $|\Lambda(x'')|$. Initially, for the minimum event $e$ in $\mcQ$, set $\gamma\leftarrow 1$ and $\eta(e)\leftarrow \gamma$. Iteratively, our sweep-line algorithm extracts the minimum event $e$ from $\mcQ$. If $e$ is an entry event, set $\gamma\leftarrow \gamma + 1$, and otherwise, $e$ is an exit event and set $\gamma\leftarrow \gamma - 1$. Set $\eta(e)\leftarrow \gamma$. After processing all events in $\mcQ$, define the entry $x'$ of $\eta(\cdot)$ in the middle third of $x$ side of $\cell$ with the minimum value, i.e., \[x^*:=\arg\min_{x\in [x_{\min}+\frac{\ell_\cell}{3}, x_{\min}+\frac{2\ell_\cell}{3}]\cap S(\eta)}\eta(x),\] where $S(\eta)$ denotes the support of $\eta$. Partition $\cell$ using the vertical line $x=x^*$ into two smaller cells $\cell_1$ and $\cell_2$, and add the non-empty ones as the children of $\cell$ to $\mcH$. 
\begin{figure}
    \centering
    \includegraphics[width=0.7\linewidth]{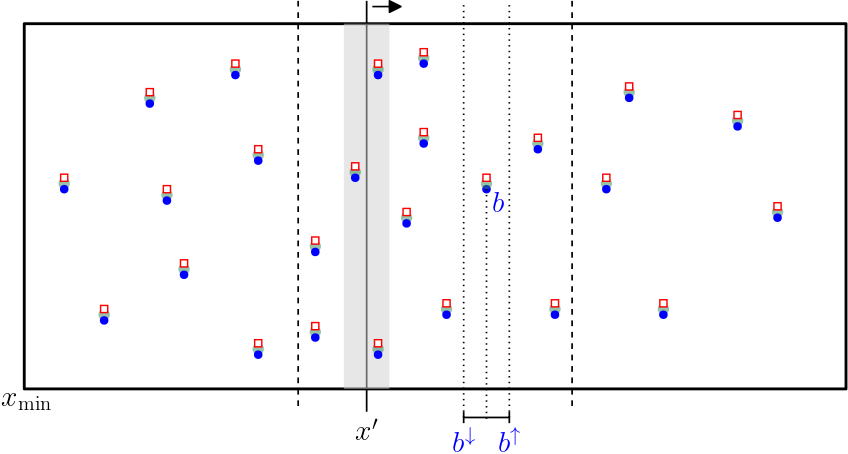}
    \caption{The solid vertical line shows the line our sweep-line procedure maintains. The two values $b^\downarrow$ and $b^\uparrow$ show the entry and exit events created for the blue point $b$.}
    \label{fig:sweepline}
\end{figure}

\subsection{Missing Proofs of Section~\ref{sec:geo-primal-dual}}

\margin*
\begin{proof}
    We use an inductive proof to show that the aspect ratio of all cells of $\mcH$ is at most $3$. Note that the root cell $\cell^*$ is square and has a unit aspect ratio. For any non-root cell $\cell\in \mcH$, let $\cell'$ denote the parent of $\cell$ in $\mcH$. By the inductive hypothesis, the aspect ratio of $\cell'$ is at most $3$. Let $\ell_x$ (resp. $\ell_y$) denote the width (resp. length) of $\cell$. Similarly, let $\ell'_x$ (resp. $\ell'_y$) denote the width (resp. length) of $\cell'$. Without loss of generality, assume $\ell'_x\ge \ell'_y$. In this case, we split $\cell'$ into two rectangles using a vertical line in the middle third part of the $x$ side of $\cell'$, i.e., $\ell_y=\ell'_y$ and $\ell_x\in [\frac{\ell'_x}{3}, \frac{2\ell'_x}{3}]$. In this case, if $\ell_x\ge \ell_y$, then $\frac{\ell_x}{\ell_y}\le \frac{\ell'_x}{\ell'_y}\le 3$. Otherwise, $\ell_x< \ell_y$ and $\frac{\ell_y}{\ell_x}\le \frac{\ell'_y}{\ell'_x/3}\le 3$.

    Next, we show that the number of points of $A_\cell\cup B_\cell$ at a distance at most $\ell_\cell\lambda$ from the divider $\divider\cell$ is $O(n_\cell \lambda)$. Consider the set $\zeta=\{x_{\min}+\frac{\ell_\cell}{3} + 3i \ell_\cell\lambda: 0\le i\le \lfloor \frac{1}{9\lambda}\rfloor\}$. Recall that for any value $x'\in[x_{\min}+\frac{\ell_\cell}{3}, x_{\min}+\frac{2\ell_\cell}{3}]$, $\Lambda(x')$ denotes the set of all points of $A_\cell\cup B_\cell$ that are at a distance at most $\ell_\cell\lambda$ from the vertical line $x=x'$. 
    Note that for any pair of distinct values $x_1, x_2\in \zeta$, $\Lambda(x_1)\cap\Lambda(x_2)=\emptyset$. See Figure~\ref{fig:sweepline2}. 
    \begin{figure}
        \centering
        \includegraphics[width=0.9\linewidth]{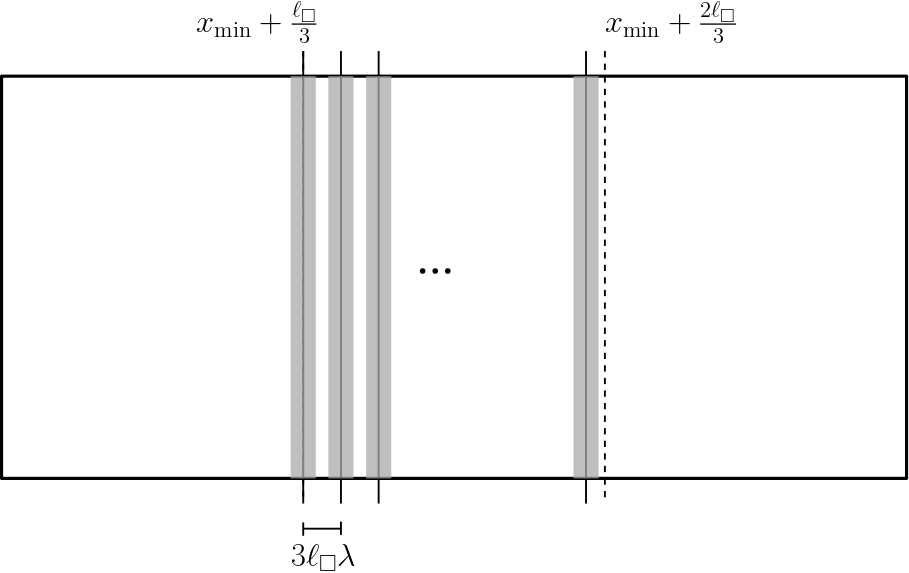}
        \caption{The solid vertical line shows the set $\zeta$.}
        \label{fig:sweepline2}
    \end{figure}
    Furthermore, 
    \[
    \sum_{x'\in \zeta} |\Lambda(x')| \le n_\cell.\]
    Define $x^*:=\arg\min_{x'\in\zeta}|\Lambda(x')|$. In this case, the size of $\Lambda(x^*)$ is no more than the average size of the $\Lambda(x')$ for the values $x'\in \zeta$; more precisely,
    \[|\Lambda(x^*)| \le \frac{\sum_{x'\in \zeta} |\Lambda(x')|}{|\zeta|} \le 9\lambda n_\cell. \]
\end{proof}

\geometricAnalysis*
\begin{proof}
    Define $\mbG$ to be a grid of cell-side-length $\ell_\cell n_\cell^{-2/5}$ that partitions $\cell$ into $O(n_\cell^{4/5})$ equal-sized rectangles. For each rectangle $\xi$ of the grid $\mbG$, let $\requests_\xi=\langle r'_1, \ldots, r'_m\rangle$ denote the sub-sequence of requests in $\requests$ that lie inside $\xi$, and let $A_\xi$ (resp. $B_\xi$) denote the subset of $A_\cell$ (resp. $B_\cell$) that lie inside $\xi$. Define $\mcG_\xi$ as the sub-graph of $\mcG$ induced by $A_\xi\cup B_\xi$. Let $M_\xi$ denote the set of edge $(a_{r'_i}, b_{r'_{i+1}})$ for all $i\in[1, m-1]$. $M_\xi$ is a matching on $\mcG_\xi$ that matches all except one point of $B_\xi$. Furthermore, since each matching edge in $M_\xi$ has a cost at most $2\ell_\cell n_\cell^{-2/5}$, the cost of $M_\xi$ would be $O(\ell_\cell |B_\xi| n_\cell^{-2/5})$.

    \begin{figure}
        \centering
        \includegraphics[width=0.5\linewidth]{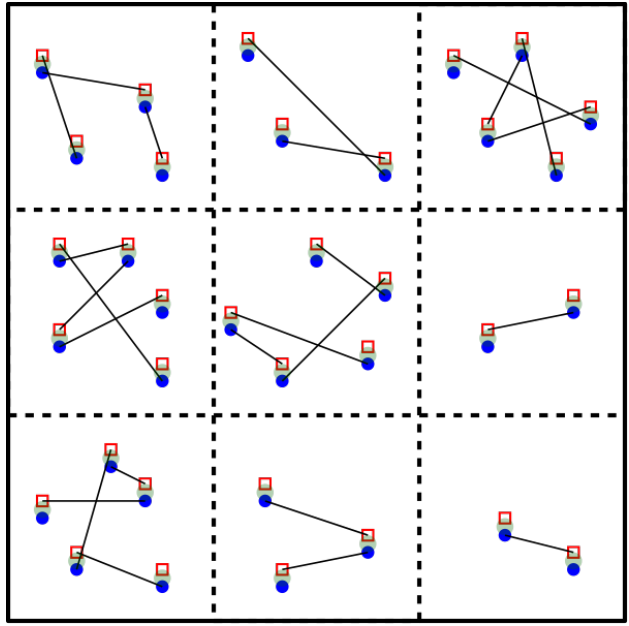}
        \caption{A matching $M_\xi$ is computed inside each square $\xi$ by connecting the requests in their arrival order.}
        \label{fig:geo}
    \end{figure}
    
    Define $M':=\bigcup_{\xi\in\mbG}M_\xi$ as the union of the matchings computed inside each cell of the grid $\mbG$. Since $|\mbG|=O(n_\cell^{4/5})$ and each cell leaves at most one point of $B_\cell$ unmatched, the matching $M'$ matches all except $O(n_\cell^{4/5})$ points of $B_\cell$. In addition, the cost of $M$ is at most \[w(M')=\sum_{\xi\in\mbG}w(M_\xi)=O\left(\ell_\cell n_\cell^{-2/5}\sum_{\xi\in\mbG}|B_\xi|\right) = O(\ell_\cell n_\cell^{3/5}).\]
\end{proof}

We next present a set of definitions, which are used in the proof of Lemma~\ref{lemma:root-inactive}. Given a matching $M$ on the graph $\mcG_\sigma$, any simple path $P$ is an alternating path if the edges of $P$ alternate between matching and non-matching edges with respect to $M$. The path $P$ is called a (standard) augmenting path if it starts from an unmatched point $b\in B$ and ends at an unmatched point $a\in A$. One can augment $M$ along $P$ by setting $M\leftarrow M\oplus P$. The net-cost of $P$ is defined as $\phi(P):=\sum_{(a,b)\in P\setminus M}d(a,b) - \sum_{(a,b)\in P\cap M}d(a,b)$, i.e., the net-cost of $P$ is the change in the matching cost introduced by augmenting $M$ along $P$.

\rootinactive*
\begin{proof}
    For the sake of contradiction, suppose the matching $M$ is not a $t$-matching, and there exist boundary-matched points in $B^{\mcC^*}$. To prove this lemma, we show that there exists an extended matching $\hat{M}^{\mcC^*}=(\hat{M}, \hat{B}^{\mcC^*})$ of size $t$ such that $|\hat{M}|>|M|$ and $w_{\mcC^*}(\hat{M}^{\mcC^*})<w_{\mcC^*}(M^{\mcC^*})$, contradicting the assumption that $M^{\mcC^*}$ is a minimum-cost extended $t$-matching. 

    Consider a minimum net-cost augmenting path $P$ with respect to the matching $M$ from an unmatched point $b\in B$. Since the path $P$ has a length of at most $2n-1$, the net-cost of $P$ is 
    \begin{align*}
        \phi(P) = \sum_{(a,b)\in P\setminus M}d(a,b) - \sum_{(a,b)\in P\cap M}d(a,b) \le \sum_{(a,b)\in P\setminus M}d(a,b) \le 2n,  
    \end{align*}
    where the last inequality holds since all points are in the unit square and the $\ell_p$ distance of each pair is at most $2$. By augmenting the matching $M$ along the path $P$, we get a matching $\hat{M}$ with a cost $w(\hat{M})=w(M)+\phi(P)\le w(M) + 2n$. Also note that by the construction of $\cell^*$, for any point $b'\in B$, $\distance{b'}{\cell^*}\ge 3n-1$. If the endpoint $b$ of $P$ is in $B^{\mcC^*}$, then for the extended matching $\hat{M}^{\mcC^*}=(\hat{M}, B^{\mcC^*}\setminus\{b\})$, 
    \begin{align*}
        w_{\mcC^*}(\hat{M}^{\mcC^*})&= w(\hat{M}) + \sum_{b'\in B^{\mcC^*}\setminus\{b\} }\distance{b'}{\mcC^*}
        \le w(M) + 2n + \sum_{b'\in B^{\mcC^*}\setminus\{b\} }\distance{b'}{\mcC^*}\\&<w(M) + \distance{b}{\mcC^*} + \sum_{b'\in B^{\mcC^*}\setminus\{b\}}\distance{b'}{\mcC^*}=
        w_{\mcC^*}(M^{\mcC^*}).
    \end{align*}
    Otherwise, let $\hat{b}\in B^{\mcC^*}$ denote an arbitrary boundary-matching point of $M^{\mcC^*}$. Then, for the extended matching $\hat{M}^{\mcC^*}=(\hat{M}, B^{\mcC^*}\setminus\{\hat{b}\})$, 
    \begin{align*}
        w_{\mcC^*}(\hat{M}^{\mcC^*})&= w(\hat{M}) + \sum_{b'\in B^{\mcC^*}\setminus\{\hat{b}\} }\distance{b'}{\mcC^*}
        \le w(M) + 2n + \sum_{b'\in B^{\mcC^*}\setminus\{\hat{b}\} }\distance{b'}{\mcC^*}\\&<w(M) + \distance{\hat{b}}{\mcC^*} + \sum_{b'\in B^{\mcC^*}\setminus\{\hat{b}\}}\distance{b'}{\mcC^*}=
        w_{\mcC^*}(M^{\mcC^*}).
    \end{align*}
    Note that in both cases, $\hat{M}^{\mcC^*}$ is also of size $t$, which is a contradiction to the assumption that $M^{\mcC^*}$ is a minimum-cost extended $t$-matching. Therefore, the extended matching $M^{\mcC^*}=(M, B^{\mcC^*})$ has no boundary-matched points and $|M| = t$. 
    Consequently, $w_{\mcC^*}(M^{\mcC^*}) = w(M)$, and since $M^{\mcC^*}$ has the minimum cost among all extended $t$-matchings, the matching $M$ would have a minimum cost among all $t$-matchings on $\mcG_\sigma$, i.e., $M$ is a minimum-cost $t$-matching.
\end{proof}

\subsection{Auxiliary Lemmas}
The following lemmas are useful in proving the lemmas in Sections~\ref{sec:geo-primal-dual} and~\ref{sec:k-seq}.

\begin{restatable}{lemma}{slackcost}
\label{lem:slackcost}
Given a feasible $\mcC$-extended matching $M^{\mcC} = (M, B^{\mcC}),y(\cdot)$ on $\mcG_\sigma$, let $P$ be any augmenting path from a free point $b\in B$ with respect to $M^{\mcC}$. If $P$ ends with a free point of $A$, then $\phi(P) = y(b) +\sum_{(a'',b'') \in P}s(a'',b'')$. If $P$ ends with a point $b' \in B$, then $\phi(P)= y(b) + s(b') + \sum_{(a'',b'') \in P}s(a'',b'')$.  
\end{restatable}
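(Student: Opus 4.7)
\medskip

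\noindent\textbf{Proof proposal.} The plan is to write the augmenting path $P$ explicitly as a vertex sequence, expand $\phi(P)$ by its definition, and then replace each edge cost $\dist{a}{b}$ by its decomposition in terms of slack and dual weights. The key algebraic device is a telescoping sum of dual weights along $P$, which collapses to the dual weight of the endpoints, at which point the feasibility conditions~\eqref{eq:dualfeasibility-matching-time}--\eqref{eq:dualfeasibility-a-free} kill every term except those that appear in the statement.

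More concretely, I would begin with the observation that, since $P$ starts at a free point $b\in B$, its first edge must be non-matching, and the edges along $P$ therefore alternate non-matching, matching, non-matching, $\ldots$. In case (i) the path has odd length $2k+1$ and can be written $P = b_0,a_1,b_1,a_2,\ldots,a_k,b_k,a_{k+1}$, where $b_0=b$ and $a_{k+1}$ is a free point of $A$; the non-matching edges are $(b_{j-1},a_j)$ for $j=1,\ldots,k+1$ and the matching edges are $(a_j,b_j)$ for $j=1,\ldots,k$. In case (ii) the path has even length $2k$ and takes the form $P=b_0,a_1,b_1,\ldots,a_k,b_k=b'$; here the non-matching edges are $(b_{j-1},a_j)$ for $j=1,\ldots,k$ and the matching edges are $(a_j,b_j)$ for $j=1,\ldots,k$ (so the last edge is matching and is removed by the augmentation, freeing $b'$ to be sent to the boundary). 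The fact that the last edge is matching in case (ii) is consistent with the operation that adds $b'$ to $B^{\mcC}$ after augmenting.

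Next, I would substitute $\dist{a_j}{b_{j-1}} = s(a_j,b_{j-1}) + y(b_{j-1}) - y(a_j)$ for every non-matching edge using the definition of slack, and $\dist{a_j}{b_j} = y(b_j) - y(a_j)$ for every matching edge by~\eqref{eq:dualfeasibility-matching-time} (slack zero). In case (ii) I would additionally write $\dist{b'}{\mcC} = s(b') + y(b')$ from the definition of $s(b')$. Plugging these into the definition of $\phi(P)$ gives, after collecting the slack terms into $\sum_{(a'',b'')\in P} s(a'',b'')$, the residual dual-weight sum
\[
\sum_{j=0}^{k} y(b_j) - \sum_{j=1}^{k+1} y(a_j) - \sum_{j=1}^{k} y(b_j) + \sum_{j=1}^{k} y(a_j)
\]
in case (i), and the analogous expression in case (ii). Both collapse by telescoping: in case (i) one is left with $y(b_0) - y(a_{k+1})$, and in case (ii) with $y(b_0) + s(b') + y(b') - y(b_k)$.

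Finally, I would invoke the remaining feasibility conditions to simplify: in case (i), $y(a_{k+1})=0$ by~\eqref{eq:dualfeasibility-a-free} since $a_{k+1}$ is free, yielding $\phi(P) = y(b) + \sum_{(a'',b'')\in P} s(a'',b'')$. In case (ii), $b_k=b'$, so $y(b')$ cancels $-y(b_k)$, yielding $\phi(P)= y(b) + s(b') + \sum_{(a'',b'')\in P} s(a'',b'')$. I do not expect any genuine obstacle here; the whole proof is a careful bookkeeping exercise, and the only step that requires attention is making sure the parity of the path length is correctly identified in each case, since this determines whether the last edge of $P$ is matching or non-matching and which dual weights cancel in the telescoping sum.
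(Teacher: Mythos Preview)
Your proposal is correct and follows essentially the same approach as the paper: substitute the slack definition for non-matching edges and feasibility condition~\eqref{eq:dualfeasibility-matching-time} for matching edges, telescope the dual weights along $P$, and finish with~\eqref{eq:dualfeasibility-a-free} in case~(i) or the identity $\dist{b'}{\mcC}=s(b')+y(b')$ in case~(ii). The only difference is cosmetic: you spell out the path as an explicit vertex sequence and track the parity, whereas the paper handles the telescoping more compactly without naming the intermediate vertices.
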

\begin{proof}
    Note that by the definition of the slack of an edge, for any non-matching edge $(a,b)\in P$, $\distance{a}{b} = s(a,b) + y(b)-y(a)$. Furthermore, since the slack on the matching edges are $0$, for each matching edge $(a,b)\in M$, $-\distance{a}{b} = s(a,b) - y(b)+y(a)$. 
    If $P$ is an augmenting path from $b$ to a free point $a\in A$ (case (i)), then
    \begin{align*}
        \phi(P)&=\sum_{(a'',b'')\in P\setminus M} \distance{a''}{b''} + \sum_{(a'',b'')\in P\cap M} (-\distance{a''}{b''})\\&= \sum_{(a'',b'')\in P\setminus M} (s(a'',b'') + y(b'')-y(a'')) + \sum_{(a'',b'')\in P\cap M} (s(a'',b'') - y(b'')+y(a''))\\ &= \sum_{(a'',b'')\in P} s(a'',b'') + y(b) - y(a)=  y(b) \sum_{(a'',b'')\in P} s(a'',b''),
    \end{align*}
    where the last equality holds since $y(a)=0$ by Condition~\eqref{eq:dualfeasibility-a-free}. Otherwise, $P$ is an alternating path from $b$ to a point $b'\in B$ (case (ii)), and 
        \begin{align*}
            \phi(P)&=\distance{b'}{\mcC} + \sum_{(a'',b'')\in P\setminus M} \distance{a''}{b''} + \sum_{(a'',b'')\in P\cap M} (-\distance{a''}{b''})\\&= \distance{b'}{\mcC} + \sum_{(a'',b'')\in P\setminus M} (s(a'',b'') + y(b'')-y(a'')) + \sum_{(a'',b'')\in P\cap M} (s(a'',b'') - y(b'')+y(a''))\\ &= \distance{b'}{\mcC} + \sum_{(a'',b'')\in P} s(a'',b'') + y(b) - y(b')= y(b) + s(b') + \sum_{(a'',b'')\in P} s(a'',b''),
        \end{align*}
        where the last equality holds since $s(b') = \distance{b'}{\mcC} - y(b')$ by the definition of the slack of a point.
\end{proof}

The following is a straightforward corollary from Lemma~\ref{lem:slackcost} and the definition of admissible augmenting paths.

\begin{corollary}
\label{cor:slackcost}
Given a feasible $\mcC$-extended matching $M^{\mcC} = (M, B^{\mcC}),y(\cdot)$ on $\mcG_\sigma$, let $P$ be an admissible augmenting path from a free point $b\in B$ with respect to $M^{\mcC}$. Then, $\phi(P)= y(b)$.
\end{corollary}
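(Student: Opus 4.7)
The plan is to apply Lemma~\ref{lem:slackcost} and then invoke the definition of admissibility of the augmenting path $P$ to zero out the slack terms. Lemma~\ref{lem:slackcost} already gives an expression for $\phi(P)$ in terms of $y(b)$, the slacks of the edges of $P$, and (in the case where $P$ ends at a point $b' \in B$) the slack $s(b')$ of that endpoint. All that is needed is to observe that each of these slack contributions vanishes under the admissibility hypothesis.

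More concretely, I would split into the two cases of Lemma~\ref{lem:slackcost}. In case (i), where $P$ ends at a free point $a \in A$, Lemma~\ref{lem:slackcost} yields
\[
\phi(P) \;=\; y(b) \;+\; \sum_{(a'',b'') \in P} s(a'',b'').
\]
Since $P$ is admissible, every edge $(a'',b'') \in P$ satisfies $s(a'',b'')=0$, so the summation is $0$ and $\phi(P) = y(b)$. In case (ii), where $P$ ends at a point $b' \in B$, the lemma yields
\[
\phi(P) \;=\; y(b) \;+\; s(b') \;+\; \sum_{(a'',b'') \in P} s(a'',b'').
\]
Again, admissibility forces every edge slack to be zero, and the definition of an admissible augmenting path in case (ii) additionally requires $s(b')=0$. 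Hence $\phi(P) = y(b)$ in this case as well. There is no real obstacle here beyond ensuring that the two cases of the definition of an augmenting path line up with the two cases of Lemma~\ref{lem:slackcost}; the corollary is essentially an immediate unpacking of definitions once Lemma~\ref{lem:slackcost} is in hand.
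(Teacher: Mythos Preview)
Your proof is correct and matches the paper's approach exactly: the paper states this as ``a straightforward corollary from Lemma~\ref{lem:slackcost} and the definition of admissible augmenting paths'' without further detail, and your two-case argument zeroing out the slack terms via admissibility is precisely the intended unpacking.
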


The next lemma shows that no edges of a feasible $\mcC$-extended matching would cross the boundaries of the cells in $\mcC$, which allows us to localize the computations inside the cells and is essential for the correctness of our algorithm.

\begin{lemma}\label{lemma:no-cross}
    For any feasible extended matching $M^\mcC=(M, B^\mcC), y(\cdot)$, no edges of the matching $M$ cross the boundaries of the cells in $\mcC$.
\end{lemma}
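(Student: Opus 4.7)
The plan is to argue by contradiction: assume some matching edge $(a,b) \in M$ has its two endpoints in different cells of $\mcC$, and show that the feasibility conditions on $M^\mcC,y(\cdot)$ become inconsistent. Let $\cell_b$ denote the cell of $\mcC$ containing $b$. If $a \notin \cell_b$, then the straight segment from $b$ to $a$ must exit $\cell_b$, so it meets the boundary of $\cell_b$ at some point $p$ lying on segment $ba$. Since $p$ is on this segment, collinearity gives the triangle equality $\|b-a\|_p = \|b-p\|_p + \|p-a\|_p$, hence
\[
\distance{b}{\mcC} = \distance{b}{\cell_b} \le \|b-p\|_p \le \|b-a\|_p = \distance{a}{b}.
\]

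Next, I would chain the feasibility constraints. From Condition~\eqref{eq:dualfeasibility-matching-time}, $y(b)-y(a) = \distance{a}{b}$, and from the non-negativity of dual weights, $y(b) \ge \distance{a}{b}$. On the other hand, Condition~\eqref{eq:dualfeasibility-b} gives $y(b) \le \distance{b}{\mcC}$, which by the preceding inequality chain is at most $\distance{a}{b}$. Squeezing these bounds forces
\[
y(a) = 0, \qquad y(b) = \distance{b}{\mcC} = \|b-p\|_p = \distance{a}{b},
\]
which in turn forces $\|p-a\|_p = 0$, i.e., $p = a$. Thus $a$ lies on the boundary of $\cell_b$.

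Finally, I would interpret this boundary case: if $a$ lies on the shared boundary of $\cell_b$ and its neighbor $\cell_a$, then under the convention that each point is assigned to a single cell of the partition $\mcC$, we may regard $a$ as belonging to $\cell_b$, so the edge $(a,b)$ does not properly cross a boundary. This closes the argument: any matching edge is contained in a single cell of $\mcC$. The main subtlety to be careful about is this last boundary case; the heart of the proof is the tight chain of inequalities forcing $y(a) = 0$ and $y(b) = \distance{a}{b}$, which leaves no room for $a$ to sit strictly outside $\cell_b$.
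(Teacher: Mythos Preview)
Your argument follows the same approach as the paper: from $y(b)-y(a)=\distance{a}{b}$ and $y(a)\ge 0$ you obtain $y(b)\ge \distance{a}{b}$, while $a\notin\cell_b$ gives $\distance{b}{\mcC}\le \distance{a}{b}$, and these collide with the feasibility bound $y(b)\le \distance{b}{\mcC}$. The paper is more direct on the one point you flag as subtle: it simply asserts the \emph{strict} inequality $\distance{b}{\cell_b}<\distance{a}{b}$ when $a$ lies outside $\cell_b$, yielding an immediate contradiction $y(b)>\distance{b}{\mcC}$ and bypassing your squeeze-and-boundary analysis altogether. Your final ``by convention we may regard $a$ as belonging to $\cell_b$'' is not quite a valid move (you began by assuming $a\in\cell_a\neq\cell_b$, so you cannot reassign it), but the issue evaporates under the paper's implicit general-position assumption that no input point sits exactly on a divider.
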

\begin{proof}
    For the sake of contradiction, suppose there is an edge $(a,b)\in M$, where $a$ and $b$ lie inside a cell $\cell_a$ and $\cell_b$ of $\mcC$ and $\cell_a\neq \cell_b$. By the feasibility condition~\eqref{eq:dualfeasibility-matching},
    \begin{equation*}
        \distance{a}{b} = y(b) - y(a) \le y(b).
    \end{equation*}
    Since $a$ is outside of $\cell_b$, then $\distance{b}{\cell_b}<\distance{a}{b}$. Therefore, 
    \[y(b)\ge \distance{a}{b}> \distance{b}{\cell_b}=\distance{b}{\mcC},\]
    which is a contradiction to the assumption that $M^\mcC, y(\cdot)$ is feasible (Condition~\eqref{eq:dualfeasibility-b} is violated).
\end{proof}

\subsection{Proof of Lemma~\ref{lemma:augpath}}

\netcost*

\myparagraph{Proof of Lemma~\ref{lemma:augpath}(a).} For the sake of contradiction, suppose $P = \langle b_1, a_1, b_2, \ldots, b_m, a_m\rangle$ is a minimum net-cost augmenting path that does not lie inside a single cell of $\mcC$. Let $\cell$ denote the cell of $\mcC$ containing $b_1$, and let $(b_i, a_i)$ be the first edge that goes outside of $\cell$, i.e., all vertices $\{b_1, a_1,\ldots, b_i\}$ reside inside $\cell$ (Note that, by Lemma~\ref{lemma:no-cross}, no matching edges of $M$ cross the boundaries of $\mcC$ and the edge $(b_i, a_i)$ has to be a non-matching edge). For the alternating path $P'=\langle a_i, b_{i+1}, \ldots, a_{m}\rangle$, the net-cost of $P'$ is 
\begin{align*}
    \phi(P')&=\sum_{j=i+1}^m \distance{a_{j}}{b_{j}} - \sum_{j=i}^{m-1} \distance{a_{j}}{b_{j+1}}\\&= \sum_{j=i+1}^m (s(a_j,b_j) + y(b_j)-y(a_j)) + \sum_{j=i}^{m-1} (s(a_j,b_{j+1}) - y(b_{j+1})+y(a_j))\\ &= \sum_{j=i+1}^m s(a_j,b_j) + \sum_{j=i}^{m-1} s(a_j,b_{j+1}) + y(a_i) - y(a_m) \ge 0,
\end{align*}
where the last inequality holds since $a_m$ is an unmatched point and, by Condition~\eqref{eq:dualfeasibility-a-free}, has a zero dual weight, all edges have non-negative slacks, and all points have non-negative dual weights. 

Additionally, the cost of matching the point $b_i$ to the boundaries of $\cell$ is less than the cost of matching it to the point $a_i$ outside of $\cell$, i.e., $\distance{b_i}{\mcC}\le \distance{a_i}{b_i}$. Define $P''=\langle b_1, a_1, \ldots, b_i\rangle$. Then, 
\begin{align*}
    \phi(P) = \phi(P'')+\distance{a_i}{b_i} + \phi(P') \ge \phi(P'')+\distance{a_i}{b_i} > \phi(P'')+\distance{b_i}{\mcC}.
\end{align*}
Therefore, the augmenting path $P''$, which is an augmenting path from the free point $b_1$ to the point $b_i$ (case (ii)) has a lower net-cost than $P$, contradicting the assumption that $P$ is a minimum net-cost augmenting path.

\myparagraph{Proof of Lemma~\ref{lemma:augpath} (b).} To prove this lemma, we first construct a set of dual weights $y'(\cdot)$ (as described in Lemma~\ref{lemma:adjust_duals} below) such that $M^\mcC, y'(\cdot)$ is feasible, $y'(b)=\phi(P)$ for all free points $b\in B$ and $y'(b)\le \phi(P)$ for all points $b\in B$. We use the dual weights $y'(\cdot)$ as a certificate for the optimality of $M^\mcC$. 

Let $A_F$ (resp. $B_F$) denote the set of free points of $A$ (resp. $B$) with respect to $M^\mcC$. Let $y_{\max}:=\max_{b\in B}y'(b)$. By Condition~\eqref{eq:dualfeasibility-a-free}, for each point $a\in A_F$, $y'(a)=0$, and by Condition~\eqref{eq:dualfeasibility-b-admissible}, for each boundary-matched point $b\in B^\mcC$, $y'(b)=\distance{b}{\mcC}$.
Using Condition~\eqref{eq:dualfeasibility-matching-time}, we can rewrite the cost of $M^\mcC$ as
\begin{align}
    w_\mcC(M^\mcC) &= \sum_{(a,b)\in M} \distance{a}{b} + \sum_{b\in B^\mcC} \distance{b}{\mcC}\nonumber \\ &= \sum_{(a,b)\in M} y'(b) - y'(a) + \sum_{b\in B^\mcC} y'(b)\nonumber \\ &= \left(\sum_{b\in B} y'(b) - \sum_{a\in A} y'(a)\right) - \sum_{b\in B_F} y'(b) + \sum_{a\in A_F} y'(a)\nonumber \\ &= \left(\sum_{b\in B} y'(b) - \sum_{a\in A} y'(a)\right) - |B_F|\cdot y_{\max}.
    \label{eq:dualOptimal-proof-1-classic}
\end{align}
Let $\hat{M}^\mcC= (\hat{M}, \hat{B}^\mcC)$ denote any minimum-cost extended $t$-matching on $\mcG_\requests$. Let $\hat{A}_F$ (resp. $\hat{B}_F$) denote the set of points of $A$ (resp. $B$) that are free in $\hat{M}^\mcC$. Since both $M^\mcC$ and $\hat{M}^\mcC$ are $t$-matchings, $|B_F|=|\hat{B}_F|$. Using Conditions~\eqref{eq:dualfeasibility-non-matching-time} and~\eqref{eq:dualfeasibility-b},
\begin{align}
    w_\mcC(\hat{M}^\mcC) &= \sum_{(a,b)\in \hat{M}} \distance{a}{b} + \sum_{b\in \hat{B}^\mcC} \distance{b}{\mcC}\nonumber \\ &\ge \sum_{(a,b)\in \hat{M}} y'(b) - y'(a) + \sum_{b\in \hat{B}^\mcC} y'(b)\nonumber \\ &= \left(\sum_{b\in B} y'(b) - \sum_{a\in A} y'(a)\right) - \sum_{b\in \hat{B}_F} y'(b) + \sum_{a\in \hat{A}_F} y'(a)\nonumber \\ &\ge \left(\sum_{b\in B} y'(b) - \sum_{a\in A} y'(a)\right) - |\hat{B}_F|\cdot y_{\max},
    \label{eq:dualOptimal-proof-2-classic}
\end{align}
where the last inequality holds since $y'(b)\le y_{\max}$ for each point $b\in B$ and $y'(a)\ge 0$ for each point $a\in A$. Combining Equations~\eqref{eq:dualOptimal-proof-1-classic} and~\eqref{eq:dualOptimal-proof-2-classic},
\[w_\mcC(M^\mcC) = \sum_{b\in B} y'(b) - \sum_{a\in A} y'(a) - |B_F|\cdot y_{\max} \le w_\mcC(\hat{M}^\mcC). \]
Since $\hat{M}^\mcC$ is a minimum-cost extended $t$-matching, $w_\mcC(M^\mcC)=w_\mcC(\hat{M}^\mcC)$, and $M^\mcC$ is also a minimum-cost extended $t$-matching.

\subsection{Proof of Lemma~\ref{lemma:fresh_duals}}
\updateDuals*
\begin{proof}
    To prove this lemma, we first construct a residual graph $\mcG_\cell$ for each cell $\cell\in\mcC$ with respect to the extended matching $M^\mcC=(M, B^{\mcC})$. Our construction is identical to what is described in Section~\ref{sec:dual}. The vertex set of $\mcG_\cell$ is a source vertex $s$ and the points in $A_\cell\cup B_\cell$. For any edge $(a,b)\in E$ inside $\cell$, if $(a,b)$ is a matching edge (resp. non-matching edge) in $M$, there is an edge directed from $a$ to $b$ (resp. from $b$ to $a$) with a weight $s(a,b)$ in $\mcG_\cell$. Furthermore, there is an edge directed from $s$ to every free point $b\in B$ with a weight $y(b)$. 

    Define $\kappa_v$ as the distance of each point $v\in A_\cell\cup B_\cell$ from the source vertex $s$. Define $\kappa=\phi(P)$. For any vertex $v\in A_\cell\cup B_\cell$ with $\kappa_v<\kappa$, set $y'(v)\leftarrow y(v)-\kappa_v+\kappa$; otherwise, set $y'(v)\leftarrow y(v)$. As discussed in Section~\ref{sec:hungarian_nk}, the distances $\kappa_v$ can be computed in $\tilde{O}(n_\cell\Phi(n_\cell))$ time. Given the distances from the source, computing the set of dual weights $y'(\cdot)$ from $y(\cdot)$ takes $O(n_\cell)$ time, and therefore, the total construction takes $\tilde{O}(n_\cell\Phi(n_\cell))$ time. 
    
    We next show that $M^\mcC, y'(\cdot)$ is feasible, $y'(b)\le \phi(P)$ for all points $b\in A_\cell\cup B_\cell$, and $y'(b_f)=\phi(P)$ for all free points $b_f\in B_\cell$. Note that by Lemma~\ref{lemma:no-cross}, since the extended matching $M^\mcC, y(\cdot)$ is feasible, then no matching edges of $M^\mcC$ cross the boundaries of the cells in $\mcC$.

    \myparagraph{Feasibility of points.}
    \begin{enumerate}
        \item For any point $b\in B\setminus B_\cell$, $y'(b)=y(b)$ and Conditions~\eqref{eq:dualfeasibility-b} and~\eqref{eq:dualfeasibility-b-admissible} holds. Similarly, for any free point $a\in A\setminus A_\cell$, $y'(a)=y(a)=0$ and Condition~\eqref{eq:dualfeasibility-a-free} holds.
        \item For any point $b\in B_\cell$:
        \begin{itemize}
            \item if $b$ is a free point, then $\kappa_b= y(b)$, since the only path from $s$ to $b$ is an edge from $s$ to $b$ with a weight $y(b)$. In this case, $y'(b)= y(b) -\kappa_b+\kappa=\kappa$.
            \item if $b$ is a boundary-matched point, then there are no edges coming into $b$ in $\mcG_\cell$ and therefore, $y'(b)=y(b)=\distance{b}{\mcC}$ and Condition~\eqref{eq:dualfeasibility-b-admissible} holds.
            \item Otherwise, $b$ is a matched point. In this case, we show that $\kappa\le \kappa_b+s(b)$: Let $P_b$ denote the shortest path from $s$ to $b$, and let $P'$ denote the path obtained by removing $s$ from $P_b$, which has a free endpoint $b'\in B_\cell$. If $\kappa> \kappa_b+s(b)$, then the net-cost of the augmenting path $P'$ is, by Lemma~\ref{lem:slackcost},
            \[\phi(P')=y(b')+s(b)+ \sum_{(a'',b'') \in P}s(a'',b'') = \kappa_b + s(b)< \kappa=\phi(P),\]
            which contradicts the assumption that $P$ is a minimum net-cost augmenting path. 
            Therefore, $\kappa\le \kappa_b+s(b)$ and Condition~\eqref{eq:dualfeasibility-b} holds for $b$ since 
            \begin{equation}
                y'(b) = y(b) + \kappa - \kappa_b \le y(b) + s(b) =\distancetocell{b}{\mcC}. 
            \end{equation}
        \end{itemize}
        \item For any free point $a\in A_\cell^F$, $\kappa_a\ge \kappa$, since otherwise, if $\kappa_a<\kappa$, then the path from the source to $a$ defines an augmenting path whose net-cost is $\kappa_a<\kappa=\phi(P)$, which contradicts the assumption that $P$ is a minimum net-cost augmenting path. Thus, $\kappa_a\ge \kappa$ and the procedure does not update the dual weight of $a$, i.e., $y'(a)=0$ satisfying Condition~\eqref{eq:dualfeasibility-a-free}. 
        
    \end{enumerate}

    \myparagraph{Feasibility of edges.} For any edge $(a,b)\in E$, let $s(a,b)$ denote the slack of $(a,b)$ with respect to $y(\cdot)$. For any edge $(a,b)\in E$:
    \begin{enumerate}
        \item if $a\in A\setminus A_\cell$ and $b\in B\setminus B_\cell$, then $y'(a)=y(a)$ and $y'(b)=y(b)$ and Conditions~\eqref{eq:dualfeasibility-non-matching-time} and~\eqref{eq:dualfeasibility-matching-time} remains satisfied for $(a,b)$.
        
        \item if $a\in A_\cell$ and $b\in B\setminus B_\cell$, then the edge $(a,b)$ is a non-matching edge, $y'(b)=y(b)$, and $y'(a)\ge y(a)$; hence, $y'(b)-y'(a)\le y(b)-y(a)\le \distance{a}{b}$ and Condition~\eqref{eq:dualfeasibility-non-matching-time} is satisfied.

        \item if $a\in A\setminus A_\cell$ and $b\in B_\cell$, then $(a,b)$ is a non-matching edge and as discussed above, $y'(b)\le \distance{b}{\mcC}\le \distance{a}{b}$; hence, $y'(b)-y'(a)\le y'(b)\le \distance{a}{b}$ and Condition~\eqref{eq:dualfeasibility-non-matching-time} is satisfied.

        \item if $a\in A_\cell$ and $b\in B_\cell$: 
        \begin{itemize}
            \item If $(a,b)\in M$ is a matching edge, then $\kappa_b = \kappa_a$ since the only edge directed to $b$ in the residual graph is the zero-slack edge $(a,b)$. Thus, Condition~\eqref{eq:dualfeasibility-matching-time} holds since
            \begin{equation*}
            y'(b) - y'(a) = (y(b)+\kappa-\kappa_b) - (y(a)+\kappa-\kappa_a) = y(b)-y(a) = \distance{a}{b}.
            \end{equation*}
            \item Otherwise, $(a,b)$ is a non-matching edge and $\kappa_a\le \kappa_b + s(a,b)$, and Condition~\eqref{eq:dualfeasibility-non-matching-time} holds since
            \begin{equation*}
                y'(b)-y'(a) = (y(b)+\kappa-\kappa_b) - (y(a)+\kappa-\kappa_a) \le y(b)-y(a) + s(a, b) = \distance{a}{b}.
            \end{equation*}
            Note that if $(a,b)$ is a non-matching edge on the shortest path tree, then $\kappa_a= \kappa_b + s(a,b)$ and $y'(b)-y'(a) = \distance{a}{b}$, i.e., $(a,b)$ is admissible.
        \end{itemize}
    \end{enumerate}

    Finally, we show that the dual weight of each point $v\in A_\cell\cup B_\cell$ is at most $\phi(P)$.
    For each point $b\in B_\cell$ such that $y'(b)> y(b)$, we have $\kappa_b<\kappa$. Suppose $P_b$ denotes the shortest path from the source $s$ to $b$, and let $P'$ be the path obtained by removing $s$ from $P_b$. Let $b'$ denote the free endpoint of $P'$. By the assumption of the lemma, $y(b')=y_\cell\ge y(b)$. Furthermore, since $P'$ is a path on the shortest path tree, $\kappa_b = \kappa_{b'} + \sum_{(u,v)\in P'} s(u,v)$ and since all slacks are non-negative, $\kappa_{b'}\le \kappa_b$; therefore, 
    \[y'(b)=y(b)-\kappa_b+\kappa\le y(b')-\kappa_{b'}+\kappa=\kappa.\]
    
\end{proof}

The following lemma is resulted from applying Lemma~\ref{lemma:fresh_duals} on all cells $\cell\in \mcC$. 

\begin{lemma}\label{lemma:adjust_duals}
    Given a partitioning $\mcC$ and a feasible extended matching $M^\mcC, y(\cdot)$, let $P$ denote a minimum net-cost augmenting path. Suppose $y(b)\le \phi(P)$ for all points $b\in B$ and $y(b)=\max_{b'\in B_\cell}y(b')$ for all cells $\cell\in \mcC$ and all free points $b\in B_\cell$. Then, there exists a set of dual weights $y'(\cdot)$ such that $M^\mcC, y'(\cdot)$ is feasible, $y'(b)\le \phi(P)$ for all $b\in B$, and $y'(b)=\phi(P)$ for all free points $b\in B$.
\end{lemma}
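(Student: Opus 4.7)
The plan is to derive Lemma~\ref{lemma:adjust_duals} directly from Lemma~\ref{lemma:fresh_duals} by invoking the latter once per cell of $\mcC$. Fix any ordering of the cells in $\mcC$, and process them one by one. For the current cell $\cell$, the hypotheses of Lemma~\ref{lemma:fresh_duals} are exactly the global hypotheses restricted to $\cell$: namely $y_\cell=\max_{b'\in B_\cell}y(b')\le \phi(P)$ and $y(b_f)=y_\cell$ for every free $b_f\in B_\cell$. Hence, applying Lemma~\ref{lemma:fresh_duals} inside $\cell$ in $\tilde{O}(n_\cell\Phi(n_\cell))$ time produces new dual weights $y'(\cdot)$ on $A_\cell\cup B_\cell$ satisfying $y'(b)\le \phi(P)$ for every $b\in B_\cell$ and $y'(b_f)=\phi(P)$ for every free $b_f\in B_\cell$, while preserving feasibility \emph{inside} $\cell$. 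Crucially, this update only touches dual weights of points in $\cell$, so the hypotheses for the later cells are undisturbed; iterating over all $\cell\in\mcC$ therefore yields a single global assignment $y'(\cdot)$ with the last two properties claimed in the lemma.

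The remaining work is to check the five feasibility conditions \eqref{eq:dualfeasibility-non-matching-time}--\eqref{eq:dualfeasibility-a-free} globally. Conditions~\eqref{eq:dualfeasibility-matching-time} and~\eqref{eq:dualfeasibility-b-admissible} only involve points inside a single cell: by Lemma~\ref{lemma:no-cross} no matching edge crosses a cell boundary, and $\distance{b}{\mcC}=\distance{b}{\cell_b}$ depends only on the cell of $b$. Conditions~\eqref{eq:dualfeasibility-b} and~\eqref{eq:dualfeasibility-a-free} are single-point constraints. All of these have been secured by the per-cell application of Lemma~\ref{lemma:fresh_duals}.

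The only case that genuinely requires cross-cell reasoning is Condition~\eqref{eq:dualfeasibility-non-matching-time} for a non-matching edge $(a,b)\in E$ with $a\in A_{\cell_a}$ and $b\in B_{\cell_b}$ for $\cell_a\ne\cell_b$. Since $a$ lies outside $\cell_b$, we have $\distance{b}{\mcC}=\distance{b}{\cell_b}\le\distance{a}{b}$. Combining this with $y'(b)\le\distance{b}{\mcC}$ (the per-cell guarantee \eqref{eq:dualfeasibility-b} applied inside $\cell_b$) and $y'(a)\ge 0$ gives
\[
y'(b)-y'(a)\;\le\;y'(b)\;\le\;\distance{b}{\mcC}\;\le\;\distance{a}{b},
\]
so \eqref{eq:dualfeasibility-non-matching-time} holds across cells as well. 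This completes the plan.

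I expect the main obstacle not to be any single step but rather the verification that the cell-by-cell updates compose cleanly; the key enabling facts are (i) Lemma~\ref{lemma:no-cross}, which guarantees that matching and boundary-matching constraints are localized to cells, and (ii) the observation that the dual of any $b$ is bounded by its distance to its own cell's boundary, which dominates the distance to \emph{any} point in another cell. Once these are in hand, no global perturbation argument is needed and the proof reduces to bookkeeping over cells.
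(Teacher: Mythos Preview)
Your proposal is correct and follows the same approach as the paper, which simply states that Lemma~\ref{lemma:adjust_duals} ``is resulted from applying Lemma~\ref{lemma:fresh_duals} on all cells $\cell\in\mcC$.'' One minor remark: Lemma~\ref{lemma:fresh_duals} already guarantees \emph{global} feasibility of $M^\mcC,y(\cdot)$ after each per-cell update (its proof explicitly handles edges with one endpoint outside $\cell$), so your separate cross-cell verification of Condition~\eqref{eq:dualfeasibility-non-matching-time}, while correct, is redundant; the only things you truly need to observe for the iteration are that the update in $\cell$ leaves duals outside $\cell$ untouched (so the per-cell hypotheses for later cells persist) and that the matching $M^\mcC$ itself is unchanged (so $P$ remains a minimum net-cost augmenting path throughout).
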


\subsection{Proof of Lemma~\ref{lemma:combination}}

\combination*
\myparagraph{Proof of Lemma~\ref{lemma:combination} (a).}
To prove this lemma, we show that the matching $\hat{M}^\mcC=(M, B^\mcC\setminus B_\cell^\mcC)$ along with the dual weights $y(\cdot)$ satisfy the feasibility conditions~\eqref{eq:dualfeasibility-non-matching-time}--\eqref{eq:dualfeasibility-a-free}. 
First, note that for any edge $(a,b)\in E$ (resp. matching edge $(a,b)\in M$), the dual weights of $a$ and $b$ as well as their distance is unchanged; hence, Condition~\eqref{eq:dualfeasibility-non-matching-time} (resp.~\eqref{eq:dualfeasibility-matching-time}) holds trivially. Similarly, for any free point $a\in A_F$, the dual weight of $a$ remains zero, and for any point $b\in B\setminus B_\cell$ outside of $\cell$, the dual weight of $b$, as well as its distance to the boundaries of the partitioning remains unchanged; therefore, Conditions~\eqref{eq:dualfeasibility-b}--\eqref{eq:dualfeasibility-a-free} hold for all free points of $A$ and all points of $B$ that reside outside of $\cell$. Next, we show that Conditions~\eqref{eq:dualfeasibility-b} and~\eqref{eq:dualfeasibility-b-admissible} are also true for the points of $B_\cell$.

For any point $b\in B_\cell$, let $\cell_b\in\{\cell', \cell''\}$ be the child of $\cell$ containing $b$. From the feasibility of $M^\mcC, y(\cdot)$, we have $y(b)\le \distancetocell{b}{\mcC}=\distancetocell{b}{\cell_b} \le \distancetocell{b}{\cell}=\distancetocell{b}{\mcC'}$ and Condition~\eqref{eq:dualfeasibility-b} holds. For any boundary-matched point $b\in B_\cell \cap (B^\mcC\setminus B_\cell^\mcC)$, since $b$ is matched to the boundaries of $\cell_b$ that are different from the divider $\Gamma_\cell$, then $\distance{b}{\mcC}=\distance{b}{\mcC'}$ and therefore, Condition~\eqref{eq:dualfeasibility-b-admissible} holds for all boundary-matched points of $\hat{M}^\mcC$ inside $\cell$ as well.

\myparagraph{Proof of Lemma~\ref{lemma:combination} (b).} Let $M$ denote the matching of the extended matching $M^\mcC$, and let $M'$ denote the matching constructed in Lemma~\ref{lemma:geometric-matching} inside $\cell$. From Lemma~\ref{lemma:no-cross}, any point $b\in B_\cell$ that is matched in $M$ is matched to a point $a\in A_\cell$ inside $\cell$. Let $M_\cell$ denote the subset of the matching edges of $M$ that lie inside $\cell$. Define $F(M_\cell)$ (resp. $F(M')$) as the set of unmatched points of $B_\cell$ in the matching $M_\cell$ (resp. $M'$). Note that $B_\cell^\mcC\subseteq F(M_\cell)$. Any simple path $P$ in the symmetric difference $M_\cell\oplus M'$ from a free point $b\in B_\cell^\mcC$ is called a (standard) alternating path if $P$ ends at a point $b\in F(M')$ and a (standard) augmenting path if it ends at an unmatched point $a\in A_\cell$ with respect to $M_\cell$.  Let $\mcP_{\mathrm{aug}}$ (resp. $\mcP_{\mathrm{alt}}$) denote the set of (standard) augmenting paths (resp. alternating paths) with an endpoint in $B_\cell^\mcC$ in the symmetric difference $M_\cell\oplus M'$. Note that $|B_\cell^\mcC|=|\mcP_{\mathrm{aug}}| + |\mcP_{\mathrm{alt}}|$. For $\mcP_{\mathrm{alt}}$, since each alternating path in the symmetric difference has one free endpoint in $B_\cell^\mcC$ and the other free endpoint in $F(M')$, $|\mcP_{\mathrm{alt}}|\le |F(M')|=O(n^{4/5})$. 
Next, we show that $|\mcP_{\mathrm{aug}}|=O(n^{4/5})$.
Similar to Section~\ref{sec:primal}, define the net-cost of an augmenting path $P$ as $\phi(P) = \sum_{(a,b)\in P\cap M'}\distance{a}{b} - \sum_{(a,b)\in P\cap M_\cell}\distance{a}{b}$. Then,
\begin{align}
    \sum_{P\in \mcP_{\mathrm{aug}}} \phi(P) &= \sum_{P\in \mcP_{\mathrm{aug}}} \left(\sum_{(a,b)\in P\cap M'}\distance{a}{b} - \sum_{(a,b)\in P\cap M_\cell}\distance{a}{b} \right)\nonumber\\ &\le \sum_{P\in \mcP_{\mathrm{aug}}} \left(\sum_{(a,b)\in P\cap M'}\distance{a}{b}\right)\le w(M').\label{eq:net-cost-sum-1}
\end{align}
For each path $P\in \mcP_{\mathrm{aug}}$, let $b_P\in B_\cell$ and $a_P\in A_\cell$ denote the two end-points of $P$. Define $B_{\mathrm{aug}}:=\{b_P:P\in \mcP_{\mathrm{aug}}\}$.  Using Lemma~\ref{lem:slackcost} and Equation~\eqref{eq:net-cost-sum-1}, 
\begin{align}
    \sum_{b\in B_{\mathrm{aug}}} y(b) = \sum_{P\in \mcP_{\mathrm{aug}}} y(b_P)\le \sum_{P\in \mcP_{\mathrm{aug}}} \phi(P) \le w(M') .\label{eq:net-cost-sum-2}
\end{align}

Define $\alpha:=\ell_\cell n^{-1/5}$. Each free point $b\in B_{\mathrm{aug}}$ is called a \emph{close} (resp. \emph{far}) point if the distance of $b$ to the divider of $\cell$ is at most (resp. more than) $\alpha$. Let $B_{\mathrm{aug}}^{\mathrm{close}}$ (resp. $B_{\mathrm{aug}}^{\mathrm{far}}$) denote the set of all close (resp. far) points of $B_{\mathrm{aug}}$. By Lemma~\ref{lemma:margin},
\begin{equation}\label{eq:close-points}
    |B_{\mathrm{aug}}^{\mathrm{close}}| = O(n_\cell n^{-1/5}) = O(n^{4/5}).
\end{equation}
For each far point $b\in B_{\mathrm{aug}}^{\mathrm{far}}$, since $b$ is mapped to the divider $\divider\cell$, $y(b)=\distance{b}{\divider\cell}\ge \alpha$; therefore, 
\begin{equation}\label{eq:net-cost-3}
    \sum_{b\in B_{\mathrm{aug}}} y(b) \ge \sum_{b\in B_{\mathrm{aug}}^{\mathrm{far}}} y(b) \ge \alpha\times |B_{\mathrm{aug}}^{\mathrm{far}}|.
\end{equation}
Combining Equations~\eqref{eq:net-cost-sum-2} and~\eqref{eq:net-cost-3},
\begin{equation}\label{eq:far-points}
    |B_{\mathrm{aug}}^{\mathrm{far}}| \le \frac{\sum_{b\in B_{\mathrm{aug}}} y(b)}{\alpha} \le \frac{w(M')}{\alpha} = O(n^{4/5}).
\end{equation}
Combining with Equation~\eqref{eq:close-points},
\begin{align*}
    |B_\cell^\mcC| &\le |\mcP_{\mathrm{aug}}| + |\mcP_{\mathrm{alt}}| \le |B_{\mathrm{aug}}^{\mathrm{close}}| + |B_{\mathrm{aug}}^{\mathrm{far}}| + |\mcP_{\mathrm{alt}}| = O(n^{4/5}).
\end{align*}

\section{Missing Proofs of Section~\ref{sec:k-seq}}\label{sec:k-seq-appendix}

In this section, we provide the missing proofs of lemmas and claims made in Section~\ref{sec:k-seq}.

\begin{lemma}\label{lemma:augment}
    For any partitioning $\mcC$ of the root square $\cell^*$ of $\mcH$, any feasible extended matching $M^\mcC=(M, B^\mcC), y(\cdot)$ on $\mcG_\sigma$, and any admissible alternating or augmenting path $P$, the matching obtained after updating $M^\mcC$ along $P$ remains feasible.
\end{lemma}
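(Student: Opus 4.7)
\textbf{Proof plan for Lemma~\ref{lemma:augment}.}
The approach is to exploit that augmenting along $P$ leaves the dual weights $y(\cdot)$, the partitioning $\mcC$, and hence all quantities $\distance{u}{v}$ and $\distance{b}{\mcC}$ untouched. The only feasibility conditions that could be violated are those sensitive to the structural changes, namely which edges lie in the matching, which vertices are free, and which points lie in $B^{\mcC}$. I would split the analysis into the three terminal behaviours of $P$ that arise in the algorithm: (i) an augmenting path ending at a free $a \in A$; (ii) an augmenting path ending at $b' \in B$ that is then added to $B^{\mcC}$; and (iii) an alternating path from a free $b \in B$ to a matched $u \in B$ that is freed by the update (as used in Step~2(a) of the merge procedure).

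Each feasibility condition is then checked in turn. Conditions~\eqref{eq:dualfeasibility-non-matching-time} and~\eqref{eq:dualfeasibility-b} depend only on quantities that did not change, so they persist automatically. For~\eqref{eq:dualfeasibility-matching-time}, every edge in the new matching $M \oplus P$ either lies in $M \setminus P$---and was already tight---or in $P \setminus M$, in which case the admissibility of $P$ gives zero slack, hence $y(b) - y(a) = \distance{a}{b}$. Condition~\eqref{eq:dualfeasibility-b-admissible} is immediate for the old entries of $B^\mcC$ (both $y(\cdot)$ and $\distance{\cdot}{\mcC}$ are unchanged), and the only possible new entry is $b'$ in case (ii), for which the second part of admissibility gives $s(b')=0$, i.e., $y(b') = \distance{b'}{\mcC}$.

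The main obstacle, and where I would spend the most care, is~\eqref{eq:dualfeasibility-a-free}: after augmentation no point of $A$ may newly become free while carrying a non-zero dual. I would handle this by tracing the alternation of $P = \langle b_1, a_1, b_2, a_2, \ldots \rangle$ from its free $B$-endpoint $b_1$: the first edge $(b_1,a_1)$ is non-matching, and the alternation then forces every subsequent $a_i$ on $P$ to be matched in $M$ via its ``next'' edge $(a_i, b_{i+1})$. After augmentation that matching edge is simply replaced by $(b_i, a_i)$, so each $a_i$ on $P$ remains matched in all three cases. Hence the set of free $A$-points can only shrink (it loses the free endpoint in case (i) and is unchanged in cases (ii) and (iii)), so condition~\eqref{eq:dualfeasibility-a-free} continues to hold on the updated free set, completing the verification that the augmented extended matching $M^{\mcC}, y(\cdot)$ remains feasible.
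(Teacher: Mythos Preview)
Your proposal is correct and follows essentially the same approach as the paper: both arguments observe that dual weights are unchanged, so the edge conditions~\eqref{eq:dualfeasibility-non-matching-time}--\eqref{eq:dualfeasibility-matching-time} reduce to admissibility of the edges on $P$, while the vertex conditions~\eqref{eq:dualfeasibility-b}--\eqref{eq:dualfeasibility-a-free} persist because the relevant quantities are untouched. Your treatment is in fact more careful than the paper's, which simply asserts that conditions~\eqref{eq:dualfeasibility-b}--\eqref{eq:dualfeasibility-a-free} hold because dual weights are unchanged; you explicitly verify that the newly added boundary-matched point $b'$ in case~(ii) satisfies~\eqref{eq:dualfeasibility-b-admissible} via $s(b')=0$, and that no $a\in A$ can become newly free (so~\eqref{eq:dualfeasibility-a-free} holds on the updated $A_F$), both of which the paper leaves implicit.
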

\begin{proof}
    For any edge $(a,b)\in P$, due to the admissibility of the edge, $y(b)-y(a)=\distance{a}{b}$. If $(a,b)\notin M$ prior to augmentation, it is a matching edge after the augmentation and Condition~\eqref{eq:dualfeasibility-matching-time} holds for $(a,b)$. Otherwise, $(a,b)\in M$ prior to augmentation, it is a non-matching edge after the augmentation, and Condition~\eqref{eq:dualfeasibility-non-matching-time} holds for $(a,b)$. Note that all dual weights remain unchanged and consequently, Conditions~\eqref{eq:dualfeasibility-b}--\eqref{eq:dualfeasibility-a-free} remain satisfied.
\end{proof}


\subsection{Correctness of the Extended Hungarian Search Procedure}

\hungarianProp*

Let $y(\cdot)$ (resp. $y'(\cdot)$) denote the dual weights of the points after (resp. before) the update duals step of the extended Hungarian search procedure. To prove this lemma, we first show that after the update duals step, the extended matching $M^\mcC, y(\cdot)$ is feasible, $y(v)\le \ymax$ for all $v\in A_\cell\cup B_\cell$, and $y(b_f)=\ymax$ for all free points $b_f\in B_\cell$. We then show that the augmenting path $P$ computed by the procedure is admissible and conclude that $P$ is a minimum net-cost augmenting path inside $\cell$. We then use Lemma~\ref{lemma:augment} to show that after augmentation, the extended matching $M^\mcC, y(\cdot)$ remains feasible. Finally, we show that the updated key of $\cell$ is the smallest net-cost of augmenting paths inside $\cell$ and is at least $\ymax$.

    \myparagraph{Feasibility of points.}
    \begin{enumerate}
        \item For any point $b\in B$:
        \begin{itemize}
            \item if $b\in B\setminus B_\cell$ is outside of $\cell$, then $y(b)=y'(b)$; therefore, Conditions~\eqref{eq:dualfeasibility-b} and~\eqref{eq:dualfeasibility-b-admissible} remains satisfied.
            \item Otherwise, $b\in B_\cell$ and by definition, $\kappa\le \kappa_b+s(b)$. Therefore, Condition~\eqref{eq:dualfeasibility-b} holds for $b$ since 
            \begin{equation}\label{eq:global-hung-2}
                y(b) = y'(b) + \kappa - \kappa_b \le y'(b) + s(b) =\distancetocell{b}{\cell}. 
            \end{equation}
        \end{itemize}
        \item For any free point $a\in A_F$:
        \begin{itemize}
            \item if $a\in A\setminus A_\cell$ is a free point outside of $\cell$, then $y(a)=y'(a)=0$ and Condition~\eqref{eq:dualfeasibility-a-free} remains true. 
            \item Otherwise, $a\in A_\cell$ and by definition, $\kappa_a\ge \kappa$; hence, the procedure does not update the dual weight of $a$, i.e., the dual weight of $a$ remains $0$, satisfying Condition~\eqref{eq:dualfeasibility-a-free}. 
        \end{itemize}
        
    \end{enumerate}

    \myparagraph{Feasibility of edges.} For any edge $(a,b)\in E$, let $s(a,b)$ denote the slack of $(a,b)$ before updating the dual weights. 
    \begin{enumerate}
        \item if $a\in A\setminus A_\cell$ and $b\in B\setminus B_\cell$, then $y(a)=y'(a)$ and $y(b)=y'(b)$ and Conditions~\eqref{eq:dualfeasibility-non-matching-time} and~\eqref{eq:dualfeasibility-matching-time} remains satisfied for $(a,b)$.
        
        \item if $a\in A_\cell$ and $b\in B\setminus B_\cell$, then $y(b)=y'(b)$ and $y(a)\ge y'(a)$, since the extended Hungarian search procedure does not decrease the dual weight of any point inside $\cell$; hence, $y(b)-y(a)\le y'(b)-y'(a)\le \distance{a}{b}$ and Condition~\eqref{eq:dualfeasibility-non-matching-time} is satisfied. (Note that by Lemma~\ref{lemma:no-cross}, the edge $(a,b)$ is a non-matching edge).

        \item if $a\in A\setminus A_\cell$ and $b\in B_\cell$, then $y(b)\le \distance{b}{\mcC}\le \distance{a}{b}$; hence, $y(b)-y(a)\le y(b)\le \distance{a}{b}$ and Condition~\eqref{eq:dualfeasibility-non-matching-time} is satisfied. (Note that by Lemma~\ref{lemma:no-cross}, the edge $(a,b)$ is a non-matching edge).

        \item if $a\in A_\cell$ and $b\in B_\cell$: 
        \begin{itemize}
            \item If $(a,b)\in M$ is a matching edge, then $\kappa_b = \kappa_a$ since the only edge directed to $b$ in the residual graph is the zero-slack edge $(a,b)$. Thus, Condition~\eqref{eq:dualfeasibility-matching-time} holds since
            \begin{equation*}
            y(b) - y(a) = (y'(b)+\kappa-\kappa_b) - (y'(a)+\kappa-\kappa_a) = y'(b)-y'(a) = \distance{a}{b}.
            \end{equation*}
            \item Otherwise, $(a,b)$ is a non-matching edge and $\kappa_a\le \kappa_b + s(b,a)$; therefore, Condition~\eqref{eq:dualfeasibility-non-matching-time} holds since
            \begin{equation}\label{eq:global-hung-1}
                y(b)-y(a) = (y'(b)+\kappa-\kappa_b) - (y'(a)+\kappa-\kappa_a) \le y'(b)-y'(a) + s(b,a) = \distance{a}{b}.
            \end{equation}
        \end{itemize}
    \end{enumerate}

\myparagraph{Maximum dual weight.} Let $y_{\cell}:=\max_{b\in B_\cell}y'(b)$. By invariant (I2) prior to the extended Hungarian search procedure, for all free points $b\in B_\cell$, $y'(b)=y_{\cell}$. By the construction of the residual graph, $\kappa_b=y'(b)=y_\cell$ for all free points $b\in B_\cell$ and $\kappa_b\ge y_\cell$ for all points $b\in B_\cell$. Therefore, for any free point $b\in B_\cell$, 
\[y(b) = y'(b)+\kappa-\kappa_{b} = \kappa=\ymax,\]
where the last equality holds since $\kappa$ is the net-cost of the minimum net-cost path inside $\cell$, which is $\ymax$.
Furthermore, for any point $b'\in B_\cell$,
\[y(b) = y'(b)+\kappa-\kappa_{b} \le y'(b)+\kappa-y_\cell \le \kappa=\ymax.\]
Note that for any free point $a\in A_\cell$, $y(a)=0$ and for any matched point $a\in A_\cell$, if $a$ is matched to a point $b\in B_\cell$, then by Condition~\eqref{eq:dualfeasibility-matching-time}, $y(a)=y(b)-\distance{a}{b}\le y(b)\le \ymax$. 
Hence, we conclude $y(v)\le \ymax$ for all points $v\in A_\cell\cup B_\cell$ and $y(b_f)=\ymax$ for all free points $b_f\in B_\cell$ after the dual updates.

\myparagraph{Net-cost of $P$.} To prove that $P$ is a minimum net-cost augmenting path, we show that $P$ is an admissible augmenting path. Consequently, if $b\in B_\cell$ is the free endpoint of $P$, by Corollary~\ref{cor:slackcost}, $\phi(P)=y(b)=\ymax$. Note that for all other augmenting paths $P'$ from a free point $b'\in B_\cell$, from Lemmas~\ref{lem:slackcost}, $\phi(P')\ge y(b')=\ymax$, and therefore, $P$ would be a minimum net-cost augmenting path inside $\cell$.

For each edge $(u,v)\in P$, $\kappa_v = \kappa_u + s(u,v)$, since $(u,v)$ is an edge of the shortest path tree of the residual graph. Plugging into Equation~\eqref{eq:global-hung-1}, for each non-matching edge $(b,a)$, $y(b)-y(a)=\distance{a}{b}$ and the edge $(b,a)$ is admissible. Furthermore, if $\kappa=\kappa_b+s(b)$ for a point $b\in B$, then by Equation~\eqref{eq:global-hung-2}, $y(b)=\distancetocell{b}{\cell}$ and the point $b$ would be a zero-slack point. Therefore, the path $P$ computed by the algorithm would be an admissible path from a free point $b\in B_\cell$ to either a free point $a\in A_\cell$ or a zero-slack point $b'\in B_\cell$, i.e., $P$ is an admissible augmenting path. 

From Lemma~\ref{lemma:augment}, the extended matching $M^\mcC, y(\cdot)$ obtained after augmenting the matching $M^\mcC$ along $P$ remains feasible.

\myparagraph{Updated key.} Note that the extended matching $M^\mcC, y(\cdot)$ after the augmentation step is feasible, $y(v)\le \ymax$ for all vertices $v\in A_\cell\cup B_\cell$, and $y(b_f)=\ymax$ for all free points $b_f\in B_\cell$. Let $\kappa_v$, for each $v\in A_\cell\cup B_\cell$, denote the distances computed in the update key step of the extended Hungarian search procedure. By Lemma~\ref{lem:slackcost}, for any augmenting path $P$ from a free point $b\in B_\cell$ to a free point $a\in A_\cell$, the net-cost of $P$ is 
\[\phi(P)=y(b) +\sum_{(a'',b'')\in P}s(a'',b'') \ge \kappa_a,\] where the last inequality holds by the construction of the residual graph and the definition of $\kappa_a$. In this case, the equality happens when $P$ is a shortest path from $s$ to $a$. Similarly, for any augmenting path $P$ from a free point $b\in B_\cell$ to a point $b'\in B_\cell$, the net-cost of $P$ is 
\[\phi(P)=y(b) + s(b') +\sum_{(a'',b'')\in P}s(a'',b'') \ge \kappa_{b'} + s(b'),\]
and we get an equality if $P$ is a shortest path from the source vertex $s$ to $b'$.
Therefore, the updated key of $\cell$, i.e., $\kappa_\cell=\min\{\min_{a\in A_\cell^F}\kappa_a, \min_{b\in B_\cell}\kappa_b+s(b)\}$ correctly computes the net-cost of the minimum net-cost augmenting path inside $\cell$. Finally, note that for any augmenting path $P$ from a free point $b\in B_\cell$, by Lemma~\ref{lem:slackcost}, $\phi(P)\ge y(b)=\ymax$ and the key of $\cell$ would be at least $\ymax$.

\subsection{Correctness of the Merge Procedure}\label{subsec:merge_correctness}

\mergeProp*
Let $\hat{M}^\mcC=(M, \hat{B}^\mcC), \hat{y}(\cdot)$ denote the feasible extended matching maintained by our algorithm before the execution of the merge procedure. Let $B^\mcC_\cell$ denote the subset of points in $\hat{B}^\mcC$ that are matched to the divider $\Gamma_\cell$ of $\cell$, and let $M^\mcC=(M, B^\mcC = \hat{B}^\mcC\setminus B^\mcC_\cell)$ be the extended matching after adding the boundary-matched points in $B_\cell^\mcC$ to the free points. By Lemma~\ref{lemma:combination}, the matching $M^\mcC, \hat{y}(\cdot)$ is feasible. Next, we show that, given a feasible extended matching $M^\mcC, y'(\cdot)$, after one iteration of the while-loop in the merge step, the matching remains feasible and the dual weights of points in $\cell$ are at most $\ymax$.

Let $y(\cdot)$ (resp. $y'(\cdot)$) denote the dual weights of the points after (resp. before) the execution of one iteration of the merge step. We first show that $M^\mcC, y(\cdot)$ is a feasible extended matching where $y(v)\le \ymax$ for all $v\in A_\cell\cup B_\cell$. We then show that the path $P$ is admissible and use Lemma~\ref{lemma:augment} to show that after augmentation, the extended matching $M^\mcC, y(\cdot)$ remains feasible.

\myparagraph{Feasibility of points.}
    \begin{enumerate}
        \item For any point $b\in B$:
        \begin{itemize}
            \item if $b\in B\setminus B_\cell$ is outside of $\cell$, then $y(b)=y'(b)$; therefore, Conditions~\eqref{eq:dualfeasibility-b} and~\eqref{eq:dualfeasibility-b-admissible} remains satisfied.
            \item Otherwise, $b\in B_\cell$ and $\kappa\le \kappa_b+s(b)$. Therefore, Condition~\eqref{eq:dualfeasibility-b} holds for $b$ since 
            \begin{equation}\label{eq:local-hung-2}
                y(b) = y'(b) + \kappa - \kappa_b \le y'(b) + s(b) =\distancetocell{b}{\cell}. 
            \end{equation}
        \end{itemize}
        \item For any free point $a\in A_F$:
        \begin{itemize}
            \item if $a\in A\setminus A_\cell$ is a free point outside of $\cell$, then $y(a)=y'(a)=0$ and Condition~\eqref{eq:dualfeasibility-a-free} remains true. 
            \item Otherwise, $a\in A_\cell$ and $\kappa_a\ge \kappa$; therefore, the procedure does not update the dual weight of $a$, i.e., the dual weight of $a$ remains $0$, satisfying Condition~\eqref{eq:dualfeasibility-a-free}. 
        \end{itemize}
        
    \end{enumerate}

    \myparagraph{Feasibility of edges.} For any edge $(a,b)\in E$, let $s(a,b)$ denote the slack of $(a,b)$ with respect to $y'(\cdot)$. 
    \begin{enumerate}
        \item if $a\in A\setminus A_\cell$ and $b\in B\setminus B_\cell$, then $y(a)=y'(a)$ and $y(b)=y'(b)$ and therefore, Conditions~\eqref{eq:dualfeasibility-non-matching-time} and~\eqref{eq:dualfeasibility-matching-time} remains satisfied for $(a,b)$.
        
        \item if $a\in A_\cell$ and $b\in B\setminus B_\cell$, then $y(b)=y'(b)$ and $y(a)\ge y'(a)$, since the merge procedure does not decrease the dual weight of any point inside $\cell$; hence, $y(b)-y(a)\le y'(b)-y'(a)\le \distance{a}{b}$ and Condition~\eqref{eq:dualfeasibility-non-matching-time} is satisfied (Note that by Lemma~\ref{lemma:no-cross}, the edge $(a,b)$ is a non-matching edge).

        \item if $a\in A\setminus A_\cell$ and $b\in B_\cell$, then $y(b)\le \distance{b}{\mcC}\le \distance{a}{b}$; hence, $y(b)-y(a)\le y(b)\le \distance{a}{b}$ and Conditions~\eqref{eq:dualfeasibility-non-matching-time} is satisfied (Note that by Lemma~\ref{lemma:no-cross}, the edge $(a,b)$ is a non-matching edge).

        \item if $a\in  A_\cell$ and $b\in B_\cell$: 
        \begin{itemize}
            \item If $(a,b)\in M$ is a matching edge, then $\kappa_b = \kappa_a$ since the only edge directed to $b$ in the residual graph is the zero-slack edge $(a,b)$. Thus, Condition~\eqref{eq:dualfeasibility-matching-time} holds since
            \begin{equation*}
            y(b) - y(a) = (y'(b)+\kappa-\kappa_b) - (y'(a)+\kappa-\kappa_a) = y'(b)-y'(a) = \distance{a}{b}.
            \end{equation*}
            \item Otherwise, $(a,b)$ is a non-matching edge and since there is a directed edge from $b$ to $a$, $\kappa_a\le \kappa_b + s(b,a)$, and Condition~\eqref{eq:dualfeasibility-non-matching-time} holds since
            \begin{equation}\label{eq:local-hung-1}
                y(b)-y(a) = (y'(b)+\kappa-\kappa_b) - (y'(a)+\kappa-\kappa_a) \le y'(b)-y'(a) + s(b,a) = \distance{a}{b}.
            \end{equation}
        \end{itemize}
    \end{enumerate}

    \myparagraph{Maximum dual weight.} Next, we show that $y(v)\le \ymax$ for all $v\in A_\cell\cup B_\cell$. Note that for any point $b\in B_\cell$, by the definition of $\kappa$, we have $\kappa\le \kappa_b+\ymax-y'(b)$. Therefore, if $\kappa_b<\kappa$, then $y(b) = y'(b)-\kappa_b+\kappa\le \ymax$. Otherwise, $\kappa_b\ge \kappa$ and $y(b)=y'(b)\le \ymax$. Furthermore, for all free points $a\in A_\cell$, by Condition~\eqref{eq:dualfeasibility-a-free}, $y(a)=0$ and for all matched points $a\in A_\cell$, if $a$ is matched to a point $b\in B_\cell$, then $y(a)=y(b)-\distance{a}{b}\le y(b)\le \ymax$.

    \myparagraph{Net-cost of $P$.} Next, we show that the path $P$ computed in an iteration of the merge step is either (i) an admissible augmenting path, or (ii) an admissible alternating path from a free point $b\in B_\cell$ to a point $b'\in B_\cell$ with $y(b')=\ymax$. Then, from Lemma~\ref{lemma:augment}, the extended matching obtained after updating $M^\mcC, y(\cdot)$ along $P$ is feasible.
    
    For each non-matching edge $(b,a)\in P$, since $(b,a)$ is in the Dijkstra's shortest path tree, $\kappa_a = \kappa_b + s(b,a)$. Plugging into Equation~\eqref{eq:local-hung-1}, for each non-matching edge $(b,a)$, $y(b)-y(a)=\distance{a}{b}$, and the edge $(b,a)$ is admissible (i.e., all edges of $P$ are admissible).
    \begin{itemize}
        \item If $P$ is a path that ends at a free point $a\in A_\cell$ ($\kappa$ is determined by the first term in the RHS of Equation~\eqref{eq:local-kappa}), then $P$ is an admissible augmenting path, 
        \item otherwise, if $P$ is a path that ends at a point $b'\in B_\cell$ with $\kappa=\kappa_{b'}+s(b')$ ($\kappa$ is determined by the second term in the RHS of Equation~\eqref{eq:local-kappa}), then from Equation~\eqref{eq:local-hung-2}, $y(b') = y'(b')+s(b') = \distance{b'}{\mcC}$, and $P$ is an admissible augmenting path, and
        \item otherwise, $P$ is a path that ends at a point $b'\in B_\cell$ with $\kappa=\kappa_{b'}+\ymax-y'(b')$, and $P$ is an admissible alternating path.
    \end{itemize}
    Therefore, $P$ is an admissible alternating or augmenting path and the matching obtained by updating $M^\mcC$ along $P$ remains feasible (Lemma~\ref{lemma:augment}). 
    
    \myparagraph{Termination.} Let $P$ be a path from a free point $b\in B_\cell$ to a point $u\in \freeofcell{A}{\cell}\cup B_\cell$.
    \begin{itemize}
        \item If $u\in\freeofcell{A}{\cell}$, then $P$ is an admissible augmenting path and $P$ is in case (i).
        \item Otherwise, if $u\in B_\cell$ and $\kappa=\kappa_u+s(u)$, then $P$ is an admissible augmenting path and $P$ is in case (ii).
        \item Otherwise, $u\in B_\cell$ and $\kappa=\kappa_u+\ymax - y'(u)$. In this case, $P$ is an admissible alternating path from $b$ to a free point $b'\in B_\cell$ with $y'(b') < \ymax$ and $y(b')=\ymax$. 
    \end{itemize}
    In either case, the number of free points $b\in B_\cell$ with $y(b)<\ymax$ reduces by one, and the merge step terminates.

    \myparagraph{Dual weight of free points.} Note that the while-loop terminates when there are no free points $b_f\in B_\cell$ with $y(b_f)<\ymax$, whereas, as discussed above, all points will have a dual weight at most $\ymax$, i.e., the dual weight of each free point $b_f\in B_\cell$ after the termination of the while-loop is $\ymax$.

\myparagraph{Updated key.} Finally, we show that the updated key of $\cell$ denotes the net-cost of the minimum net-cost augmenting path inside $\cell$. Note that the extended matching $M^\mcC, y(\cdot)$ after the termination of the while-loop is feasible, $y(v)\le \ymax$ for all vertices $v\in A_\cell\cup B_\cell$, and $y(b_f)=\ymax$ for all free points $b_f\in B_\cell$. Let $\kappa_v$, for each $v\in A_\cell\cup B_\cell$, denote the distances computed in the update key step of the extended Hungarian search procedure. By Lemma~\ref{lem:slackcost}, for any augmenting path $P$ from a free point $b\in B_\cell$ to a free point $a\in A_\cell$, the net-cost of $P$ is 
\[\phi(P)=y(b) +\sum_{(a'',b'')\in P}s(a'',b'') \ge \kappa_a,\] where the last inequality holds by the construction of the residual graph and we get an equality of $P$ is the shortest path from $s$ to $a$. Similarly, for any augmenting path $P$ from a free point $b\in B_\cell$ to a point $b'\in B_\cell$, the net-cost of $P$ is 
\[\phi(P)=y(b) + s(b') +\sum_{(a'',b'')\in P}s(a'',b'') \ge \kappa_{b'} + s(b'),\]
and we get an equality if $P$ is a shortest path from $s$ to $b'$.
Therefore, the updated key of $\cell$, i.e., $\kappa_\cell=\min\{\min_{a\in A_\cell^F}\kappa_a, \min_{b\in B_\cell}\kappa_b+s(b)\}$ correctly computes the net-cost of the minimum net-cost augmenting path inside $\cell$. Finally, note that for any augmenting path $P$ from a free point $b\in B_\cell$, by Lemma~\ref{lem:slackcost}, $\phi(P)\ge y(b)=\ymax$ and the key of $\cell$ would be at least $\ymax$.

\subsection{Runtime Analysis of the Merge Step} 
For any non-leaf cell $\cell$ with $\cell'$ and $\cell''$ as children, the merge step at $\cell$ first increases the dual weights of the free points inside $\cell'$ and $\cell''$ in $\tilde{O}(n_\cell\Phi(n_\cell))$ time (Lemma~\ref{lemma:fresh_duals}). For the feasible extended matching $M^\mcC=(M, B^\mcC), y(\cdot)$ after this initial step, let $B^\mcC_\cell$ denote the subset of the boundary-matched points in $B^\mcC$ that are matched to $\Gamma_\cell$. As discussed in Section~\ref{subsec:merge_correctness}, each iteration of the while-loop in the merge step reduces the number of free points with a dual weight less than $\ymax$ by one; therefore, the total number of executions of the while-loop in the merge step is at most $|B^\mcC_\cell|$.

\begin{restatable}{lemma}{localProcessInactive}\label{lemma:local-inactive}
    For any cell $\cell$, the number of iterations of the merge step on $\cell$ is $O(|B^\mcC_\cell|)$.
\end{restatable}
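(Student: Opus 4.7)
The plan is to track the quantity $\Psi := |\{b \in B_\cell : b \text{ is free and } y(b) < \ymax\}|$ as a potential, and show that (i) $\Psi \le |B^\mcC_\cell|$ at the start of the while-loop, and (ii) $\Psi$ strictly decreases by one in each iteration. Since the loop terminates precisely when $\Psi = 0$, this will give the bound.

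For the initial bound on $\Psi$, I would use invariant (I2) together with the dual-weight refresh step at the beginning of the merge procedure. Just before the divider is erased, invariant (I2) applied to $\cell'$ and $\cell''$ says that $y_{\cell'}, y_{\cell''} \le \ymax$, and the explicit application of Lemma~\ref{lemma:fresh_duals} raises the duals of all free points inside the two children to exactly $\ymax$. Thus immediately before the divider is erased, every free point $b \in B_{\cell'} \cup B_{\cell''}$ satisfies $y(b) = \ymax$. Erasing the divider then makes every point of $B^\mcC_\cell$ free, and by Condition~\eqref{eq:dualfeasibility-b-admissible} any such point satisfies $y(b) = \distance{b}{\Gamma_\cell}$, which is upper bounded by $\ymax$ thanks to (I2). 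Hence the only free points that can contribute to $\Psi$ at the start of the loop are the elements of $B^\mcC_\cell$, giving $\Psi \le |B^\mcC_\cell|$.

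Next I would argue that each iteration decreases $\Psi$ by exactly one, by analyzing the two cases of the while-loop. In case 2(a), the path $P$ goes from a free point $b$ (which had $y(b) < \ymax$, since otherwise the loop condition would not have selected it) to a matched point $u \in B_\cell$; after $M \leftarrow M \oplus P$, the point $b$ becomes matched and $u$ becomes free, but by the analysis in Section~\ref{subsec:merge_correctness} the updated dual of $u$ is $y(u) = \ymax$, so $u$ does \emph{not} contribute to $\Psi$. In case 2(b), $P$ is an admissible augmenting path, so augmentation simply removes a free point $b$ with $y(b) < \ymax$ (either matching it to a point of $A$ or boundary-matching it), without creating any new free point. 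In both cases $\Psi$ drops by one, and no previously existing free point with dual weight equal to $\ymax$ has its dual weight lowered (the merge procedure only raises duals). Combining the initial bound with this monovariant argument gives that the while-loop executes at most $|B^\mcC_\cell|$ times.

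The only subtle point, which I expect to be the main obstacle to write cleanly, is verifying that case 2(a) really creates a free point $u$ with $y(u) = \ymax$ rather than some smaller value: this is exactly what is established in the ``Net-cost of $P$'' and ``Dual weight of free points'' portions of the proof of Lemma~\ref{lemma:merge_properties}, and I would cite that analysis directly rather than reprove it. Everything else is bookkeeping on the set of free points of $B_\cell$ and a single application of Lemma~\ref{lemma:fresh_duals}.
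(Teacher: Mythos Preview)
Your proposal is correct and essentially identical to the paper's argument: the paper also tracks the number of free points in $B_\cell$ with dual weight strictly below $\ymax$, observes (via Lemma~\ref{lemma:fresh_duals}) that this count is at most $|B^\mcC_\cell|$ immediately after erasing the divider, and then invokes the Termination paragraph of Section~\ref{subsec:merge_correctness} to conclude that each iteration of the while-loop decreases it by one. Your parenthetical ``since otherwise the loop condition would not have selected it'' is a bit loose---the loop condition only guarantees \emph{existence} of such a point, not that the Dijkstra path starts from one---but you rightly defer this to the proof of Lemma~\ref{lemma:merge_properties}, which handles it.
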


Each iteration requires the computation of the distance $\kappa_v$ for each point $v\in A_\cell\cup B_\cell$, which takes $\tilde{O}(n_\cell^2)$ time. As shown in Section~\ref{sec:hungarian_nk}, the efficiency of this computation can be improved to $\tilde{O}(n_\cell\Phi(n_\cell))$ using a dynamic weighted nearest neighbor data structure with a query/update time of $\Phi(n_\cell)$. Furthermore, by Lemma~\ref{lemma:combination}, $|B_{\cell}^\mcC|=O(n^{4/5})$. Computing the updated key of $\cell$ also requires the computation of the distance of each point from the source in the residual graph, which also takes $\tilde{O}(n_\cell\Phi(n_\cell))$ time. Therefore, the total execution time of the merge step would be $\tilde{O}(n^{4/5}n_\cell\Phi(n_\cell))$.

\subsection{Runtime Analysis of the Extended Hungarian Search Procedure}\label{appendix:hung_time}
Recall that the extended Hungarian search procedure iteratively picks the cell with the minimum key from $\pq$ to be processed.
For each leaf cell $\cell$ in $\mcH$, since $\cell$ contains the points corresponding to a single request and contains only one point of $B$, the procedure picks $\cell$ at most once, at which it matches the point $b\in B_\cell$ to the boundaries of $\cell$. Therefore, the total time of the extended Hungarian search procedure on all leaf cells of $\mcH$ is $O(n)$. 

Next, we show that for any non-leaf cell $\cell$ of $\mcH$, our algorithm executes the search procedure on $\cell$ at most $O(n^{4/5})$ times. Since each execution of the procedure takes $\tilde{O}(n_{\cell}\Phi(n))$ time and each point participates in $O(\log (n\Delta))$ cells, the total running time of the extended Hungarian search procedure on all cells of $\mcH$ would be $O(n^{4/5}\sum_{\cell\in\mcH}n_\cell\Phi(n_\cell)) = O(n^{9/5}\Phi(n_\cell)\log(n\Delta))$, as claimed.

For any execution of the search procedure on $\cell$, since $\cell$ has the minimum key in $\pq$, the value $\ymax$ represents the net-cost of the minimum net-cost augmenting path inside $\cell$.
Recall that for any non-leaf cell $\cell$, we categorized the selections of $\cell$ by the search procedure as low-net-cost if the value of $\ymax$ in that iteration is at most $\ell_\cell n^{-1/5}$ and high-net-cost otherwise. 
To bound the high-net-cost selections, we show that as soon as the value $\ymax$ exceeds $\ell_\cell n^{-1/5}$, the number of free points inside $\cell$ cannot be more than $O(n^{4/5})$, and therefore, the number of high-net-cost selections of $\cell$ is $O(n^{4/5})$.

\highMax*
\begin{proof}
    Let $M'$ be the matching inside $\cell$ as constructed in Lemma~\ref{lemma:geometric-matching}, and let $M$ denote the matching of the extended matching $M^\mcC=(M, B^\mcC)$.
    Define $B^\mcC_\cell$ as the set of free points of $B_\cell$ with respect to $M^\mcC,y(\cdot)$. Let $M_\cell$ denote the subset of matching edges of $M$ that lie inside $\cell$, and let $\mcP_{\mathrm{aug}}$ (resp. $\mcP_{\mathrm{alt}}$) denote the set of (standard) augmenting (resp. alternating) paths in the symmetric difference $M_\cell\oplus M'$ with one endpoint in the set $B^\mcC_\cell$. Note that each path in $\mcP_{\mathrm{alt}}$ has one endpoint that is free in $M'$ and therefore, $|\mcP_{\mathrm{alt}}|\le |F(M')|=O(n^{4/5})$; here, $F(M')$ denotes the set of free points of $M'$. Next, we show that $|\mcP_{\mathrm{aug}}|=O(n^{4/5})$.

    From the definition of the net-cost of an augmenting path,
    \begin{align}
        \sum_{P\in \mcP_{\mathrm{aug}}} \phi(P) &= \sum_{P\in \mcP_{\mathrm{aug}}} \left(\sum_{(a,b)\in P\cap M'}\distance{a}{b} - \sum_{(a,b)\in P\cap M_\cell}\distance{a}{b} \right)\nonumber\\ &\le \sum_{P\in \mcP_{\mathrm{aug}}} \left(\sum_{(a,b)\in P\cap M'}\distance{a}{b}\right)\le w(M').\label{eq:net-cost-sum-1-1}
    \end{align}
    For each path $P\in \mcP_{\mathrm{aug}}$, let $b_P\in B_\cell$ and $a_P\in A_\cell$ denote the two end-points of $P$. Define $B_{\mathrm{aug}}:=\{b_P:P\in \mcP_{\mathrm{aug}}\}$ as the set of free endpoints of the paths in $\mcP_{\mathrm{aug}}$. Using Lemma~\ref{lem:slackcost} and Equation~\eqref{eq:net-cost-sum-1-1}, 
    \begin{align}
        \sum_{b\in B_{\mathrm{aug}}} y(b) = \sum_{P\in \mcP_{\mathrm{aug}}} y(b_P)\le \sum_{P\in \mcP_{\mathrm{aug}}} \phi(P) \le w(M') .\label{eq:net-cost-sum-2-1}
    \end{align}
    From invariant (I2), the dual weight of all free points of $B_\cell$ equals $\ymax>\ell_\cell n^{-1/5}$. Therefore,
    \begin{equation*}\label{eq:net-cost-sum-2-2}
        |B_{\mathrm{aug}}| = \frac{\sum_{b\in B_{\mathrm{aug}}} y(b)}{\ymax} \le \frac{w(M')}{\ell_\cell n^{-1/5}}=O(n_\cell^{3/5}n^{1/5}) = O(n^{4/5}).
    \end{equation*}
    Combining the two bounds, the total number of free points with respect to $M^\mcC, y(\cdot)$ is $|\mcP_{\mathrm{alt}}|+|\mcP_{\mathrm{aug}}|=O(n^{4/5})$.
\end{proof}

Define $\mcC_\cell$ as the set of all cells of $\mcH$, including $\cell$ itself, that are processed by the merge step of our algorithm while $\cell\in\mcC$ and $\ymax\le \ell_\cell n^{-1/5}$. For any cell $\cell'\in\mcC_\cell$, let $\mcB_{\cell'}$ denote the set of points of $B$ that are matched to the divider $\Gamma_{\cell'}$ at the beginning of the merge step at $\cell'$. During the execution of our algorithm, before processing $\cell'$ by the merge step, there are $k$ free points across all cells is $\mcC$. After combining the two children of $\cell'$ (and removing the divider of $\cell'$), the number of free points across all cells in the partitioning is now increased to $k + |\mcB_{\cell'}|$. Each iteration of the merge step either (i) finds an admissible augmenting path, which reduces the number of free points by one, or (ii) finds an admissible alternating path to a point $b\in B_{\cell'}$ with $y(b)=\ymax$, which does not change the number of free points. Therefore, after the merge step at $\cell'$, the number of free points across all cells in the partitioning is at most $k + |\mcB_{\cell'}|$. Our algorithm then iteratively executes the search procedure to reduce the number of free points across all cells to $k$. Hence, for each cell $\cell'\in\mcC_\cell$, the merge step at $\cell'$ might lead to the execution of a low-net-cost extended Hungarian search procedure on $\cell$ at most $|\mcB_{\cell'}|$ times.

\begin{restatable}{lemma}{iterationsOfGlobal}\label{lemma:global-iterations}
    For any cell $\cell$, the number of low-net-cost executions of the extended Hungarian search procedure on $\cell$ is at most $\sum_{\cell'\in\mcC_\cell} |\mcB_{\cell'}|$.
\end{restatable}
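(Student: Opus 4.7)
The plan is to bound the number of low-net-cost Extended Hungarian searches on $\cell$ by a simple accounting argument that tracks the quantity $|B_F|$ throughout the low-net-cost phase for $\cell$. Two facts drive the argument: (i) by Lemma~\ref{lemma:hung_properties}, every Extended Hungarian search augments along an augmenting path emanating from a free point $b\in B$ and hence strictly decreases $|B_F|$ by exactly one; and (ii) the only events that can increase $|B_F|$ are merge steps, each of which erases a divider and thereby frees the points that were boundary-matched to it.

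First I would note that the outer while-loop of the algorithm runs Extended Hungarian searches so long as $|B_F|>k$, so immediately before each merge step during the low-net-cost phase we have $|B_F|=k$. Next I would analyze the effect of a single merge step at $\cell'\in\mcC_\cell$: by the definition of $\mcB_{\cell'}$, exactly $|\mcB_{\cell'}|$ boundary-matched points become free when the divider $\Gamma_{\cell'}$ is erased, raising the free count to $k+|\mcB_{\cell'}|$. The merge procedure's internal while-loop then processes some of these free points via case (i) (an augmenting path, which decreases $|B_F|$ by one) or case (ii) (an alternating path from a free $b$ to a matched $u$, which leaves $|B_F|$ unchanged because $u$ becomes free precisely when $b$ becomes matched). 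Letting $r_{\cell'}\ge 0$ denote the number of case (i) iterations within this merge, the count immediately after the merge at $\cell'$ equals $k+|\mcB_{\cell'}|-r_{\cell'}$.

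After the merge step, the outer while-loop executes additional Extended Hungarian searches until $|B_F|$ drops back to $k$; since each such search decreases $|B_F|$ by exactly one, the number of searches performed between the merge at $\cell'$ and the next merge step is precisely $|\mcB_{\cell'}|-r_{\cell'}\le|\mcB_{\cell'}|$. Summing over all $\cell'\in\mcC_\cell$ gives the bound $\sum_{\cell'\in\mcC_\cell}|\mcB_{\cell'}|$ on the total number of Extended Hungarian searches performed during the entire low-net-cost phase for $\cell$, and since the searches performed on $\cell$ form a subset of these, the claimed inequality follows.

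The main subtlety — and the one place where some care is required — is the accounting of the merge procedure's case (ii) iterations: one must verify that the alternating swap $M\leftarrow M\oplus P$ ending at a matched point $u$ really does leave $|B_F|$ invariant (the old free endpoint becomes matched while $u$ becomes free), so that only case (i) iterations contribute to consuming the free-point ``budget'' created by erasing $\Gamma_{\cell'}$. Once this invariant is in place, the rest of the proof is a straightforward telescoping sum that does not require any geometric reasoning beyond what is already encapsulated in $\mcB_{\cell'}$.
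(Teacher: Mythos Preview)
Your proposal is correct and follows essentially the same accounting argument as the paper: before each merge the free count is exactly $k$, erasing $\Gamma_{\cell'}$ raises it by $|\mcB_{\cell'}|$, case (ii) of the merge's while-loop leaves $|B_F|$ unchanged while case (i) decreases it, and each subsequent Extended Hungarian search decreases $|B_F|$ by one, so at most $|\mcB_{\cell'}|$ global searches (and a fortiori at most that many on $\cell$) occur before the next merge. The paper's proof is the paragraph immediately preceding the lemma statement and differs only in that it does not name the count $r_{\cell'}$ explicitly.
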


Recall that in a low-net-cost selection of $\cell$, the value $\ymax\le \ell_\cell n^{-1/5}$. Using invariant (I2), the dual weight of all points of $B$ is at most $\ymax$, and therefore, for any cell $\cell'\in\mcC_{\cell}$ and each point $b\in \mcB_{\cell'}$, $y(b)\le \ell_\cell n^{-1/5}$, i.e., $b$ is a boundary-matched point that is matched to the divider $\divider{\cell'}$ and has a dual weight at most $\ell_\cell n^{-1/5}$. Therefore, $\distance{b}{\divider{\cell'}} = y(b) \le \ell_\cell n^{-1/5}$ and all points in $\mcB_{\cell'}$ are at a distance at most $\ell_\cell n^{-1/5}$ from the divider $\divider{\cell'}$. From Lemma~\ref{lemma:ratio}, $\ell_{\cell}\le 3\ell_{\cell'}$. Therefore, using Lemma~\ref{lemma:margin}, the number of points of $B_{\cell'}$ at a distance at most $\lambda\ell_{\cell'}\ge \ell_\cell n^{-1/5}$ to the divider $\divider{\cell'}$ is $O(n_{\cell'}n^{-1/5})$; hence, $|\mcB_{\cell'}| = O(n_{\cell'}n^{-1/5})$.

\ratio*
\begin{proof}
    To prove this lemma, we first provide two useful relations between the sum of side-lengths $p_{\hat{\cell}}$ and the largest side-length $\ell_{\hat{\cell}}$ of any cell $\hat{\cell}$ of $\mcH$. Let $\ell_x$ (resp. $\ell_y$) denote the width (resp. height) of $\hat\cell$. Since the aspect ratio of $\hat{\cell}$ is at most $3$, i.e., $\min\{\ell_x, \ell_y\} \ge \frac{1}{3}\ell_{\hat{\cell}} = \frac{1}{3}\max\{\ell_x, \ell_y\}$, 
    \begin{equation*}\label{eq:ratio-edge}
        \frac{4}{3}\ell_{\hat{\cell}} \le \ell_x + \ell_y = p_{\hat{\cell}} =\ell_x + \ell_y \le 2\ell_{\hat{\cell}}.
    \end{equation*}
    Furthermore, for the smaller child $\hat{\cell}_1$ of $\hat{\cell}$, if the cell is divided on the $x$ axis, then the width of $\hat{\cell}_1$ is within  $\frac{1}{3}\ell_{\hat\cell}$ and $\frac{1}{2}\ell_{\hat\cell}$; therefore,
    \begin{equation*}
        \frac{4}{3}p_{\hat{\cell}_1}\le p_{\hat\cell} \le 2 p_{\hat{\cell}_1}.
    \end{equation*}
    For the cell $\cell$ (resp. $\cell'$), suppose $\cell_{\min}$ (resp. $\cell'_{\min}$) denote the smaller child of $\cell$ (resp. $\cell'$). Since our algorithm picked $\cell_{\min}$ for being merged as the smallest cell before $\cell'_{\min}$, 
    \begin{equation}\label{eq:ratio-1}
        \frac{4}{3}\ell_\cell\le p_\cell\le 2p_{\cell_{\min}}\le 2p_{\cell'_{\min}}\le \frac{3}{2}p_{\cell'}\le 3\ell_{\cell'}.
    \end{equation}
    Similarly, since our algorithm picked $\cell'_{\min}$ as the smallest cell to be processed rather than $\cell$,
    \begin{equation}\label{eq:ratio-2}
        \frac{4}{3}\ell_{\cell'}\le p_{\cell'}\le 2p_{\cell'_{\min}}\le 2p_{\cell}\le 4\ell_\cell.
    \end{equation}
    Combining Equations~\eqref{eq:ratio-1} and~\eqref{eq:ratio-2},
    \begin{equation*}
        \frac{1}{3} \ell_{\cell'} \le \ell_\cell \le \frac{9}{4} \ell_{\cell'}.
    \end{equation*}
\end{proof}


We next show that $\sum_{\cell'\in\mcC_{\cell}}n_{\cell'}=O(n)$ and conclude that $\sum_{\cell'\in\mcC_\cell}|\mcB_{\cell'}| = O(n^{4/5})$, which bounds the number of low-net-cost executions of the search procedure on $\cell$ by $O(n^{4/5})$, as desired.
By Lemma~\ref{lemma:ratio}, for any cell $\cell'\in\mcC_{\cell}$, $\ell_{\cell'}\in[\frac{1}{3}\ell_\cell, \frac{9}{4}\ell_\cell]$. By the construction of $\mcH$, for any cell $\cell'$ and its grandparent $\cell'_g$, $\ell_{\cell'}\le \frac{2}{3}\ell_{\cell'_g}$. Therefore, for $\cell'\in \mcC_\cell$, only the ancestor of $\cell'$ up to $2\log_{3/2} \frac{27}{4}$ levels higher can also be in $\mcC_\cell$; therefore, for any point $u\in A\cup B$, the point $u$ lies inside at most $O(1)$ cells of $\mcC_\cell$, and $\sum_{\cell'\in\mcC_{\cell}}n_{\cell'}=O(n)$, as claimed.

We conclude that the total number of executions of the extended Hungarian search for each cell $\cell$ of $\mcH$ is $O(n^{4/5})$. The next lemma follows since each execution takes $\tilde{O}(n_\cell\Phi(n))$ time.
\begin{restatable}{lemma}{GlobalTime}\label{lemma:global-time}
    For any cell $\cell$ of $\mcH$, the total execution time of the extended Hungarian search procedure on $\cell$ takes $\tilde{O}(n^{4/5}n_\cell\Phi(n))$ time.
\end{restatable}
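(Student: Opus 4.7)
The plan is to bound the total number of times the extended Hungarian search procedure selects the cell $\cell$ by $O(n^{4/5})$, and then multiply by the per-execution cost $\tilde{O}(n_\cell \Phi(n))$ established in Section~\ref{sec:hungarian_nk}. I would split the selections of $\cell$ into the two categories already introduced: \emph{high-net-cost} selections, where $\ymax > \ell_\cell n^{-1/5}$, and \emph{low-net-cost} selections, where $\ymax \le \ell_\cell n^{-1/5}$.

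For the high-net-cost case, I would invoke Lemma~\ref{lemma:high-max} directly: once $\ymax$ exceeds $\ell_\cell n^{-1/5}$, the number of free points inside $\cell$ is $O(n^{4/5})$. Since each extended Hungarian search on $\cell$ strictly reduces the number of free points in $\cell$ by one (Lemma~\ref{lemma:hung_properties} produces an admissible augmenting path, which is then augmented), and invariant (I3) guarantees that $\ymax$ is non-decreasing, the number of high-net-cost selections of $\cell$ is at most $O(n^{4/5})$.

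For the low-net-cost case, I would use Lemma~\ref{lemma:global-iterations} to bound the number of such selections by $\sum_{\cell' \in \mcC_\cell} |\mcB_{\cell'}|$. For each $\cell' \in \mcC_\cell$, every point $b \in \mcB_{\cell'}$ satisfies $y(b) = \distance{b}{\divider{\cell'}} \le \ymax \le \ell_\cell n^{-1/5}$ by invariant (I2) and Condition~\eqref{eq:dualfeasibility-b-admissible}. Lemma~\ref{lemma:ratio} gives $\ell_\cell \le \tfrac{9}{4}\ell_{\cell'}$, so $b$ lies within distance $\lambda \ell_{\cell'}$ of $\divider{\cell'}$, and Lemma~\ref{lemma:margin} yields $|\mcB_{\cell'}| = O(n_{\cell'} n^{-1/5})$. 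To sum these bounds, I would observe, as in the paragraph preceding the lemma, that by Lemma~\ref{lemma:ratio} the side lengths of all cells in $\mcC_\cell$ lie within a constant factor of $\ell_\cell$, so by the geometry of $\mcH$ each point of $A \cup B$ lies in only $O(1)$ cells of $\mcC_\cell$. This gives $\sum_{\cell' \in \mcC_\cell} n_{\cell'} = O(n)$, hence $\sum_{\cell' \in \mcC_\cell} |\mcB_{\cell'}| = O(n^{4/5})$.

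Combining both cases, $\cell$ is selected $O(n^{4/5})$ times in total. Each execution runs Dijkstra's algorithm on $\mcG_\cell$, which by the data structure of Section~\ref{sec:hungarian_nk} takes $\tilde{O}(n_\cell \Phi(n))$ time, so the total time spent on $\cell$ is $\tilde{O}(n^{4/5} n_\cell \Phi(n))$. The only subtle point is the covering argument that bounds $\sum_{\cell' \in \mcC_\cell} n_{\cell'}$ by $O(n)$; everything else is an immediate plug-in of the preceding lemmas.
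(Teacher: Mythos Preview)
Your proposal is correct and follows essentially the same approach as the paper: split selections of $\cell$ into high- and low-net-cost cases, use Lemma~\ref{lemma:high-max} together with the monotonicity of $\ymax$ for the former, and for the latter invoke Lemma~\ref{lemma:global-iterations} and then bound $\sum_{\cell'\in\mcC_\cell}|\mcB_{\cell'}|$ via Lemmas~\ref{lemma:ratio} and~\ref{lemma:margin} plus the $O(1)$-covering argument. The only place the paper is slightly more explicit is in justifying that each point lies in $O(1)$ cells of $\mcC_\cell$ (it argues via the grandparent side-length ratio that ancestors in $\mcC_\cell$ span only constantly many levels), which is exactly the point you flagged as subtle.
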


\subsection{Extension to higher dimensions}\label{sec:appendix-high-d-ksp}
Given a set of requests $\requests$ in the $d$ dimensions, for any $d\ge 2$, let $\mcG_\requests$ on point sets $A$ and $B$ denote the graph constructed for the $k$-SP problem on $\requests$ under any $\ell_p$ norm for some $p\ge 1$. Let $\mcH_d$ denote the hierarchical partitioning constructed for the point set $A\cup B$ with a parameter $\lambda = 9n^{-\frac{1}{2d+1}}$. 
The hierarchical partitioning has a height $O(d\log(n\Delta))$. Using $\mcH_d$, we run our algorithm as described in Section~\ref{sec:k-seq}. 

We summarize the efficiency analysis of our algorithm for $d$-dimensional point sets next and show that our algorithm runs in $\tilde{O}(n^{2-\frac{1}{2d+1}}\Phi(n)\log \Delta)$ time. In particular, we show that for any cell $\cell$, the number of iterations of the merge step on $\cell$ is $O(n^{1-\frac{1}{2d+1}})$, where each iteration takes $\tilde{O}(n_\cell\Phi(n))$ time. We also show that our algorithm runs the extended Hungarian search procedure in $O(n^{1-\frac{1}{2d+1}})$ time on $\cell$, each in $\tilde{O}(n_\cell\Phi(n))$ time. Adding these bounds for all cells of $\mcH_d$, the total running time of our algorithm is $\tilde{O}(n^{2-\frac{1}{2d+1}}\Phi(n)\log\Delta)$. We summarize the details below.

Given a feasible extended matching $M^\mcC, y(\cdot)$, for any cell $\cell\in \mcC$, let $B^\mcC_\cell$ denote the set of all points of $B_\cell$ that are matched to the divider $\divider\cell$ in the matching before the execution of the merge step at $\cell$. By Lemma~\ref{lemma:local-inactive}, the number of iterations of the merge step on $\cell$ is $O(|B^\mcC_\cell|)$.
To bound the number of points in $B^\mcC_\cell$, we first show that there exists a partial matching $M'$ of high cardinality and low cost. 
\begin{lemma}\label{lemma:geometric-matching-d}
    For any cell $\cell$ of $\mcH_d$, there exists a matching $M'$ over $\mcG_\sigma$ inside $\cell$ that matches all except $O(n_\cell^{1-\frac{1}{2d+1}})$ points of $B_\cell$ and has a cost $O(\ell_\cell n_\cell^{1-\frac{2}{2d+1}})$.
\end{lemma}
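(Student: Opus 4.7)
The plan is to carry out a direct $d$-dimensional generalization of the grid construction used in the proof of Lemma~\ref{lemma:geometric-matching}. The key idea is to overlay a uniform grid on the cell $\cell$ and, inside each grid cell, build a matching that simply links the requests in their arrival order; the analysis then balances the number of grid cells, each of which contributes at most one unmatched point of $B_\cell$, against the per-edge cost, which is controlled by the diameter of a grid cell.

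Concretely, I would place a uniform grid $\mbG$ with cell-side-length $s := \ell_\cell n_\cell^{-\frac{2}{2d+1}}$ inside the bounding box of $\cell$. Since $\cell$ is contained in a $d$-dimensional hypercube of side $\ell_\cell$, the number of non-empty grid cells is at most
\[
\left(\tfrac{\ell_\cell}{s}\right)^d \;=\; n_\cell^{\tfrac{2d}{2d+1}} \;=\; O\!\left(n_\cell^{1-\tfrac{1}{2d+1}}\right).
\]
For each non-empty grid cell $\xi$, let $\requests_\xi = \langle r'_1, \ldots, r'_m\rangle$ denote the sub-sequence of requests from $\requests$ that lie inside $\xi$, listed in arrival order, and set $M_\xi := \{(a_{r'_i}, b_{r'_{i+1}}) : 1 \le i \le m-1\}$. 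Then $M_\xi$ is a matching on $\mcG_\requests$ restricted to $\xi$ that matches every point of $B_\xi$ except the entry gate of the earliest request in $\xi$; moreover, since both endpoints of every edge in $M_\xi$ lie inside $\xi$, each edge has $\ell_p$-cost at most $d^{1/p}\cdot s = O(s)$.

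Define $M' := \bigcup_{\xi\in\mbG} M_\xi$. The number of points of $B_\cell$ left unmatched by $M'$ is at most one per non-empty grid cell, hence $O(n_\cell^{1-\frac{1}{2d+1}})$. For the cost, summing the per-cell bound yields
\[
w(M') \;=\; \sum_{\xi\in\mbG} w(M_\xi) \;=\; O\!\left(\ell_\cell n_\cell^{-\tfrac{2}{2d+1}} \sum_{\xi\in\mbG} |B_\xi|\right) \;=\; O\!\left(\ell_\cell n_\cell^{1-\tfrac{2}{2d+1}}\right),
\]
which matches the claimed bound.

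I do not anticipate a genuine obstacle here: the only step that is not entirely mechanical is the choice of $s$, which is forced by equating the ``number of grid cells'' term $(\ell_\cell/s)^d$ with the total-cost scaling $n_\cell\cdot s / \ell_\cell$, giving the exponent $\tfrac{2}{2d+1}$. In particular, no aspect-ratio bound on $\cell$ (the $d$-dimensional analog of Lemma~\ref{lemma:margin}) is needed for this statement, since we only use that $\cell$ is contained in a bounding cube of side $\ell_\cell$; the factor $d^{1/p}$ from converting the $\ell_\infty$-diameter of a grid cell into its $\ell_p$-diameter is absorbed into the $O(\cdot)$ since $d$ and $p$ are constants.
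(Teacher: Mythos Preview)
Your proposal is correct and follows essentially the same approach as the paper's own proof: place a grid of side-length $\ell_\cell n_\cell^{-2/(2d+1)}$ inside $\cell$ and, within each grid cell, chain the requests in arrival order (the $1$-partitioning), yielding at most one unmatched $B$-point per grid cell and per-edge cost $O(\ell_\cell n_\cell^{-2/(2d+1)})$. Your exposition is in fact more detailed than the paper's, which simply asserts that the $d$-dimensional analogue of Lemma~\ref{lemma:geometric-matching} with this grid parameter works.
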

\begin{proof}
    Similar to our construction for Lemma~\ref{lemma:geometric-matching}, to construct the matching $M'$ for Lemma~\ref{lemma:geometric-matching-d}, we place a grid of cell-side-length $\ell_\cell n_\cell^{-\frac{2}{2d+1}}$ and compute the matching corresponding to the $1$-partitioning of the requests inside each cell of this grid. It is easy to confirm that the matching $M'$ achieves the bounds claimed in Lemma~\ref{lemma:geometric-matching-d}.
\end{proof}

For the feasible extended matching $M^\mcC=(M, B^\mcC), y(\cdot)$, let $M_\cell$ denote the subset of matching edges of $M$ that lie inside $\cell$. 
Let $\mcP_{\mathrm{aug}}$ (resp. $\mcP_{\mathrm{alt}}$) denote the set of augmenting paths (resp. alternating paths) with an endpoint in $|B^\mcC_\cell|$ in the symmetric difference $M_\cell\oplus M'$.
As discussed above, $|\mcP_{\mathrm{alt}}|\le |F(M')|=O(n^{1-\frac{1}{2d+1}})$, where $F(M')$ denotes the set of free points of $B_\cell$ with respect to $M'$. 
We partition the free endpoints of $\mcP_{\mathrm{aug}}$ into the set $B^{\mathrm{close}}_{\mathrm{aug}}$ (resp. $B^{\mathrm{far}}_{\mathrm{aug}}$) that are at a distance closer than (resp. further than) $\lambda'_\cell=\ell_\cell n^{-\frac{1}{2d+1}}$ to the divider $\divider\cell$ of $\cell$. From Lemma~\ref{lemma:margin},  $|B^{\mathrm{close}}_{\mathrm{aug}}|=O(n^{1-\frac{1}{2d+1}})$. Finally, by Equation~\eqref{eq:net-cost-sum-2}, $\sum_{b\in B^{\mathrm{far}}_{\mathrm{aug}}}y(b) \le w(M')$. Since the dual weight of each free point in $B^{\mathrm{far}}_{\mathrm{aug}}$ is at least $\lambda'_\cell$, we get a bound of $O(n^{1-\frac{1}{2d+1}})$ on the number of such points, and a total execution time of $\tilde{O}(n^{1-\frac{1}{2d+1}}n_\cell\Phi(n))$ for the merge step on a cell $\cell$ of $\mcH$.

Next, we bound the number of executions of the extended Hungarian search procedure for each cell $\cell$.
A selection of $\cell$ by the extended Hungarian search procedure is low-net-cost if $\ymax\le \ell_\cell n^{-\frac{1}{2d+1}}$ and high-net-cost otherwise. For the high-net-cost selections, one can use the matching $M'$ from Lemma~\ref{lemma:geometric-matching-d} and a similar argument as in Lemma~\ref{lemma:high-max} to show that when $\ymax> \ell_\cell n^{-\frac{1}{2d+1}}$, the total number of free points inside $\cell$ cannot exceed $O(n^{1-\frac{1}{2d+1}})$ and conclude an upper bound of $O(n^{1-\frac{1}{2d+1}})$ on the number of high-net-cost selections of $\cell$. 
To bound the number of low-net-cost selections of $\cell$, define $\mcC_\cell$ as the set of all cells $\cell'\in\mcH$ that are processed by the merge step while $\cell$ is in $\mcC$ and $\ymax\le \ell_\cell n^{-\frac{1}{2d+1}}$. For any cell $\cell'\in\mcC_\cell$, let $\mcB_{\cell'}$ denote the set of boundary-matched points of $B_{\cell'}$ matched to the divider of $\cell'$ before the execution of the merge step on $\cell'$. By Lemma~\ref{lemma:global-iterations}, the number of low-net-cost selections of $\cell$ is at most $\sum_{\cell'\in\mcC_\cell}|\mcB_{\cell'}|$. Since $\ymax\le \ell_\cell n^{-\frac{1}{2d+1}}$, for each cell $\cell'\in\mcC_\cell$, all points in $\mcB_{\cell'}$ are at a distance at most $\ell_\cell n^{-\frac{1}{2d+1}}$ to the divider of $\cell'$. Therefore, from Lemma~\ref{lemma:margin} and~\ref{lemma:ratio} and using an identical discussion as above, $\sum_{\cell'\in\mcC_\cell}\mcB_{\cell'} = O(dn^{1-\frac{1}{2d+1}})$. Therefore, the total execution time of the extended Hungarian search step on any cell $\cell$ of $\mcH$ is $\tilde{O}(dn^{1-\frac{1}{2d+1}}n_\cell\Phi(n))$. 

Combining the total execution times of the merge step and the extended Hungarian search step, the running time of our algorithm would be $\tilde{O}(\sum_{\cell\in\mcH}dn^{1-\frac{1}{2d+1}}n_\cell\Phi(n)) = \tilde{O}(dn^{2-\frac{1}{2d+1}}\Phi(n)\log\Delta)$, leading to the following theorem.

\begin{theorem}\label{thm:kSP-d}
Given any sequence $\requests$ (resp. $\requests'=\servers\requests$) of $n$ requests (resp. $n$ requests and $k$ initial server locations) in $d$ dimensions with a spread of $\Delta$, and a value $1 \le k \le n$, there exists a deterministic algorithm that computes the optimal solution for the instance of $k$-SP (resp. $k$-SPI) problem under the $\ell_p$ norm in $\tilde{O}(\min\{nk, n^{2-\frac{1}{2d+1}}\log \Delta\}\cdot\Phi(n))$ time.
\end{theorem}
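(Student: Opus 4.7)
The plan is to mirror the two-dimensional construction and analysis of Theorem~\ref{thm:kSP}, adjusting the grid scale and the partitioning threshold so that the bounds balance at the exponent $1-\tfrac{1}{2d+1}$. I would first build a $d$-dimensional analogue $\mcH_d$ of the hierarchical partition by using the parameter $\lambda = 9n^{-1/(2d+1)}$ in place of $9n^{-1/5}$, and splitting every non-leaf rectangle by an axis-parallel hyperplane placed in the middle third of its \emph{longest} side at the position minimizing $|\Lambda(\hat{x})|$. An averaging argument identical to the proof of Lemma~\ref{lemma:margin} shows that the chosen divider is within distance $\ell_\cell\lambda$ of at most $O(n_\cell\lambda)$ points and that the aspect ratio of each cell stays bounded by $3$; the tree has depth $O(d\log\Delta)$.

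Next I would re-prove the key geometric bound (Lemma~\ref{lemma:geometric-matching}) in $d$ dimensions: for every cell $\cell$, I overlay a uniform grid of side length $\ell_\cell n_\cell^{-2/(2d+1)}$, getting $O(n_\cell^{2d/(2d+1)})$ sub-cells; inside each sub-cell, serving all of its requests with a single server yields a matching whose cost per sub-cell is at most $O(|B_\xi|\cdot\ell_\cell n_\cell^{-2/(2d+1)})$ and whose number of unmatched points in $B_\cell$ is at most $1$. Summing gives a matching $M'$ inside $\cell$ of cost $O(\ell_\cell n_\cell^{1-2/(2d+1)})$ with only $O(n_\cell^{1-1/(2d+1)})$ unmatched $B$-points. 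All the primal-dual structural lemmas from Section~\ref{sec:geo-primal-dual} (Lemmas~\ref{lemma:augpath}, \ref{lemma:fresh_duals}, \ref{lemma:combination}(a)) are dimension-free and carry over verbatim. Lemma~\ref{lemma:combination}(b) is the only part where the geometry intervenes, and I re-derive it from the new $M'$: partitioning symmetric-difference paths into alternating and augmenting parts, the alternating ones are bounded by $|F(M')| = O(n^{1-1/(2d+1)})$; the augmenting ones are bounded in two pieces—\emph{close} endpoints (within $\ell_\cell n^{-1/(2d+1)}$ of the divider) via the new margin lemma, and \emph{far} endpoints via $\sum y(b)\le w(M')$.

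With these generalized ingredients, I run the same algorithm (initializing $\mcC$ at the leaves of $\mcH_d$, repeatedly performing the extended Hungarian search and merge steps). Correctness is inherited unchanged because the invariants (I1)--(I3) and their proofs use only the feasibility definition and the dimension-free Lemmas~\ref{lemma:hung_properties}, \ref{lemma:merge_properties}, and \ref{lemma:root-inactive}. For the efficiency analysis I would follow Section~\ref{subsec:efficiency} step by step: the number of boundary-matched points erased at any merge step is $O(n^{1-1/(2d+1)})$ (from the generalized Lemma~\ref{lemma:combination}(b)), so each merge costs $\tilde{O}(n^{1-1/(2d+1)}\cdot n_\cell\Phi(n))$. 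The extended Hungarian search on a cell $\cell$ is invoked $O(n^{1-1/(2d+1)})$ times, split as before into the high-$\ymax$ regime (where the analogue of Lemma~\ref{lemma:high-max} bounds the residual free points via $w(M')/\ymax$) and the low-$\ymax$ regime (where Lemma~\ref{lemma:ratio} plus the margin lemma bound $\sum_{\cell'\in\mcC_\cell}|\mcB_{\cell'}|$). Summing $\tilde{O}(n^{1-1/(2d+1)}n_\cell\Phi(n))$ over all cells of $\mcH_d$ and using $\sum_\cell n_\cell = O(n\cdot d\log\Delta)$ gives the claimed $\tilde{O}(n^{2-1/(2d+1)}\Phi(n)\log\Delta)$. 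The $O(nk\Phi(n))$ branch of the minimum is inherited from the standard implementation of the Hungarian algorithm running for $n-k$ iterations, each performing a single search using the data structure of Section~\ref{sec:hungarian_nk}; the $k$-SPI case is then handled exactly as in Section~\ref{sec:kspi-extension-sub} by setting $k=0$ on $\mcG_{\requests'}$.

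The main obstacle I expect is the generalization of Lemma~\ref{lemma:margin} to $d$ dimensions while simultaneously keeping aspect ratio bounded: choosing the divider along the \emph{longest} side (rather than always the $x$-axis) requires a careful induction to show the aspect ratio stays $\le 3$, and the $1/(2d+1)$ exponent only balances once the grid in the geometric lemma is tuned to $\ell_\cell n_\cell^{-2/(2d+1)}$ so that the count of free points $n_\cell^{2d/(2d+1)}$ matches the margin bound $n_\cell\lambda$. Everything else—residual graph construction, admissible path discovery, and the priority-queue-based merging schedule—is structurally identical to the two-dimensional case and requires no new ideas.
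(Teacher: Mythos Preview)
Your proposal is essentially the paper's own argument: the same choice of $\lambda=9n^{-1/(2d+1)}$, the same $d$-dimensional grid of side $\ell_\cell n_\cell^{-2/(2d+1)}$ for the low-cost high-cardinality matching, and the same close/far split for Lemma~\ref{lemma:combination}(b) and the search-count analysis. The structural lemmas, the merge/search bookkeeping, and the $k$-SPI extension are carried over exactly as you describe.

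There is, however, one genuine gap in your treatment of the $\tilde{O}(nk\Phi(n))$ branch. Running the \emph{standard} Hungarian algorithm for $n-k$ iterations, as you propose, costs $\tilde{O}(n(n-k)\Phi(n))$, which is not $\tilde{O}(nk\Phi(n))$ when $k$ is small. The paper obtains the $nk$ bound differently (Section~\ref{sec:nk-algod}): it initializes with the trivial dual-optimal $(n-1)$-matching corresponding to the $1$-SP solution $\langle r_1,\ldots,r_n\rangle$, and then performs $k-1$ \emph{reverse} Hungarian searches, each of which shrinks the matching by one edge along a minimum (negative) net-cost reverse augmenting path. Each reverse search is $\tilde{O}(n\Phi(n))$ via the data structure of Section~\ref{sec:hungarian_nk}, giving $\tilde{O}(nk\Phi(n))$ total. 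For $k$-SPI the analogous trick is to start from the $1$-SPI solution and add the remaining $k-1$ server locations one at a time. Without this reversed procedure your argument does not deliver the $\min\{nk,\cdot\}$ in the theorem.
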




\section{Missing Proofs of Section~\ref{sec:nk-algod}}
\label{sec:nk-algo-appendix}

\partialoptimaltime*
\begin{proof}
Let $A_F$ (resp. $B_F$) denote the set of free points of $A$ (resp. $B$). Let $y_{\max}:=\max_{b\in B}y(b)$. 
Using Conditions~\eqref{eq:dualfeasibility-matching} and~\eqref{eq:dualfeasibility-free_a-nk}, we rewrite the cost of the matching $M$ as
\begin{align}
    w(M) &= \sum_{(a,b)\in M} \distance{a}{b} = \sum_{(a,b)\in M} y(b) - y(a)\nonumber \\ &= \left(\sum_{b\in B} y(b) - \sum_{a\in A} y(a)\right) - \sum_{b\in B_F} y(b) + \sum_{a\in A_F} y(a)\nonumber \\ &= \left(\sum_{b\in B} y(b) - \sum_{a\in A} y(a)\right) - |B_F|\cdot y_{\max}.
    \label{eq:dualOptimal-proof-1-1}
\end{align}
Let $M^*$ denote any minimum-cost $t$-matching on $\mcG_\requests$. Let $A^*_F$ (resp. $B^*_F$) denote the set of points of $A$ (resp. $B$) that are free in $M^*$. Since both $M$ and $M^*$ are $t$-matchings, $|B_F|=|B^*_F|$. Using Condition~\eqref{eq:dualfeasibility-non-matching},
\begin{align}
    w(M^*) &= \sum_{(a,b)\in M^*} \distance{a}{b} \ge \sum_{(a,b)\in M^*} y(b) - y(a)\nonumber \\ &= \left(\sum_{b\in B} y(b) - \sum_{a\in A} y(a)\right) - \sum_{b\in B^*_F} y(b) + \sum_{a\in A^*_F} y(a)\nonumber \\ &\ge \left(\sum_{b\in B} y(b) - \sum_{a\in A} y(a)\right) -  |B^*_F|\cdot y_{\max},
    \label{eq:dualOptimal-proof-2-1}
\end{align}
where the last inequality holds since $y(b)\le y_{\max}$ for each point $b\in B$ and $y(a)\ge 0$ for each point $a\in A$. Combining Equations~\eqref{eq:dualOptimal-proof-1-1} and~\eqref{eq:dualOptimal-proof-2-1},
\[w(M) = \sum_{b\in B} y(b) - \sum_{a\in A} y(a) - |B_F|\cdot y_{\max} \le w(M^*). \]
Since $M^*$ is a minimum-cost $t$-matching, $w(M)=w(M^*)$, and $M$ is also a minimum-cost $t$-matching.
\end{proof}

\nkAlgo*
\begin{proof}
For any $1\le t\le k$, let $M_t, y_t(\cdot)$ denote the $(n-t)$-matching computed by our algorithm, i.e., the matching $M_t, y_t(\cdot)$ is computed from $M_{t-1}, y_{t-1}(\cdot)$ by executing the \reverse\ procedure and reducing $M_{t-1}$ along the path $P$ returned by the \reverse\ procedure.
By our initial dual assignments, $M_1, y_1(\cdot)$ is a dual-optimal $(n-1)$-matching. Below, assuming that $M_{t-1}, y_{t-1}(\cdot)$ is dual-optimal, we show that $M_{t}, y_{t}(\cdot)$ is also dual-optimal. To do so, first, we show that $M_{t-1}, y_t(\cdot)$ is dual-optimal and the path $P$ is admissible with respect to the updated dual weights $y_t(\cdot)$. We then conclude Lemma~\ref{lemma:nkAlgo} by showing that the matching $M_t$ obtained by reducing $M_{t-1}$ along $P$ remains dual-optimal along with $y_t(\cdot)$.

For any edge $(a,b)\in E$, let $s_{t-1}(a,b)$ denote the slack of $(a,b)$ with respect to $y_{t-1}(\cdot)$. For any matching edge $(a,b)\in M_{t-1}$, $\kappa_b = \kappa_a$ since the only edge directed to $a$ in the reversed residual graph is the zero-slack edge $(a,b)$. Thus, Condition~\eqref{eq:dualfeasibility-matching} holds since
\begin{equation*}\label{eq:hung-0}
y_t(b) - y_t(a) = (y_{t-1}(b)-\kappa+\kappa_b) - (y_{t-1}(a)-\kappa+\kappa_a) = y_{t-1}(b)-y_{t-1}(a) = \distance{a}{b}.
\end{equation*}
Similarly, for any non-matching edge $(b,a)\in E$, $\kappa_b\le \kappa_a + s_{t-1}(b,a)$, and Condition~\eqref{eq:dualfeasibility-non-matching} holds since
\begin{equation*}
    y_t(b) - y_t(a) = (y_{t-1}(b)-\kappa+\kappa_b) - (y_{t-1}(a)-\kappa+\kappa_a) \le y_{t-1}(b)-y_{t-1}(a) + s_{t-1}(a,b) = \distance{a}{b}, 
\end{equation*}
and the equality holds for the edges on Dijkstra's shortest path tree; consequently, the path $P$ is admissible.
Finally, we show that the free points of $B$ have the highest dual weight among all points. Let $y_{\max}:=\max_{b'\in B}y_{t-1}(b')$. From the fact that $M_{t-1}, y_{t-1}(\cdot)$ is dual-optimal, for any free point $b\in B$, $y_{t-1}(b)=y_{\max}$; therefore, $\kappa_b=0$ since there is a zero-cost edge from source to $b$. Thus, 
\begin{equation}\label{eq:hung-2}
y_t(b) = y_{t-1}(b) - \kappa + \kappa_b = y_{\max} - \kappa.
\end{equation}
As a result, all free points of $B$ have the same dual weight of $y_{\max} - \kappa$ in $y_t(\cdot)$.
For any matched point $b\in B$, if $y_{t-1}(b)\ge y_{\max} - \kappa$, then the edge from the source to $b$ has a cost $y_{\max} - y_{t-1}(b)<\kappa$, and therefore, $\kappa_b\le y_{\max} - y_{t-1}(b)\le \kappa$; hence, 
\[y_t(b) = y_{t-1}(b)-\kappa+\kappa_b\le y_{t-1}(b)-\kappa+y_{\max}-y_{t-1}(b) = y_{\max}-\kappa.\]

Finally, note that there are no paths in the reversed residual graph from the source to the free points $a\in A$ with respect to $M_{t-1}$, and therefore, $y_t(a)=y_{t-1}(a)=0$. Additionally, for the path $P$ returned by the procedure, if $P$ ends at a point $a\in A$, then $\kappa = \kappa_a+y_{t-1}(a)$ and $y_t(a)=y_{t-1}(a)-\kappa+\kappa_a = 0$. Hence, $M_t, y_t(\cdot)$ is a dual-optimal matching. 
\end{proof}

\section{Bipartite Matching on Randomly Colored Points}

\subsection{Missing Proofs and Details of Section~\ref{sec:randomly-colored}}\label{appendix-randomly-colored}

\myparagraph{Constructing the Partial Matching.} In this part, for any cell $\cell$ of $\mcH$, we construct a matching $M'$ of $A_\cell\cup B_\cell$ that, in expectation, matches all except $\tilde{O}(n_\cell^{3/4})$ points of $B_\cell$ and has a cost $\tilde{O}(\ell_\cell n_\cell^{-1/4})$, proving Lemma~\ref{lemma:GRS-matching}.
We begin by introducing a set of notations. Let $\cell$ be any cell of $\mcH$, and let $G$ be a grid dividing $\cell$ into smaller squares.
For any square $\xi\in G$, let $A_\cell^\xi$ (resp. $B_\cell^\xi$) denote the subset of points of $A_\cell$ (resp. $B_\cell$) that lie inside $\xi$. Define the \emph{excess} of $\xi$ as $\excess(\xi):=\big||B_\cell^\xi| - |A_\cell^\xi|\big|$. Define $\excess(G):=\sum_{\xi\in\mcG}\excess(\xi)$ as the excess of $G$. We next show an important property of randomly colored point sets, which is critical in constructing the matching.

\begin{lemma}\label{lemma:convergence_excess}
    For any square $\cell$ of $\mcH$ and a grid $G$ inside $\cell$ with cell side length $O(\ell_\cell n_\cell^{-\alpha})$, $\mbE[\excess(G)]= \tilde{O}(n_\cell^{\alpha + \frac{1}{2}})$. 
\end{lemma}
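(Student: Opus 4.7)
\textbf{Proof plan for Lemma~\ref{lemma:convergence_excess}.} The plan is to exploit the fact that, because $A$ is a uniformly random size-$n$ subset of a fixed set $U$ of $2n$ points, the per-cell count $|A_\cell^\xi|$ of any region $\xi$ is governed by a hypergeometric distribution, and to then aggregate across the $O(n_\cell^{2\alpha})$ cells of $G$ using Cauchy--Schwarz.

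First I would isolate a single square $\xi \in G$ and set $n_\xi := |U \cap \xi|$, which is a deterministic quantity since $U$ is fixed. Conditioning on $U$, the random variable $X_\xi := |A_\cell^\xi|$ is obtained by drawing $n$ elements without replacement from $U$ (of size $2n$) and counting how many fall in $\xi$, hence $X_\xi$ is hypergeometric with mean $n_\xi/2$ and variance bounded by $n_\xi/4$. Since $\excess(\xi) = | |B_\cell^\xi| - |A_\cell^\xi| | = |n_\xi - 2 X_\xi| = 2\,|X_\xi - n_\xi/2|$, Jensen's inequality gives
\[
\mathbb{E}[\excess(\xi)] \;=\; 2\,\mathbb{E}\!\left[\bigl|X_\xi - \tfrac{n_\xi}{2}\bigr|\right] \;\le\; 2\sqrt{\mathrm{Var}(X_\xi)} \;=\; O(\sqrt{n_\xi}).
\]

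Next I would sum over all squares of $G$. By Lemma~\ref{lemma:margin}, the aspect ratio of $\cell$ is at most $3$, so the number of cells of $G$ is $|G| = O((\ell_\cell / (\ell_\cell n_\cell^{-\alpha}))^2) = O(n_\cell^{2\alpha})$. Linearity of expectation together with the Cauchy--Schwarz inequality, applied to the vectors $(\sqrt{n_\xi})_\xi$ and $(1)_\xi$, then yields
\[
\mathbb{E}[\excess(G)] \;=\; \sum_{\xi \in G} \mathbb{E}[\excess(\xi)] \;=\; O\!\left(\sum_{\xi \in G} \sqrt{n_\xi}\right) \;\le\; O\!\left(\sqrt{|G|} \cdot \sqrt{\sum_{\xi \in G} n_\xi}\right) \;=\; O\!\left(\sqrt{n_\cell^{2\alpha} \cdot n_\cell}\right),
\]
which simplifies to $O(n_\cell^{\alpha + 1/2})$, as claimed.

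There is no real obstacle in this argument; the only thing to be careful about is the correct second-moment bound for the hypergeometric distribution (as opposed to a binomial), which is why I explicitly use the variance formula rather than a Chernoff bound. The $\tilde{O}(\cdot)$ in the statement leaves room for logarithmic slack, but in fact the proof gives the stronger $O(\cdot)$ bound.
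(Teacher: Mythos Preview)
Your argument is correct and takes a different route than the paper. The paper bounds each $\mathbb{E}[\excess(\xi)]$ via Hoeffding's inequality, obtaining a uniform per-cell bound of $O(\sqrt{n_\cell\log n})$, and then sums over the $O(n_\cell^{2\alpha})$ grid cells; this incurs the $\log$ factor and, as written, the aggregation step needs care (a naive sum of the uniform bound would give $|G|\sqrt{n_\cell\log n}$ rather than $\sqrt{|G|\,n_\cell\log n}$). You instead work directly with second moments: you use that $|A_\cell^\xi|$ is hypergeometric with variance $O(n_\xi)$, bound $\mathbb{E}[\excess(\xi)]=O(\sqrt{n_\xi})$ via Jensen, and then apply Cauchy--Schwarz to $\sum_\xi \sqrt{n_\xi}\le \sqrt{|G|}\sqrt{\sum_\xi n_\xi}=\sqrt{|G|\,n_\cell}$. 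This buys you two things: the logarithmic factor disappears (you get $O(n_\cell^{\alpha+1/2})$ outright), and the aggregation is completely transparent because the per-cell bound scales with $n_\xi$ rather than $n_\cell$.
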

\begin{proof}
    If $\alpha\ge \frac{1}{2}$, the lemma statement holds trivially since $n_\cell^{\alpha+\frac{1}{2}} \ge n_\cell$. Therefore, we assume $\alpha\le \frac{1}{2}$.
    Using the Hoeffding’s inequality~\cite{hoeffding1994probability}, for any hypercube $\xi$ of $G$,
    \begin{equation*}
        \prob{\big||B_\cell^\xi| - |A_\cell^\xi|\big|\ge c_1\sqrt{n_\cell log n}}\le n^{-c_2}\label{eq:excess}
    \end{equation*}
    for some constants $c_1, c_2 > 1$. Since $\excess(\xi)\le n_\cell$, 
    \begin{equation*}
        \mbE[\excess(\xi)]= O(\sqrt{n_\cell\log n}).\label{eq:excess_cell}
    \end{equation*}
    Summing over all squares of $G$,
    \begin{align*}
        \mbE[\excess(G)] = \sum_{\xi\in G} \mbE[\excess(\xi)] &=O\left(\sum_{\xi\in G} \sqrt{n_\cell\log n}\right) = O\left(\sqrt{n\log n\times |G|}\right) = \tilde{O}(n^{\alpha+\frac{1}{2}}).\label{eq:excess-dense}
    \end{align*}
\end{proof}

We next use Lemma~\ref{lemma:convergence_excess} to show that there exists a low-cost high-cardinality matching inside each sub-problem.

\GRSPartialMatching*
\begin{proof}
    Let $\langle G_1, \ldots, G_t\rangle$ denote a sequence of grids, where $t=\lceil \log\log n\rceil$ and each grid $G_i$ has a side-length $O(\ell_\cell n_\cell^{-\alpha_i})$ for \[\alpha_i:=\frac{1}{2} - \frac{2^t}{2^{t+2} - 3}\left(1-\frac{1}{2^i}\right).\]
    Using these grids, we construct a matching $M'$ as follows.
    Let $A_\cell^0:=A_\cell$ and $B_\cell^0:=B_\cell$. Starting from $i=1$, we compute a matching $M_i$ from $B_\cell^{i-1}$ to $A_\cell^{i-1}$ that for each square of $G_i$, matches as many points as possible inside the square arbitrarily. Define $A_\cell^i$ (resp. $B_\cell^i$) as the set of free points of $A_\cell^{i-1}$ (resp. $B_\cell^{i-1}$) and process the next grid $G_{i+1}$. We continue this procedure until the last grid $G_t$ is processed. Define $M':=\sum_{i=1}^t M_i$. This completes the construction of $M'$. In the following, we first show that the total number of free points with respect to $M'$ is $O(n_\cell^{3/4})$; we then show that $w(M')=O(\ell_\cell^2n_\cell^{1/4})$ and conclude the lemma statement.

    The matching $M$ matches as many points as possible inside each square of the grid $G_t$. Therefore, the total number of unmatched points is equal to the excess of $G_t$, which by Lemma~\ref{lemma:convergence_excess} is 
    \begin{align}
        \mbE[\excess(G_t)] &= \tilde{O}(n_\cell^{\alpha_t+\frac{1}{2}}) = O(n_\cell^{\frac{3}{4}+\frac{1}{4(2^{t+2}-3)}}) = O(n_\cell^{\frac{3}{4}+\frac{1}{16\log n-12}}) = O(n_\cell^{\frac{3}{4}}).
    \end{align}
    We next analyze the expected cost of $M'$. By the linearity of expectation,
    \begin{equation}
        \mbE[w(M')] = \sum_{i=1}^t \mbE[w(M_i)].
        \label{eq:expected_cost_1}
    \end{equation}
    For $i=1$, since all matching edges in $M_1$ have a squared Euclidean cost at most $O((\ell_\cell n_\cell^{-\alpha_1})^2)$, the cost of $M_1$ would be 
    \begin{equation}
        w(M_1) = O(n_\cell\times (\ell_\cell n_\cell^{-\alpha_1})^2) = O(\ell_\cell^2 n_\cell^{\frac{2^t}{2^{t+2}-3}}) = O(\ell_\cell^2 n_\cell^{\frac{1}{4}}).\label{eq:expected_cost_2}
    \end{equation}
    Finally, for each $1<i\le t$, the matching $M_i$ consists of matching edges with squared Euclidean cost of at most $O((\ell_\cell n_\cell^{-\alpha_i})^2)$. By Lemma~\ref{lemma:convergence_excess}, the expected number of matching edges in $M_i$ is at most $\tilde{O}(n_\cell^{\alpha_{i-1}+\frac{1}{2}})$; therefore,
    \begin{equation}
        \mbE[w(M_i)] = \tilde{O}(n_\cell^{\alpha_{i-1}+\frac{1}{2}}\times (\ell_\cell n_\cell^{-\alpha_i})^2) = \tilde{O}(\ell_\cell^2 n_\cell^{\frac{2^t}{2^{t+2}-3}}) = \tilde{O}(\ell_\cell^2 n_\cell^{\frac{1}{4}}).\label{eq:expected_cost_3}
    \end{equation}
    Combining Equations~\eqref{eq:expected_cost_1},~\eqref{eq:expected_cost_2}, and~\eqref{eq:expected_cost_3}, $\mbE[w(M)] = \tilde{O}(\ell_\cell^2 n_\cell^{\frac{1}{4}})$.
\end{proof}

\paragraph{Bounding the number of iterations.} We next show that the number of iterations of the merge step and the extended Hungarian search procedure on any cell $\cell$ of $\mcH$ is bounded by $O(n^{3/4})$.

\matchingIterations*
\begin{proof}
    From Lemma~\ref{lemma:local-inactive}, the total number of iterations of the merge step on $\cell$ is $O(|B_\cell^\mcC|)$.
    Next, note that the parameter $k$ is set to $0$; therefore, there are no free points inside $\cell$ when the merge step is executed on $\cell$. When erasing the divider of $\cell$, our algorithm creates $|B_\cell^\mcC|$ free points. Each iteration of the merge step either (i) finds an augmenting path, which reduces the number of free points by one, or (ii) finds an alternating path to a matched point $b\in B_{\cell}$ with $y(b)=\ymax$, which does not change the number of free points inside $\cell$. Therefore, after the merge step on $\cell$, the number of free points remaining inside $\cell$ is at most $|B_\cell^\mcC|$, and our algorithm executes one extended Hungarian search procedure for each remaining free point.
\end{proof}

\GRSIters*
\begin{proof}
Let $M'$ denote the matching constructed in Lemma~\ref{lemma:GRS-matching}, and let $M$ denote the matching of the extended matching $M^\mcC$ maintained by our algorithm. Let $M_\cell$ denote the matching edges of $M$ that lie inside $\cell$. Note that by lemma~\ref{lemma:no-cross}, no matching edges can cross the boundaries of $\mcC$.
Let $\mcP_{\mathrm{aug}}$ (resp. $\mcP_{\mathrm{alt}}$) denote the set of augmenting paths (resp. alternating paths) with an endpoint in $B_\cell^\mcC$ in the symmetric difference $M_\cell\oplus M'$. Clearly, $|B_\cell^\mcC|=|\mcP_{\mathrm{alt}}| + |\mcP_{\mathrm{aug}}|$. For the alternating paths, $|\mcP_{\mathrm{alt}}|=O(n^{3/4})$ since each alternating path has one free endpoint in $M'$. 
Next, we show that $|\mcP_{\mathrm{aug}}|=O(n^{3/4})$.

As shown in Equation~\eqref{eq:net-cost-sum-2},
\begin{align}
    \sum_{b\in B_{\mathrm{aug}}} y(b) = \sum_{P\in \mcP_{\mathrm{aug}}} y(b_P)\le \sum_{P\in \mcP_{\mathrm{aug}}} \phi(P) \le w(M').\label{eq:net-cost-sum-1-m}
\end{align}

Define $\alpha:=\ell_\cell n^{-1/4}$. Each free point $b\in B_{\mathrm{aug}}$ is called a \emph{close} (resp. \emph{far}) point if the Euclidean distance of $b$ to the divider of $\cell$ is at most (resp. more than) $\alpha$. Let $B_{\mathrm{aug}}^{\mathrm{close}}$ (resp. $B_{\mathrm{aug}}^{\mathrm{far}}$) denote the set of all close (resp. far) points of $B_{\mathrm{aug}}$. By Lemma~\ref{lemma:margin},
\begin{equation}\label{eq:close-points-m}
    |B_{\mathrm{aug}}^{\mathrm{close}}| = O(n_\cell n^{-1/4}) = O(n^{3/4}).
\end{equation}
For each far point $b\in B_{\mathrm{aug}}^{\mathrm{far}}$, $y(b)=\distance{b}{\divider\cell}\ge \alpha^2$ since $b$ is matched to the divider $\divider\cell$ (note that the distance function $\distance{\cdot}{\cdot}$ is the squared Euclidean distance). Therefore, 
\begin{equation}\label{eq:net-cost-3-m}
    \sum_{b\in B_{\mathrm{aug}}} y(b) \ge \sum_{b\in B_{\mathrm{aug}}^{\mathrm{far}}} y(b) \ge \alpha^2\times |B_{\mathrm{aug}}^{\mathrm{far}}|.
\end{equation}
Combining Equations~\eqref{eq:net-cost-sum-1-m} and~\eqref{eq:net-cost-3-m},
\begin{equation}\label{eq:far-points-m}
    |B_{\mathrm{aug}}^{\mathrm{far}}| \le \frac{\sum_{b\in B_{\mathrm{aug}}} y(b)}{\alpha^2} \le \frac{w(M')}{\alpha^2} = O(n^{3/4}).
\end{equation}
Combining with Equation~\eqref{eq:close-points-m},
\begin{align*}
    |B_\cell^\mcC| &\le |B_{\mathrm{aug}}| + |B_{\mathrm{alt}}| \le |B_{\mathrm{aug}}^{\mathrm{close}}| + |B_{\mathrm{aug}}^{\mathrm{far}}| + |B_{\mathrm{alt}}| = \tilde{O}(n^{3/4}).
\end{align*}
\end{proof}

\subsection{Analysis for General \texorpdfstring{$d$}{} and \texorpdfstring{$q$}{}}\label{sec:appendix-GRS}
For a point set $U$ of $2n$ points in $d$ dimensions and any $q\ge 1$, let $A$ denote a random subset of $n$ points of $U$ and let $B=U\setminus A$. 
To compute a minimum-cost perfect matching between $A$ and $B$ under $\ell_2^q$ distances, we construct our hierarchical partitioning $\mcH$ with a parameter $\lambda=9n^{-\frac{1}{d+2}}$ and execute our algorithm from Section~\ref{sec:k-seq} by setting $k=0$. In the following, we extend our analysis from Section~\ref{sec:randomly-colored} to any dimension $d\ge 2$ and any $q\ge 1$ and show the following result.

\begin{restatable}{lemma}{GRSgeneral}\label{lemma:GRS-matching-general}
    Suppose $U$ is a set of $2n$ points inside the unit $d$-dimensional hypercube, $d\ge 2$, and $A$ is a subset chosen uniformly at random from all subsets of size $n$. Let $B=U\setminus A$. Then, for any parameters $q\ge 1$, the expected running time of our algorithm for computing the minimum-cost matching on the complete bipartite graph on $A$ and $B$ under $\ell_2^q$ costs is
    \begin{equation*}
    \begin{cases}
        \tilde{O}(n^{2-\frac{q}{(q+1)d}}\Phi(n)\log \Delta),\qquad &q\le\frac{d}{2},\\
        \tilde{O}(n^{2-\frac{1}{d+2}}\Phi(n)\log \Delta),\qquad &q>\frac{d}{2}.
    \end{cases}
\end{equation*}
\end{restatable}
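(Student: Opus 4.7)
The plan is to parallel the proof of Lemma~\ref{lemma:GRS-analysis-p2}, generalizing its three ingredients to arbitrary $d \ge 2$ and $q \ge 1$: (i) the excess-concentration lemma for randomly colored points in a grid; (ii) the existence of a low-cost high-cardinality partial matching $M'$ inside each cell $\cell$ of $\mcH$; and (iii) the three-way decomposition bounding $|B_\cell^\mcC|$. I would run the algorithm of Section~\ref{sec:k-seq} on the hierarchical partitioning $\mcH$ of Appendix~\ref{sec:appendix-high-d-ksp} with $k=0$ and margin parameter $\lambda = 9 n^{-\rho}$, where $\rho = 1/(d+2)$ when $q > d/2$ and $\rho = q/((q+1)d)$ when $q \le d/2$.

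First, I would extend Lemma~\ref{lemma:convergence_excess} to $d$ dimensions. For a grid $G$ of $n_\cell^{d\alpha}$ equal-sized boxes inside $\cell$, Hoeffding applied to the random bichromatic split gives per-box expected excess $\tilde{O}(\sqrt{n_\xi})$; summing and applying Cauchy--Schwarz yields $\mbE[\excess(G)] = \tilde{O}(\sqrt{|G|\,n_\cell}) = \tilde{O}(n_\cell^{(d\alpha+1)/2})$. Using this, I would establish the existence of a matching $M'$ on $A_\cell \cup B_\cell$ leaving $\tilde{O}(n_\cell^\beta)$ points of $B_\cell$ unmatched and with expected cost $\tilde{O}(\ell_\cell^q n_\cell^\gamma)$. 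For $q > d/2$, I would use the telescoping sequence of $t = O(\log\log n)$ grids from the proof of Lemma~\ref{lemma:GRS-matching}: the recurrence $\alpha_i = (d\alpha_{i-1} + 1 - 2\gamma)/(2q)$ has contraction ratio $d/(2q) < 1$ and converges geometrically to the fixed point $\alpha^* = 1/(d+2)$, giving $\beta = (d+1)/(d+2)$ and $\gamma = (d+1-q)/(d+2)$. For $q \le d/2$ the recurrence no longer contracts, so I would instead invoke the existence result of~\citet{raghvendra2024new}, which for any $q \ge 1$ supplies a matching of expected cost $\tilde{O}(\ell_\cell^q n_\cell^{1-q/d})$ leaving at most $\tilde{O}(n_\cell^{1-q/((q+1)d)})$ points of $B_\cell$ free, after verifying that the uniform random bichromatic partition of $U$ restricts to a uniform random bichromatic partition of $U \cap \cell$ conditional on $|A_\cell|$, which itself concentrates at $n_\cell/2$ by hypergeometric concentration.

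Next, I would bound $|B_\cell^\mcC|$ by the three-way decomposition of Lemma~\ref{lemma:GRS-cell-process}. In the symmetric difference $M_\cell \oplus M'$, the alternating paths contribute $\tilde{O}(n_\cell^\beta)$ (one per free point of $M'$); augmenting paths whose free endpoint lies within $\ell_\cell n^{-\rho}$ of the divider $\divider{\cell}$ contribute $O(n_\cell n^{-\rho})$ by the $d$-dimensional margin bound of Lemma~\ref{lemma:margin}; and augmenting paths with far endpoints contribute $\tilde{O}(n_\cell^\gamma n^{q\rho})$, since each such endpoint has dual weight at least $(\ell_\cell n^{-\rho})^q = \ell_\cell^q n^{-q\rho}$ while the total dual weight is at most $w(M') = \tilde{O}(\ell_\cell^q n_\cell^\gamma)$ by the analog of Equation~\eqref{eq:net-cost-sum-1-m}. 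A direct substitution shows that all three terms equal $\tilde{O}(n^{1-\rho})$: for $q > d/2$, the far-endpoint term is $n^{(d+1-q)/(d+2) + q/(d+2)} = n^{1-1/(d+2)}$, and for $q \le d/2$, it is $n^{1 - q/d + q^2/((q+1)d)} = n^{1 - q/((q+1)d)}$.

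Finally, by the analog of Lemma~\ref{lemma:matching-iter} the total number of merge and extended-Hungarian iterations at $\cell$ is $\tilde{O}(|B_\cell^\mcC|) = \tilde{O}(n^{1-\rho})$, each iteration costing $\tilde{O}(n_\cell \Phi(n))$ via the data structure of Section~\ref{sec:hungarian_nk}; summing over all cells of $\mcH$ (each point lies in $O(\log(n\Delta))$ cells) yields the claimed $\tilde{O}(n^{2-\rho}\Phi(n)\log\Delta)$ bound. The principal technical obstacle is the $q \le d/2$ regime: the self-contained telescoping construction of $M'$ fails because the contraction ratio $d/(2q)$ is $\ge 1$, and instead one must appeal to a sharper AKT/Talagrand-flavored matching-existence estimate to pin $\gamma$ at $1 - q/d$ while keeping $\beta$ at $1 - q/((q+1)d)$; everything else is a fairly direct dimensional bookkeeping exercise on top of the $d=2, q=2$ proof.
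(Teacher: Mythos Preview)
Your treatment of the $q>d/2$ case is essentially identical to the paper's: both build $\mcH$ with $\lambda=9n^{-1/(d+2)}$, use the telescoping-grid construction with ratio $r=2q/d>1$ to obtain a matching $M'$ in each cell with $\tilde{O}(n_\cell^{(d+1)/(d+2)})$ free points and expected cost $\tilde{O}(\ell_\cell^q n_\cell^{(d+1-q)/(d+2)})$, and then run the three-way decomposition of $|B_\cell^\mcC|$. Your recurrence/fixed-point description is exactly equivalent to the paper's explicit formula for $\alpha_i$.

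For $q\le d/2$ the two approaches diverge. The paper does not redo the analysis at all: it simply notes that the claimed exponent $q/((q+1)d)$ is precisely the bound already established for the GRS algorithm in \cite[Theorem~B.2]{gattani2023robust} and asserts that that analysis applies verbatim to the present algorithm. You instead rerun the three-way decomposition with $\rho=q/((q+1)d)$, importing a matching-existence bound (cost $\tilde{O}(\ell_\cell^q n_\cell^{1-q/d})$, free points $\tilde{O}(n_\cell^{1-q/((q+1)d)})$) from \cite{raghvendra2024new}. Your balancing arithmetic is correct --- all three terms come out to $\tilde{O}(n^{1-\rho})$ --- so if those precise per-cell bounds are indeed supplied by the cited reference, your argument goes through and is more transparent about where the exponent originates; the paper's deferral to prior work is shorter but opaque. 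The one thing to double-check is that the external result really gives \emph{both} the cost exponent $1-q/d$ and the free-point exponent $1-q/((q+1)d)$ simultaneously for a single matching inside an arbitrary cell of $\mcH$, since your bound on the far-endpoint term needs the former and your bound on $|\mcP_{\mathrm{alt}}|$ needs the latter.
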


Note that when $q\le \frac{d}{2}$, the bound claimed in Lemma~\ref{lemma:GRS-matching-general} is identical to the runtime bound proven in~\cite[Theorem B.2]{gattani2023robust} and their analysis directly applies to our algorithm; hence, in the following, we analyze the running time of our algorithm assuming $q>\frac{d}{2}$.

We begin by showing that there exists a low-cost high-cardinality partial matching.

\begin{lemma}\label{lemma:GRS-matching-d}
    For any cell $\cell$ of $\mcH$, there exists a matching $M'$ that, in expectation, matches all except $O(n_\cell^{1-\frac{1}{d+2}})$ points of $B_\cell$ and has a cost $O((2\ell_\cell)^q n_\cell^{1-\frac{q+1}{d+2}})$.
\end{lemma}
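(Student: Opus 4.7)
The plan is to mimic the two-dimensional argument in Lemma~\ref{lemma:GRS-matching} but with a $d$-dimensional telescoping hierarchy of grids whose exponents are tuned to the new target bound. First I would lift the excess bound in Lemma~\ref{lemma:convergence_excess} to $d$ dimensions: for a grid $G$ inside $\cell$ with cell side length $O(\ell_\cell n_\cell^{-\alpha})$, the cell contains $|G| = O(n_\cell^{d\alpha})$ boxes, and for each box $\xi$ with $m_\xi$ points of $A_\cell\cup B_\cell$ Hoeffding's inequality (applied to the uniformly random $A/B$ coloring) gives $\mathbb{E}[\excess(\xi)] = O(\sqrt{m_\xi \log n})$. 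Cauchy--Schwarz then yields $\mathbb{E}[\excess(G)] = O(\sqrt{|G| \, n_\cell \log n}) = \tilde O(n_\cell^{d\alpha/2 + 1/2})$.

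Next I would define a sequence of grids $G_1,\dots,G_t$ with $t = \Theta(\log\log n)$ and cell side lengths $\ell_\cell n_\cell^{-\alpha_i}$ with $\alpha_i = \tfrac{1}{d+2} + \beta_i$, where $\beta_1 = \tfrac{1}{q(d+2)}$ and $\beta_i = \bigl(\tfrac{d}{2q}\bigr)^{i-1}\beta_1$. The assumption $q > d/2$ (the regime of Lemma~\ref{lemma:GRS-matching-general} not already covered by prior work) forces $d/(2q) < 1$, so the $\beta_i$ decrease geometrically, and choosing $t = \Theta(\log\log n / \log(2q/d))$ makes $\beta_t = O(1/\log n)$, hence $n_\cell^{\beta_t} = O(1)$. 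I would then build $M'$ exactly as in the two-dimensional proof: set $A_\cell^0 = A_\cell$, $B_\cell^0 = B_\cell$, and for $i=1,\dots,t$ take $M_i$ to be any matching inside each box of $G_i$ that matches as many points as possible, feeding the residuals into the next level. By construction, the points of $B_\cell$ left unmatched by $M' = \bigcup_i M_i$ number exactly $\excess(G_t) = \tilde O(n_\cell^{d\alpha_t/2 + 1/2}) = \tilde O(n_\cell^{(d+1)/(d+2)}) = \tilde O(n_\cell^{1-1/(d+2)})$, which matches the first half of the lemma.

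For the cost, every matching edge created at level $i$ lies inside a box of $G_i$ of Euclidean diameter $\sqrt{d}\,\ell_\cell n_\cell^{-\alpha_i}$, so its $\ell_2^q$ cost is at most $O((2\ell_\cell)^q n_\cell^{-q\alpha_i})$ (absorbing $d^{q/2}$ into the constant). Level $1$ contributes at most $n_\cell/2$ edges, so $w(M_1) = O((2\ell_\cell)^q n_\cell^{1 - q\alpha_1})$; since $\alpha_1 = \tfrac{q+1}{q(d+2)}$, this is exactly $O((2\ell_\cell)^q n_\cell^{1-(q+1)/(d+2)})$. For $i>1$, the matching $M_i$ has at most $\excess(G_{i-1})$ edges, yielding
\[
\mathbb{E}[w(M_i)] = \tilde O\!\left((2\ell_\cell)^q\, n_\cell^{d\alpha_{i-1}/2 + 1/2 - q\alpha_i}\right).
\]
Plugging in $\alpha_j = \tfrac{1}{d+2} + \beta_j$ gives the exponent $1-\tfrac{q+1}{d+2} + \bigl(\tfrac{d}{2}\beta_{i-1} - q\beta_i\bigr)$, and the recurrence $\beta_i = (d/(2q))\beta_{i-1}$ cancels the bracket exactly. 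Summing the $t = O(\log\log n)$ levels and absorbing logarithms into $\tilde O(\cdot)$ gives $\mathbb{E}[w(M')] = O((2\ell_\cell)^q n_\cell^{1-(q+1)/(d+2)})$, as required.

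The only genuinely delicate step is the calibration of the $\alpha_i$: one must simultaneously (i) keep $\alpha_1$ large enough that level~$1$ already meets the cost budget, (ii) enforce $q\beta_i \ge (d/2)\beta_{i-1}$ so the per-level cost budget is preserved, and (iii) drive $\beta_t$ down to $O(1/\log n)$ in a sub-polynomial number of levels so the final excess absorbs into $\tilde O(n_\cell^{1-1/(d+2)})$. The condition $q>d/2$ is exactly what makes the geometric contraction factor $d/(2q)$ strictly less than one, which is what lets all three constraints hold simultaneously; outside that regime one falls into the case already handled by~\cite{gattani2023robust}, where the cited result applies directly to our algorithm.
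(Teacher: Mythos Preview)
Your proposal is correct and takes essentially the same approach as the paper: both use a telescoping sequence of $d$-dimensional grids with geometrically contracting exponents governed by the ratio $d/(2q)$, match greedily inside each grid level, and balance the per-level cost against the residual excess. Your parametrization $\alpha_i = \tfrac{1}{d+2} + \beta_i$ with $\beta_i = (d/(2q))^{i-1}\beta_1$ is a cleaner but equivalent reformulation of the paper's closed-form $\alpha_i = \tfrac{1}{d} - \tfrac{r^{t+1}}{\beta}(1-r^{-i})$ with $r=2q/d$; both converge to $\tfrac{1}{d+2}$ and satisfy the same recurrence, and your use of Cauchy--Schwarz for the excess bound is in fact more careful than the paper's sketch.
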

\begin{proof}
Define $r:= \frac{2q}{d}>1$ and $t:=\lceil \log_r \log_2 n\rceil$. Let $\beta:=q(d+2)r^t - d(q+1)$. Define a sequence of grids $\langle G_1, \ldots, G_t\rangle$, where each grid $G_i$ has a side-length $O(\ell_\cell n^{-\alpha_i})$ for \[\alpha_i:=\frac{1}{d} - \frac{r^{t+1}}{\beta}\left(1-\frac{1}{r^i}\right).\]

We construct a matching $M$ as follows.
Let $A_\cell^0:=A_\cell$ and $B_\cell^0:=B_\cell$. Starting from $i=1$, let $M_i$ be a matching from $B_\cell^{i-1}$ to $A_\cell^{i-1}$ that matches as many points as possible inside each hypercube of $G_i$. Let $A_\cell^i$ and $B_\cell^i$ denote the set of free points of $A_\cell^{i-1}$ and $B_\cell^{i-1}$, respectively. Set $i\leftarrow i+1$ and continue this procedure until the last grid $G_t$ is processed. Define $M:=\sum_{i=1}^t M_i$. 
In the following, we show that the total number of free points with respect to $M$ is $\tilde{O}(n_\cell^{1-\frac{1}{d+2}})$ and the cost of $M$ is $\mbE[w(M)]=O((2\ell_\cell)^qn_\cell^{1-\frac{q+1}{d+2}})$.

The matching $M$ matches as many points as possible inside each square of the grid $G_t$. Therefore, the total number of unmatched points is equal to the excess of $G_t$, which by Lemma~\ref{lemma:convergence_excess-d} below is 
\begin{align}
    \mbE[\excess(G_t)] = \tilde{O}(n_\cell^{\frac{d}{2}\alpha_t+\frac{1}{2}}) = \tilde{O}(n_\cell^{1 - \frac{1}{d+2}}), \label{eq:expected_free_general}
\end{align}
where the second equality holds since
\begin{align*}
    \frac{d}{2}\alpha_t+\frac{1}{2} &= \frac{d}{2}\left(\frac{1}{d} - \frac{r^{t+1}}{\beta}(1-\frac{1}{r^t}) \right) + \frac{1}{2}= \left(\frac{1}{2} - \frac{qr^{t}-q}{q(d+2)r^t - d(q+1)} \right) + \frac{1}{2}\\ &= 1 - \frac{1}{d+2} + \frac{q-\frac{d(q+1)}{d+2}}{q(d+2)r^t - d(q+1)} \le 1 - \frac{1}{d+2} + \frac{1}{2\log n}. 
\end{align*}
We next analyze the expected cost of $M$. For $i=1$, since all matching edges in $M_1$ has a cost of at most $O((\ell_\cell n_\cell^{-\alpha_1})^q)$, the cost of $M_1$ would be 
\begin{equation}
    w(M_1) = O(n_\cell\times (\ell_\cell n_\cell^{-\alpha_1})^q) = O((2\ell_\cell)^q n_\cell^{1-\frac{q+1}{d+2}}),\label{eq:expected_cost_2_general}
\end{equation}
where the second equality holds since
\begin{align*}
    1-q\alpha_1 &= 1-\frac{q}{d}+\frac{q(r-1)r^t}{\beta}= 1-\frac{q}{d}+\frac{\frac{2q^2-qd}{d}r^t}{q(d+2)r^t - d(q+1)} \\ &= 1-\frac{q}{d}+\frac{2q-d}{d(d+2)} + \frac{\frac{(2q-d)(q+1)}{d+2}}{q(d+2)r^t - d(q+1)}\le 1-\frac{q+1}{d+2} + \frac{q}{d\log n}.
\end{align*}
Finally, for each $1<i\le t$, the matching $M_i$ consists of matching edges with a cost of $O((\ell_\cell n_\cell^{-\alpha_i})^q)$. By Lemma~\ref{lemma:convergence_excess-d}, the expected number of matching edges in $M_i$ is at most $\tilde{O}(n_\cell^{\frac{d}{2}\alpha_{i-1}+\frac{1}{2}})$; therefore,
\begin{equation}
    \mbE[w(M_i)] = \tilde{O}(n_\cell^{\frac{d}{2}\alpha_{i-1}+\frac{1}{2}}\times (\ell_\cell n_\cell^{-\alpha_i})^q) = \tilde{O}((2\ell_\cell)^q n_\cell^{1-\frac{q+1}{d+2}}),\label{eq:expected_cost_3_general}
\end{equation}
where the second equality is resulted from as follows.
\begin{align*}
    \frac{d}{2}\alpha_{i-1} + \frac{1}{2}-q\alpha_i &= \left(\frac{1}{2}-\frac{qr^t(1-\frac{1}{r^{i-1}})}{\beta}\right) + \frac{1}{2} - \left(\frac{q}{d}-\frac{qr^{t+1}(1-\frac{1}{r^{i}})}{\beta}\right)\\ &=  1-\frac{q}{d}+\frac{q(r-1)r^t}{\beta}\le 1-\frac{q+1}{d+2} + \frac{q}{d\log n}.
\end{align*}
Combining Equations~\eqref{eq:expected_cost_2_general} and~\eqref{eq:expected_cost_3_general}, the expected cost of $M$ is \[\mbE[w(M)]=\sum_{i=1}^t\mbE[w(M_i)]=O((2\ell_\cell)^qn_\cell^{1-\frac{q+1}{d+2}}).\]
\end{proof}

\begin{lemma}\label{lemma:convergence_excess-d}
    For any random partitioning of a set $U$ of $2n$ points inside the $d$-dimensional unit hypercube into two sets $A$ and $B$ of $n$ points each, a hypercube $\cell$ of $\mcH$, and a grid $G$ inside $\cell$ with cell side length $O(\ell_\cell n_\cell^{-\alpha})$, $\mbE[\excess(G)]= \tilde{O}(n_\cell^{\frac{d}{2}\alpha + \frac{1}{2}})$. 
\end{lemma}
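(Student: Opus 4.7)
The plan is to mirror the proof of Lemma~\ref{lemma:convergence_excess}, but with the per-cell deviation estimate stated in terms of the local count $m_\xi$ rather than $n_\cell$, so that a Cauchy--Schwarz step yields the correct exponent for arbitrary $d$. First I would dispense with the trivial regime $\alpha \ge 1/d$, for which $n_\cell^{d\alpha/2+1/2} \ge n_\cell$ and the bound follows from the deterministic estimate $\excess(G) \le n_\cell$. From here on assume $\alpha < 1/d$, so that $|G| = O(n_\cell^{d\alpha})$ is polynomially bounded in $n$. For each hypercube $\xi \in G$, let $m_\xi := |(A_\cell \cup B_\cell)\cap \xi|$, and note that $\sum_{\xi \in G} m_\xi = n_\cell$.

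Next I would establish the per-cell concentration. Because $A$ is a uniformly random $n$-subset of the $2n$-point set $U$, the count $|A \cap \xi|$ is hypergeometrically distributed with mean $m_\xi/2$. By Hoeffding's inequality applied to sampling without replacement (Hoeffding, 1963), for any constant $c>0$,
\[
\Pr\!\left[\excess(\xi) \ge c\sqrt{m_\xi \log n}\right] \;\le\; 2n^{-\Omega(c^2)}.
\]
Choosing $c$ sufficiently large and union-bounding over the polynomially many cells of $G$, with probability at least $1-n^{-10}$ we have $\excess(\xi) = O(\sqrt{m_\xi \log n})$ for every $\xi \in G$ simultaneously. Since the deterministic bound $\excess(\xi) \le m_\xi \le n_\cell$ always holds, the contribution of the low-probability failure event to the expectation is at most $n^{-10}\cdot n_\cell |G|$, which is absorbed into the $\tilde O$ notation. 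Hence $\mbE[\excess(\xi)] = O(\sqrt{m_\xi \log n})$.

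Summing and applying Cauchy--Schwarz then yields the target bound:
\[
\mbE[\excess(G)] \;=\; \sum_{\xi \in G} \mbE[\excess(\xi)] \;=\; O\!\left(\sqrt{\log n}\sum_{\xi \in G}\sqrt{m_\xi}\right) \;\le\; O\!\left(\sqrt{\log n}\,\sqrt{|G|}\,\sqrt{\textstyle\sum_\xi m_\xi}\right) \;=\; \tilde O\!\left(\sqrt{n_\cell^{\,d\alpha}\cdot n_\cell}\right) \;=\; \tilde O\!\left(n_\cell^{\frac{d}{2}\alpha + \frac{1}{2}}\right).
\]

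There is no serious obstacle; the only subtlety is that the two colours are not independent Bernoullis but rather obtained by sampling without replacement, so one must invoke the variant of Hoeffding's inequality that covers the hypergeometric distribution (alternatively, Serfling's inequality gives an even sharper constant). The rest is a direct Cauchy--Schwarz calculation, exactly the same manoeuvre that produces the $d=2$ bound of $\tilde O(n_\cell^{\alpha+1/2})$ in Lemma~\ref{lemma:convergence_excess}.
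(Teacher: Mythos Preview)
Your proof is correct and follows the same skeleton as the paper's: dispose of the trivial regime $\alpha\ge 1/d$, apply Hoeffding per grid cell, then sum. The one substantive difference is that you bound each cell's excess by $O(\sqrt{m_\xi\log n})$ and then invoke Cauchy--Schwarz, whereas the paper bounds each $\mbE[\excess(\xi)]$ by $O(\sqrt{n_\cell\log n})$ and then asserts $\sum_{\xi\in G}\sqrt{n_\cell\log n}=O(\sqrt{n_\cell|G|\log n})$. As written, that last step in the paper is not an equality (it would give $|G|\sqrt{n_\cell\log n}$, a factor of $\sqrt{|G|}$ too large); your use of the local count $m_\xi$ together with Cauchy--Schwarz is precisely the clean way to recover the claimed exponent $\tfrac{d}{2}\alpha+\tfrac12$. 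The observation that the hypergeometric tail can be controlled with the smaller parameter $m_\xi$ (via the symmetry of $\mathrm{Hyp}(2n,m_\xi,n)$ and Hoeffding's without-replacement inequality) is exactly what makes this work.
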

\begin{proof}
    For $\alpha\ge \frac{1}{d}$, the lemma statement holds trivially since $n_\cell^{\frac{d}{2}\alpha+\frac{1}{2}} \ge n_\cell$. Therefore, we assume $\alpha\le \frac{1}{d}$.
    Using the Hoeffding’s inequality~\cite{hoeffding1994probability}, for any hypercube $\xi$ of $G$,
    \begin{equation*}
        \prob{|B_\cell^\xi| - |A_\cell^\xi|\ge c_1\sqrt{n_\cell log n}}\le n^{-c_2}\label{eq:excess-d}
    \end{equation*}
    for some constants $c_1, c_2 > 1$. Since $\excess(\xi)\le n_\cell$, 
    \begin{equation*}
        \mbE[\excess(\xi)]= O(\sqrt{n_\cell\log n}).
    \end{equation*}
    Summing over all squares of $G$,
    \begin{align*}
        \mcB[\excess(G)] = \sum_{\xi\in G} \mbE[\excess(\xi)] &=O\left(\sum_{\xi\in G} \sqrt{n_\cell\log n}\right) = O\left(\sqrt{n\log n\times |G|}\right) = \tilde{O}(n^{\frac{d}{2}\alpha+\frac{1}{2}}).\label{eq:excess-dense-d}
    \end{align*}
\end{proof}

We use Lemma~\ref{lemma:GRS-matching-d} to show that the number of points in $B_\cell^\mcC$ is $\tilde{O}(n^{1-\frac{1}{d+2}})$. Let $M'$ denote the matching constructed in Lemma~\ref{lemma:GRS-matching-d}, and let $M$ denote the matching of the extended matching $M^\mcC$ maintained by our algorithm. Let $M_\cell$ denote the matching edges of $M$ that lie inside $\cell$. 
Let $\mcP_{\mathrm{aug}}$ (resp. $\mcP_{\mathrm{alt}}$) denote the set of augmenting paths (resp. alternating paths) with an endpoint in $|B_\cell^\mcC|$ in the symmetric difference $M_\cell\oplus M'$.
As discussed above, $|\mcP_{\mathrm{alt}}|\le |F(M')|=O(n^{1-\frac{1}{d+2}})$. 
We partition the free endpoints of $\mcP_{\mathrm{aug}}$ into the set $B^{\mathrm{close}}_{\mathrm{aug}}$ that are at a Euclidean distance closer than $\lambda'_\cell=2\ell_\cell n^{-\frac{1}{d+2}}$ to the divider $\divider\cell$ of $\cell$ and the set $B^{\mathrm{far}}_{\mathrm{aug}}$ that are at a Euclidean distance further than $\lambda'_\cell$ from $\divider\cell$. From Lemma~\ref{lemma:margin},  $|B^{\mathrm{close}}_{\mathrm{aug}}|=\tilde{O}(n^{1-\frac{1}{d+2}})$. Finally, by Equation~\eqref{eq:net-cost-sum-1-m}, $\sum_{b\in B^{\mathrm{far}}_{\mathrm{aug}}}y(b) \le w(M')$. Since the dual weight of each free point in $B^{\mathrm{far}}_{\mathrm{aug}}$ is at least $(\lambda'_\cell)^q$, we get a bound of $\tilde{O}(n^{1-\frac{1}{d+2}})$ on the number of such points. 

Therefore, the total number of iterations of the merge and extended Hungarian search processes on each cell $\cell$ is $\tilde{O}(n^{1-\frac{1}{d+2}})$. Each iteration of the merge step and the search procedure takes $\tilde{O}(n_\cell\Phi(n))$ time; as a result, the total execution time of our algorithm on $\cell$ would be $\tilde{O}(n^{1-\frac{1}{d+2}}n_\cell\Phi(n))$. Since each point participates in $O(\log(n\Delta))$ cells in $\mcH$, the total execution time of our algorithm across all cells would be $\tilde{O}(n^{2-\frac{1}{d+2}}\Phi(n)\log\Delta)$, proving Lemma~\ref{lemma:GRS-matching-general}.

\end{document}